\newcommand{\emailaddress}[1]{\href{mailto:#1}{\nolinkurl{#1}}}
\newtheorem{definition}{Definition}
\newtheorem{observation}{Observation}
\newtheorem{lemma}{Lemma}
\newtheorem{theorem}{Theorem}
\newtheorem{corollary}{Corollary}
\crefname{claim}{Claim}{Claims}
\crefname{observation}{Observation}{Observations}
\crefname{ghost}{condition}{conditions}
\crefname{figure}{Figure}{Figures}
\tikzstyle{thinpoint} = [
\tikzstyle{point} = [
\tikzstyle{fatpoint} = [
\tikzstyle{cross} = [
\tikzstyle{free} = [
\tikzstyle{alive} = [
\tikzstyle{dead} = [
\tikzstyle{collabel} = [
\tikzstyle{colone} = [
\tikzstyle{coltwo} = [
\tikzstyle{colthree} = [
\tikzstyle{colfour} = [
\tikzstyle{colfive} = [
\tikzstyle{colsix} = [
\tikzstyle{colseven} = [
\tikzstyle{coleight} = [
\tikzstyle{veryfatpoint} = [
\tikzstyle{drawL} = [
\tikzstyle{drawR} = [
\tikzstyle{drlft} = [
\tikzstyle{drrgt} = [
\tikzstyle{marked} = [
\tikzstyle{unmarked} = [
\newcommand\inlinepoint[2]{
  \begin{tikzpicture}
    \node[#1,scale=1.2] (point) at (0,0) {};
    \node[#2,scale=1.2] (marking) at (0,0) {};
  \end{tikzpicture}
}
\newcommand{\ptfree}{\inlinepoint{free}{unmarked}}
\newcommand{\ptalive}{\inlinepoint{alive}{unmarked}}
\newcommand{\ptaliveM}{\inlinepoint{alive}{marked}}
\newcommand{\ptdead}{\inlinepoint{dead}{unmarked}}
\newcommand{\ptone}{\inlinepoint{colone}{unmarked}}
\newcommand{\pttwo}{\inlinepoint{coltwo}{unmarked}}
\newcommand{\ptthree}{\inlinepoint{colthree}{unmarked}}
\newcommand{\ptfour}{\inlinepoint{colfour}{unmarked}}
\newcommand{\ptfive}{\inlinepoint{colfive}{unmarked}}
\newcommand{\ptsix}{\inlinepoint{colsix}{unmarked}}
\newcommand{\ptseven}{\inlinepoint{colseven}{unmarked}}
\newcommand{\pteight}{\inlinepoint{coleight}{unmarked}}
\newcommand{\ptdrlft}{\inlinepoint{drlft}{unmarked}}
\newcommand{\ptdrrgt}{\inlinepoint{drrgt}{unmarked}}
\tikzstyle{arrowstyle} = [scale=1]
\tikzstyle{oriented} = [
\tikzstyle{dloriented} = [
\tikzstyle{droriented} = [
\newcommand{\pts}{P}                              
\newcommand{\pt}{p}                               
\newcommand{\ptsnum}{n}                           
\newcommand{\ptspreceq}{\preceq}                  
\newcommand{\pgs}{\mathcal{G}(\pts)}              
\newcommand{\trs}{\mathcal{T}(\pts)}              
\newcommand{\css}{\mathcal{CS}(\pts)}             
\newcommand{\pms}{\mathcal{M}(\pts)}              
\newcommand{\cps}{\mathcal{CP}(\pts)}             
\newcommand{\sts}{\mathcal{ST}(\pts)}             
\newcommand{\scs}{\mathcal{SC}(\pts)}             
\newcommand{\pmss}{\mathcal{EM}(\pts)}             
\newcommand{\pgsnum}{\operatorname{pg}(\pts)}     
\newcommand{\trsnum}{\operatorname{tr}(\pts)}     
\newcommand{\cssnum}{\operatorname{cs}(\pts)}     
\newcommand{\pmsnum}{\operatorname{pm}(\pts)}     
\newcommand{\cpsnum}{\operatorname{cp}(\pts)}     
\newcommand{\stsnum}{\operatorname{st}(\pts)}     
\newcommand{\scsnum}{\operatorname{sc}(\pts)}     
\newcommand{\segs}{{\mathcal{S}_\pts}}            
\newcommand{\cparts}{\mathcal{CP}_{\pts}}         
\newcommand{\cpartstp}{\mathcal{CF}_{\pts}}       
\newcommand{\trgs}{{\mathcal{T}_\pts}}            
\newcommand{\segsst}{\mathcal{S}_\pts^\tagst}     
\newcommand{\segssc}{\mathcal{S}_\pts^\tagsc}     
\newcommand{\frags}{\mathcal{G}^*_\pts}           
\newcommand{\unit}{u}                             
\newcommand{\units}{\mathcal{U}}                  
\newcommand{\cmb}{C}                              
\newcommand{\umaxin}[1]{\operatorname{rex}(#1)}   
\newcommand{\upts}[1]{\operatorname{pts}(#1)}
\newcommand{\ulow}[1]{\operatorname{low}(#1)}
\newcommand{\uupp}[1]{\operatorname{upp}(#1)}
\newcommand{\ulft}[1]{\operatorname{lft}(#1)}
\newcommand{\urgt}[1]{\operatorname{rgt}(#1)}
\newcommand{\uslow}[1]{\operatorname{\overline{low}}(#1)}
\newcommand{\udep}{\sqsubset}
\newcommand{\undep}{\not\sqsubset}
\newcommand{\utorgt}{\ptspreceq}
\newcommand{\cpts}[1]{\operatorname{pts}(#1)}
\newcommand{\clow}[1]{\operatorname{low}(#1)}
\newcommand{\cchn}[1]{\operatorname{chn}(#1)}     
\newcommand{\parteq}{\sim}                      
\newcommand{\partclass}[1]{[{#1}]}                
\newcommand{\quotient}[2]{({#1} \big/ {#2})}      
\newcommand{\target}{\mathfrak{T}}                
\newcommand\targetcf[1]{\target^{\tagcf}_{\units}}
\newcommand\targetpt[1]{\target^{\tagpt}_{\units}}
\newcommand\targetsd[1]{\target^{\tagsd}_{\units}}
\newcommand{\cgraph}{\Gamma}                      
\newcommand{\cgsource}{\bot}                      
\newcommand{\cgsink}{\top}                        
\newcommand{\agraphcf}[1]{\cgraph^\tagcf_{#1}}    
\newcommand{\agraphpt}[1]{\cgraph^\tagpt_{#1}}    
\newcommand{\agraphsd}[1]{\cgraph^\tagsd_{#1}}    
\newcommand{\agraphst}[1]{\cgraph^\tagst_{#1}}    
\newcommand{\agraphsc}[1]{\cgraph^\tagsc_{#1}}    
\newcommand{\tagcf}{{\mathsf{cf}}}                
\newcommand{\tagpt}{{\mathsf{pt}}}                
\newcommand{\taglpt}{{\mathsf{\check{\raisebox{0pt}[3.0pt]{$\mathsf{\scriptstyle pt}$}}}}}
\newcommand{\tagsd}{{\mathsf{sd}}}                
\newcommand{\taglsd}{{\mathsf{\check{\raisebox{0pt}[3.0pt]{$\mathsf{\scriptstyle sd}$}}}}}
\newcommand{\tagst}{{\mathsf{st}}}                
\newcommand{\taglst}{{\mathsf{\check{\raisebox{0pt}[3.0pt]{$\mathsf{\scriptstyle st}$}}}}}
\newcommand{\tagsc}{{\mathsf{sc}}}                
\newcommand{\taglsc}{{\mathsf{\check{\raisebox{0pt}[3.0pt]{$\mathsf{\scriptstyle sc}$}}}}}
\newcommand{\cmbscf}[1]{\cmbs^\tagcf(#1)}         
\newcommand{\cmbspt}[1]{\cmbs^\tagpt(#1)}         
\newcommand{\cmbslpt}[1]{\cmbs^\taglpt(#1)}       
\newcommand{\cmbssd}[1]{\cmbs^\tagsd(#1)}         
\newcommand{\cmbslsd}[1]{\cmbs^\taglsd(#1)}       
\newcommand{\cmbsst}{\cmbs^\tagst_\pts}           
\newcommand{\cmbslst}{\cmbs^\taglst_\pts}         
\newcommand{\cmbssc}{\cmbs^\tagsc_\pts}           
\newcommand{\cmbslsc}{\cmbs^\taglsc_\pts}         
\newcommand{\cmbs}{\mathfrak{C}}
\newcommand{\ctarrow}[3]{#1 \xrightarrow{\smash{#3}} #2}
\newcommand{\ctnarrow}[3]{#1 \not\xrightarrow{} #2}
\newcommand{\cgarrow}[3]{\partclass{#1} \xrightarrow{\smash{#3}} \partclass{#2}}
\newcommand{\cgpath}{R}
\newcommand{\cmbntr}[1]{\cmb_{#1}}
\title{Counting and Enumerating\\Crossing-free Geometric Graphs}
\author{Manuel Wettstein\footnote{Department of Computer Science, ETH Z\"urich, Switzerland. E-mail: \url{mw@inf.ethz.ch}}}
\date{\today}
\begin{document}

\maketitle

\begin{abstract}
  We describe a framework for counting and enumerating various types of crossing-free geometric graphs on a planar point set.
The framework generalizes ideas of Alvarez and Seidel, who used them to count triangulations in time $O(2^nn^2)$ where $n$ is the number of points.
The main idea is to reduce the problem of counting geometric graphs to counting source-sink paths in a directed acyclic graph.

The following new results will emerge.
The number of all crossing-free geometric graphs can be computed in time $O(c^nn^4)$ for some $c < 2.83929$.
The number of crossing-free convex partitions can be computed in time $O(2^nn^4)$.
The number of crossing-free perfect matchings can be computed in time $O(2^nn^4)$.
The number of convex subdivisions can be computed in time $O(2^nn^4)$.
The number of crossing-free spanning trees can be computed in time $O(c^nn^4)$ for some $c < 7.04313$.
The number of crossing-free spanning cycles can be computed in time $O(c^nn^4)$ for some $c < 5.61804$.

With the same bounds on the running time we can construct data structures which allow fast enumeration of the respective classes.
For example, after $O(2^nn^4)$ time of preprocessing we can enumerate the set of all crossing-free perfect matchings using polynomial time per enumerated object.
For crossing-free perfect matchings and convex partitions we further obtain enumeration algorithms where the time delay for each (in particular, the first) output is bounded by a polynomial in $n$.

All described algorithms are comparatively simple, both in terms of their analysis and implementation.
\end{abstract}

\section{Introduction}\label{sec:introduction}

Let $\pts$ be a set of $\ptsnum$ points in the plane.
We assume $\pts$ to be in \emph{general position}, which means that no three points in $\pts$ are collinear.
A \emph{geometric graph} on $\pts$ is a simple graph with vertex set $\pts$, combined with an embedding into the plane where edges are drawn as straight segments between the corresponding endpoints.
Two distinct edges are \emph{crossing} if their drawings intersect in their respective relative interiors, otherwise they are \emph{non-crossing}.
A geometric graph on $\pts$ is \emph{crossing-free} if its edges are pairwise non-crossing.

\begin{figure}
  \begin{center}
    \begin{tikzpicture}
  \newcommand\examplePoints[1]{
    \node[point] (p0) at (0.0,0.6) {};
    \node[point] (p1) at (0.1,0.3) {};
    \node[point] (p2) at (0.2,1.2) {};
    \node[point] (p3) at (0.3,0.8) {};
    \node[point] (p4) at (0.4,0.2) {};
    \node[point] (p5) at (0.5,0.6) {};
    \node[point] (p6) at (0.6,1.4) {};
    \node[point] (p7) at (0.7,0.9) {};
    \node[point] (p8) at (0.8,0.4) {};
    \node[point] (p9) at (0.9,1.1) {};
    \node (label) at (0.45,-0.4) {\footnotesize #1};
  }
  \begin{scope}[xshift=0*160]
    \examplePoints{$\pgs$}
    \draw (p0) -- (p1);
    \draw (p1) -- (p3);
    \draw (p0) -- (p3);
    \draw (p2) -- (p6);
    \draw (p4) -- (p8);
    \draw (p7) -- (p8);
    \draw (p8) -- (p9);
  \end{scope}
  \begin{scope}[xshift=3*50]
    \examplePoints{$\css$}
    \draw (p0) -- (p1);
    \draw (p0) -- (p2);
    \draw (p1) -- (p4);
    \draw (p2) -- (p6);
    \draw (p4) -- (p8);
    \draw (p6) -- (p9);
    \draw (p8) -- (p9);
    \draw (p3) -- (p7);
    \draw (p0) -- (p3);
    \draw (p2) -- (p3);
    \draw (p7) -- (p8);
    \draw (p7) -- (p9);
    \draw (p3) -- (p5);
    \draw (p4) -- (p5);
    \draw (p7) -- (p5);
  \end{scope}
  \begin{scope}[xshift=4*50]
    \examplePoints{$\trs$}
    \draw (p0) -- (p1);
    \draw (p0) -- (p2);
    \draw (p1) -- (p4);
    \draw (p2) -- (p6);
    \draw (p4) -- (p8);
    \draw (p6) -- (p9);
    \draw (p8) -- (p9);
    \draw (p3) -- (p5);
    \draw (p3) -- (p7);
    \draw (p5) -- (p7);
    \draw (p2) -- (p9);
    \draw (p0) -- (p3);
    \draw (p0) -- (p5);
    \draw (p1) -- (p5);
    \draw (p2) -- (p3);
    \draw (p2) -- (p7);
    \draw (p4) -- (p5);
    \draw (p5) -- (p8);
    \draw (p7) -- (p8);
    \draw (p7) -- (p9);
  \end{scope}
  \begin{scope}[xshift=1*50]
    \examplePoints{$\cps$}
    \draw (p0) -- (p2);
    \draw (p0) -- (p5);
    \draw (p2) -- (p5);
    \draw (p4) -- (p8);
    \draw (p6) -- (p7);
    \draw (p6) -- (p9);
    \draw (p7) -- (p9);
  \end{scope}
  \begin{scope}[xshift=2*50]
    \examplePoints{$\pms$}
    \draw (p0) -- (p5);
    \draw (p1) -- (p4);
    \draw (p2) -- (p3);
    \draw (p6) -- (p7);
    \draw (p8) -- (p9);
  \end{scope}
  \begin{scope}[xshift=5*50]
    \examplePoints{$\sts$}
    \draw (p0) -- (p1);
    \draw (p1) -- (p4);
    \draw (p4) -- (p5);
    \draw (p5) -- (p8);
    \draw (p2) -- (p3);
    \draw (p3) -- (p5);
    \draw (p3) -- (p6);
    \draw (p7) -- (p8);
    \draw (p7) -- (p9);
  \end{scope}
  \begin{scope}[xshift=6*50]
    \examplePoints{$\scs$}
    \draw (p0) -- (p1);
    \draw (p1) -- (p4);
    \draw (p4) -- (p5);
    \draw (p5) -- (p8);
    \draw (p8) -- (p9);
    \draw (p0) -- (p2);
    \draw (p2) -- (p3);
    \draw (p3) -- (p7);
    \draw (p7) -- (p6);
    \draw (p6) -- (p9);
  \end{scope}
\end{tikzpicture}
  \end{center}
  \caption{
    The seven defined classes of crossing-free geometric graphs.
  }
  \label{fig:graphexamples}
\end{figure}

We define the set $\pgs$ of all crossing-free geometric graphs on $\pts$, the set $\cps$ of \emph{crossing-free convex partitions}, the set $\pms$ of \emph{crossing-free perfect matchings}, the set $\css$ of \emph{convex subdivisions}, the set $\trs$ of \emph{triangulations}, the set $\sts$ of \emph{crossing-free spanning trees}, and the set $\scs$ of \emph{crossing-free spanning cycles}.
\Cref{fig:graphexamples} shows one representative of each defined class on a fixed point set.
Note that by a crossing-free convex partition we mean a partition of $\pts$ in such a way that the convex hulls of individual parts are pairwise disjoint.
Convex subdivisions are subdivisions of the convex hull of $\pts$ into convex faces with points from $\pts$ as vertices.
Note that every crossing-free convex partition and every convex subdivision is represented uniquely by a crossing-free geometric graph which contains all the edges on the boundaries of individual parts and faces.
Finally, a triangulation is an edge-maximal crossing-free geometric graph, which is a special case of a convex subdivision.
\begin{table*}[b]
  \begin{center}
    \newcommand{\citebox}[1]{\hbox to 10pt{\cite{#1}}}
\newcommand{\citeboxnote}[1]{\hbox to 10pt{\cite{#1}\footnotemark}}
\newcommand{\nocitebox}{\hbox to 10pt{}}
\newcommand{\ns}{\hspace{-5pt}}

\footnotesize
\begin{tabular}{lrrrrrrr}
  \toprule
  & \ns$\pgsnum$
  & \ns$\cpsnum$                   & \ns$\pmsnum$
  & \ns$\cssnum$                   & \ns$\trsnum$
  & \ns$\stsnum$                   & \ns$\scsnum$                \\
  \midrule
    \ns$\forall \pts \colon O^*(c^n)$
  & \ns$187.53$ \citebox{SS12}
  & \ns$12.24$  \citebox{SW06p}    & \ns$10.05$ \citebox{SW06p}
  & \ns$187.53$ \citebox{SS12}     & \ns$30.00$ \citebox{SS09}
  & \ns$141.07$ \citebox{HSSTW11}  & \ns$54.55$ \citebox{SSW11}  \\
    \ns$\exists \pts \colon \Omega^*(c^n)$
  & \ns$41.18$  \citebox{AHHHKV07}
  & \ns$5.23$   \citebox{SW06p}    & \ns$3.00$  \citebox{GNT00}
  & \ns$8.65$   \citebox{DSST10}   & \ns$8.65$  \citebox{DSST10}
  & \ns$12.52$  \citebox{HM15}   & \ns$4.64$  \citebox{GNT00}  \\
    \ns$\forall \pts \colon \Omega^*(c^n)$
  & \ns$11.65$  \citebox{FN99}
  & \ns$3.00$   \citeboxnote{GNT00}& \ns$2.00$  \citebox{GNT00} 
  & \ns$2.43$   \citebox{SSW10}    & \ns$2.43$  \citebox{SSW10}   
  & \ns$6.75$   \citebox{FN99}     & \ns$1.00$  \nocitebox       \\
  \bottomrule
\end{tabular}

  \end{center}
  \caption{
    Extremal bounds, where cells display the respective exponential bases.
  }
  \label{tab:extremalbounds}
\end{table*}
\footnotetext{Apply the lower bound for perfect matchings on all subsets of $\pts$ and then use the binomial theorem.}

Following a common notation, we also define the numbers $\pgsnum \coloneqq |\pgs|$, $\cpsnum \coloneqq |\cps|$, $\pmsnum \coloneqq |\pms|$, $\cssnum \coloneqq |\css|$, $\trsnum \coloneqq |\trs|$, $\stsnum \coloneqq |\sts|$, and $\scsnum \coloneqq |\scs|$.
Some of these numbers have received considerable attention.
In particular, exponential upper and lower bounds have been established and gradually improved over the past decades.
Instead of listing all successive improvements, which are too numerous to count, we summarize the current state of affairs in \cref{tab:extremalbounds}.
As a reading example, the top-left entry says that $\pgsnum = O^*(187.53^ n)$ holds for all point sets $\pts$, see \cite{SS12} (the $*$ indicates that any subexponential factors are ignored).
For an always up-to-date list, which includes bounds for many other types of geometric graphs, we refer the interested reader to Sheffer's webpage \cite{ShWeb}.

The defined classes of geometric graphs have also been studied from an algorithmic point of view.
The problem of enumeration has been solved for $\pgs$, $\trs$ and $\sts$, see \cite{AF96,B02,AAHV07,KT08}.
By solved we mean that these sets can be enumerated in such a way that the time delay for each enumerated object is bounded by a polynomial in $\ptsnum$.

In terms of counting, especially triangulations have been studied extensively \cite{A99,ABRS15,KLS15}.
Furthermore, for counting many other types of geometric graphs there already exists a general framework \cite{ABCR12}, which is based on the onion layer structure of a point set.
However, for a long time no counting algorithm was always provably faster than enumerating the set whose size was to be determined.
For triangulations this changed with a remarkable paper by Alvarez and Seidel \cite{SA13}, who showed how to compute the number $\trsnum$ in time $O(2^nn^ 2)$.
This is always exponentially faster than enumeration because $\trsnum = \Omega^*(2.43^\ptsnum)$ holds for all sets $\pts$, see \cref{tab:extremalbounds}.
An unrelated approach led to a similar result for the class of all crossing-free geometric graphs.
Razen and Welzl \cite{RW11} showed how to compute $\pgsnum$ in time $O^\ast(\trsnum)$.
Since they proved that $\pgsnum = \Omega^*(2.82^n) \cdot \trsnum$ holds for all sets $\pts$, their algorithm also achieves an exponential speed-up over any procedure that counts by enumerating the whole set $\pgs$.
Furthermore, after the first publication of the present paper \cite{W14}, Marx and Miltzow discovered a new class of algorithms which make use of the so called cactus layer structure of a triangulation \cite{MM16}.
These algorithms run in time $n^{O(\sqrt n)}$ and are able to compute the number of triangulations and related structures exactly.

In the remainder of this paper we develop and make use of an abstract framework that generalizes ideas originally used by Alvarez and Seidel \cite{SA13} for counting triangulations.
Loosely speaking, the technique boils down to the following steps.
Fix any set of crossing-free geometric graphs whose elements can be decomposed into reasonably small or simple pieces.
For instance, every triangulation can be decomposed into a set of interior-disjoint triangles and, similarly, each crossing-free perfect matching can be decomposed into a set of non-intersecting segments.
The aim then is to construct a directed acyclic graph $\cgraph$ with the following properties.
Firstly, each edge in $\cgraph$ is labeled with one of the aforementioned pieces.
Secondly, there exist distinguished source and sink vertices in $\cgraph$.
Thirdly, there is a natural bijection between source-sink paths in $\cgraph$ and the fixed set of geometric graphs.
By this we mean that given any source-sink path in $\cgraph$, we can collect all the labels appearing on that path and combine them to obtain the corresponding geometric graph.
Clearly, given such a graph $\cgraph$ for one particular class of geometric graphs, the corresponding counting and enumeration problems can be reduced to counting and enumerating source-sink paths in $\cgraph$.

\section{Definitions and Results}\label{sec:basics}

Let $\frags$ be the set of all crossing-free geometric graphs on all non-empty subsets of $\pts$.
Elements of $\frags$ are called \emph{units}, and they can be thought of as the simple ``pieces'' from the informal introduction.
For every unit $\unit \in \frags$ we denote by $\upts{\unit} \subseteq \pts$ the set of vertex points of $\unit$.
That is, if $\unit$ is a geometric graph on a particular subset $\pts'$ of $\pts$, then $\upts{\unit} = \pts'$.

Let us now define a number of useful subsets of $\frags$.
\begin{itemize}
\item Let $\segs$ be the set of \emph{segments} with both endpoints in $\pts$.
That is, each $\unit \in \segs$ is a geometric graph on exactly two points of $\pts$ with the edge between them.
\item Let $\cparts$ be the set of \emph{convex parts} with vertex points in $\pts$.
That is, for each $\unit \in \cparts$ the convex hull of $\upts{\unit}$ does not contain any points of $\pts \setminus \upts{\unit}$ (in words, interior points are also vertex points of $\unit$).
Moreover, $\unit$ contains all the edges along the boundary of the convex hull of the set $\upts{\unit}$.
Observe that all isolated points and segments with endpoints in $\pts$ are also elements of $\cparts$.
\item Let $\cpartstp$ be the set of \emph{convex faces} with vertex points in $\pts$. That is, $\cpartstp$ contains all $\unit \in \cparts$ with $|\upts{\unit}| \geq 3$ and no interior vertices.
The \emph{shape} of any such $\unit$ is the bounded and closed region delimited by its edges.
\item Let $\trgs$ be the set of \emph{empty triangles} with vertex points in $\pts$.
That is, $\trgs$ contains all $\unit \in \cpartstp$ with $|\upts{\unit}| = 3$.
\end{itemize}

Let us fix a subset $\units$ of $\frags$.
For any units $\unit_1,\unit_2 \in \units$, if $\unit_1$ contains an edge $e_1$ and $\unit_2$ contains a different edge $e_2$ such that $e_1$ and $e_2$ are crossing, then we also say that $\unit_1$ and $\unit_2$ are \emph{crossing}.
Otherwise, $\unit_1$ and $\unit_2$ are \emph{non-crossing}.
A \emph{combination} of $\units$ is a subset $\cmb$ of $\units$, and $\cmb$ is \emph{crossing-free} if the elements of $\cmb$ are pairwise non-crossing.

We denote by $\cmbscf{\units}$ the set of all crossing-free combinations of $\units$.
Furthermore, a combination $\cmb \in \cmbscf{\units}$ is called a \emph{partition} of $\pts$ if the sets $\upts{\unit}$ of all $\unit \in \cmb$ are pairwise disjoint and if their union is equal to $\pts$.
We denote by $\cmbspt{\units}$ the set of all such partitions of $\pts$.
Assuming $\units \subseteq \cpartstp$, a combination $\cmb \in \cmbscf{\units}$ is called a \emph{subdivision} of the convex hull of $\pts$ if the shapes of all $\unit \in \cmb$ are pairwise interior-disjoint and if their union is equal to the convex hull of $\pts$.
We denote by $\cmbssd{\units}$ the set of all such subdivisions of the convex hull of $\pts$.
\Cref{fig:combinations} shows one representative of each type of crossing-free combinations for the special case $\units = \trgs$.
\begin{figure}
  \begin{center}
    \begin{tikzpicture}
  \newcommand\exampleCoords{
    \coordinate (c0) at (0.0,1.0);
    \coordinate (c1) at (0.1,0.4);
    \coordinate (c2) at (0.2,1.5);
    \coordinate (c3) at (0.3,0.8);
    \coordinate (c4) at (0.4,0.2);
    \coordinate (c5) at (0.5,1.2);
    \coordinate (c6) at (0.6,1.7);
    \coordinate (c7) at (0.7,0.6);
    \coordinate (c8) at (0.8,0.2);
    \coordinate (c9) at (0.9,0.95);
    \coordinate (c10) at (1.0,1.4);
    \coordinate (c11) at (1.1,0.6);
  }
  \newcommand\examplePoints[1]{
    \node[point] (p0) at (c0) {};
    \node[point] (p1) at (c1) {};
    \node[point] (p2) at (c2) {};
    \node[point] (p3) at (c3) {};
    \node[point] (p4) at (c4) {};
    \node[point] (p5) at (c5) {};
    \node[point] (p6) at (c6) {};
    \node[point] (p7) at (c7) {};
    \node[point] (p8) at (c8) {};
    \node[point] (p9) at (c9) {};
    \node[point] (p10) at (c10) {};
    \node[point] (p11) at (c11) {};
    \node (label) at (0.55,-0.4) {\footnotesize #1};
  }
  \begin{scope}[xshift=0]
    \exampleCoords
    \draw[fill=black!20!white] (c0) -- (c1) -- (c4) -- cycle;
    \draw[fill=black!20!white] (c3) -- (c7) -- (c9) -- cycle;
    \draw[fill=black!20!white] (c3) -- (c9) -- (c10) -- cycle;
    \draw[fill=black!20!white] (c5) -- (c6) -- (c10) -- cycle;
    \draw[fill=black!20!white] (c7) -- (c8) -- (c11) -- cycle;
    \draw[fill=black!20!white] (c9) -- (c10) -- (c11) -- cycle;
    \examplePoints{$\cmbscf{\trgs}$}
  \end{scope}
  \begin{scope}[xshift=100]
    \exampleCoords
    \draw[fill=black!20!white] (c0) -- (c2) -- (c5) -- cycle;
    \draw[fill=black!20!white] (c1) -- (c3) -- (c4) -- cycle;
    \draw[fill=black!20!white] (c6) -- (c9) -- (c10) -- cycle;
    \draw[fill=black!20!white] (c7) -- (c8) -- (c11) -- cycle;
    \examplePoints{$\cmbspt{\trgs}$}
  \end{scope}
  \begin{scope}[xshift=200]
    \exampleCoords
    \draw[fill=black!20!white] (c0) -- (c1) -- (c3) -- cycle;
    \draw[fill=black!20!white] (c0) -- (c2) -- (c5) -- cycle;
    \draw[fill=black!20!white] (c0) -- (c3) -- (c5) -- cycle;
    \draw[fill=black!20!white] (c1) -- (c3) -- (c4) -- cycle;
    \draw[fill=black!20!white] (c3) -- (c4) -- (c7) -- cycle;
    \draw[fill=black!20!white] (c3) -- (c5) -- (c7) -- cycle;
    \draw[fill=black!20!white] (c5) -- (c7) -- (c9) -- cycle;
    \draw[fill=black!20!white] (c7) -- (c9) -- (c11) -- cycle;
    \draw[fill=black!20!white] (c7) -- (c8) -- (c11) -- cycle;
    \draw[fill=black!20!white] (c4) -- (c7) -- (c8) -- cycle;
    \draw[fill=black!20!white] (c2) -- (c5) -- (c6) -- cycle;
    \draw[fill=black!20!white] (c5) -- (c6) -- (c10) -- cycle;
    \draw[fill=black!20!white] (c5) -- (c9) -- (c10) -- cycle;
    \draw[fill=black!20!white] (c9) -- (c10) -- (c11) -- cycle;
    \examplePoints{$\cmbssd{\trgs}$}
  \end{scope}
\end{tikzpicture}
  \end{center}
  \caption{
    The three defined types of crossing-free combinations of $\trgs$.
    The shaded regions represent the shapes of individual elements of $\trgs$.
  }
  \label{fig:combinations}
\end{figure}

From now on we will no longer consider the sets $\pgs$, $\cps$, and so on, as defined in the introduction.
Instead, we will talk about crossing-free combinations of specific sets of units.
For example, since there is an obvious bijection between the sets $\cmbscf{\segs}$ and $\pgs$, any counting or enumeration algorithm for one set can be adapted easily for the other.
Similarly, there are bijections between the sets $\cmbspt{\cparts}$ and $\cps$, $\cmbspt{\segs}$ and $\pms$, $\cmbssd{\cpartstp}$ and $\css$, as well as $\cmbssd{\trgs}$ and $\trs$.

In the same spirit, we define the sets $\cmbsst,\cmbssc \subseteq \cmbscf{\segs}$ of all crossing-free combinations of $\segs$ whose segments form spanning trees and spanning cycles on $\pts$, respectively.

\begin{definition}\label{def:cgraph}
  Let $\units \subseteq \frags$.
  A \emph{combination graph (over $\units$)} is a directed and acyclic multigraph $\cgraph$ with two distinguished vertices $\cgsource$ and $\cgsink$, called the \emph{source} and \emph{sink} of $\cgraph$, respectively.
  All edges in $\cgraph$, except for those ending in $\cgsink$, are labeled with an element of $\units$.
  Moreover, the sink $\cgsink$ has no outgoing edges.
  The \emph{size} of $\cgraph$ is the number of vertices and edges in $\cgraph$.
\end{definition}

For any combination graph $\cgraph$ and any set $\cmbs$ of combinations, we say that $\cgraph$ \emph{represents} $\cmbs$ if there is a bijection between the set of directed $\cgsource$-$\cgsink$ paths in $\cgraph$ and the set $\cmbs$ in the following sense.
Taking any $\cgsource$-$\cgsink$ path in $\cgraph$ and building the set of labels on that path yields the corresponding combination in $\cmbs$.

The following are comparatively simple applications of an abstract framework developed in \cref{sec:framework}.
The corresponding proofs can be found in \cref{sec:applications}.
Many more, in some cases obvious, applications are possible.
\begin{theorem}[All geometric graphs]\label{thm:pg}
  There exists a combination graph over $\segs$ of size $O(c^nn^3)$ with $c < 2.83929$ that represents $\cmbscf{\segs}$.
\end{theorem}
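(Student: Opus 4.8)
The plan is to build a combination graph $\cgraph$ over $\segs$ by a sweep over $\pts$ in the order $\ptspreceq$, generalizing the triangulation construction of Alvarez and Seidel~\cite{SA13}; concretely, $\cgraph$ is exactly the kind of object the general machinery of \cref{sec:framework} is designed to produce. I would proceed in four steps: (i) decompose a crossing-free combination canonically along $\ptspreceq$; (ii) design a succinct \emph{profile} of the part committed so far that is a sufficient statistic for the remainder of the sweep; (iii) read $\cgraph$ off from (i) and (ii); and (iv) count profiles to obtain the size bound.

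For step (i), write $\pts = \{\pt_1 \ptspreceq \dots \ptspreceq \pt_\ptsnum\}$ and, for $\cmb \in \cmbscf{\segs}$, let $\cmbntr{i}$ be the set of segments $\seg \in \cmb$ with $\umaxin{\seg} = \pt_i$, so that $\cmb = \cmbntr{1} \cup \dots \cup \cmbntr{\ptsnum}$ is a disjoint union and each $\cmbntr{i}$ is a star centered at $\pt_i$ with leaves among $\pt_1,\dots,\pt_{i-1}$. Two distinct segments sharing the endpoint $\pt_i$ are non-crossing (using that no three points of $\pts$ are collinear), so a sequence in which $\cmbntr{i}$ is such a star for every $i$ arises from some $\cmb \in \cmbscf{\segs}$ exactly when, for each $i$, no segment of $\cmbntr{i}$ crosses any segment of $G_{i-1} := \cmbntr{1} \cup \dots \cup \cmbntr{i-1}$; moreover $\cmb$ is then unique. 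Accordingly $\cgraph$ will have ``checkpoint'' vertices $v_0 = \cgsource, v_1, \dots, v_\ptsnum = \cgsink$, and a $v_{i-1}$--$v_i$ path will encode one admissible choice of $\cmbntr{i}$ by carrying exactly its segments as labels.

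For steps (ii)--(iii), because admissibility of $\cmbntr{i}$, and the evolution of the ``right boundary'' of the committed graph, depend on $G_{i-1}$, I would refine each checkpoint into vertices $(i,\sigma)$ indexed by a \emph{profile} $\sigma$ of $G_{i-1}$. The profile records only the right boundary of $G_{i-1}$: which of $\pt_1,\dots,\pt_{i-1}$ are still reachable by a later spoke (the \emph{active} points) together with the pattern of committed edges running along that boundary --- precisely enough to decide, for any later $\pt_m$ and any active $\pt_a$, whether $\pt_m\pt_a$ crosses $G_{i-1}$, and precisely enough to compute the updated profile once $\cmbntr{i}$ has been chosen. Between $(i-1,\sigma)$ and the vertices $(i,\sigma')$ I would insert a gadget of size $\poly(\ptsnum)$ whose internal paths biject with the admissible subsets of the at most $\ptsnum$ legal spokes from $\pt_i$, each carrying those spokes as labels and ending at the correct $\sigma'$; acyclicity holds because $i$ strictly increases and each gadget is itself a DAG.

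For step (iv) and correctness, the required bijection is then routine: $\cmb$ maps to the $\cgsource$--$\cgsink$ path realizing its transcript through the matching gadget choices, and its label set is $\cmb$; conversely every path spells out admissible $\cmbntr{i}$'s whose union is crossing-free, within a step by the shared endpoint and across steps by the gadget's legality check. The size of $\cgraph$ is dominated by the number of vertices $(i,\sigma)$, and the technical heart is to prove that the number of profiles supported on a fixed set of $k$ active points is $O(\tau^k)$, where $\tau \approx 1.83929$ is the tribonacci constant --- this should come from the boundary pattern being essentially a bounded-alphabet word obeying a ``no three consecutive'' local rule. Summing over the choices of active set, $\sum_{k}\binom{\ptsnum}{k}O(\tau^k) = O\bigl((1+\tau)^\ptsnum\bigr) = O(c^\ptsnum)$ with $c := 1+\tau < 2.83929$; absorbing the $\poly(\ptsnum)$ of auxiliary profile data together with the $\poly(\ptsnum)$-size gadgets raises the polynomial factor to $\ptsnum^3$. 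The main obstacle is exactly step (ii): designing a profile that is simultaneously a sufficient statistic for future crossings and sparse enough to yield the tribonacci base --- a coarser profile (e.g.\ just the active set) cannot decide legality, while a finer one (e.g.\ the whole graph $G_{i-1}$, or a visibility cone per active point) overshoots the count.
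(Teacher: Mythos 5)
Your overall architecture (a DAG whose source--sink paths spell out canonical decompositions, with vertices given by a succinct sufficient statistic) is the right one, but your route differs from the paper's in both the decomposition and the statistic, and the difference is exactly where the proposal breaks down. The paper does not sweep the points left to right and add the star of segments ending at $\pt_i$; it removes one segment at a time, namely the right-most \emph{extreme} segment (one that can be translated vertically upward to infinity without hitting the rest), and its sufficient statistic is not a ``right boundary'' but a $3$-coloring of $\pts$ recording, for each point, whether it lies in the lower shadow $\clow{\cmb}$ (color $\ptdead$), is a vertex outside the lower shadow (color $\ptalive$), or neither (color $\ptfree$), plus one marked point. \Cref{obs:dependence} is what makes this coloring sufficient: a new segment can be added safely iff its endpoints are not $\ptdead$ and its upper shadow contains no $\ptalive$ or $\ptdead$ point. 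Your step (ii) has no analogue of this lemma, and that is a genuine gap, not a deferred detail: for arbitrary crossing-free graphs (unlike triangulations, where the Alvarez--Seidel committed region has a clean polygonal frontier) a future spoke $\pt_a\pt_m$ can cross committed segments lying well inside the swept region, so ``the pattern of committed edges running along the right boundary'' is not a well-defined object, let alone one shown to decide legality of every future spoke and to update consistently.

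The second gap is the count. You assert that the number of profiles on a fixed $k$-set of active points is $O(\tau^k)$ with $\tau$ the tribonacci constant, and then recover $c=1+\tau$ by the binomial theorem; nothing in the proposal supports the $O(\tau^k)$ claim (fewer than two states per active point), and you yourself flag it as the main obstacle. The target constant is correct --- $1+\tau$ is the largest root of $x^3-4x^2+4x-2$ --- but the paper reaches it quite differently: the $3$-coloring gives $O(3^n n)$ classes for free, and \cref{lem:cf:bound} then constructs an injection from the set of realizable colorings into the set of strings avoiding every block $(\ptdead,\ptfree,\dots,\ptfree,\ptdead)$, using the geometric fact that any three consecutive points $\pt_i,\pt_{i+1},\pt_{i+2}$ form either a left or a right turn and therefore cannot simultaneously realize the colorings $(\ptdead,\ptfree,\ptdead)$ and $(\ptalive,\ptdead,\ptalive)$. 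Some geometric exclusion argument of this kind is unavoidable; without one, your step (iv) is a counting template that happens to land on the right constant rather than a proof.
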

\begin{theorem}[Convex partitions]\label{thm:cp}
  There exists a combination graph over $\cparts$ of size $O(2^nn^3)$ that represents $\cmbspt{\cparts}$.
\end{theorem}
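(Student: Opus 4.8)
The plan is to instantiate the abstract framework of \cref{sec:framework} with the set of units $\units = \cparts$, for which the target to be represented is $\cmbspt{\cparts}$. Essentially everything then reduces to (i) checking that $\cparts$ satisfies the structural hypotheses the framework imposes on $\units$, and (ii) bounding the size of the resulting combination graph $\cgraph$ by $O(2^{\ptsnum}\ptsnum^3)$. Concretely, I would fix the canonical total order $\ptspreceq$ on $\pts$ used throughout \cref{sec:framework} and process the points in that order; a directed $\cgsource$-$\cgsink$ path in $\cgraph$ then corresponds to revealing the points one at a time and, for each newly revealed point, committing to how it participates in the convex part that currently contains it. As in the framework, the bijection with $\cmbspt{\cparts}$ is enforced by a canonical rule---among the still-undecided points always work on the part containing the $\ptspreceq$-smallest one, and list the vertices of each part in $\ptspreceq$-order---so that each crossing-free convex partition of $\pts$ arises as the label set of exactly one $\cgsource$-$\cgsink$ path; checking this is routine once the framework's correctness statement is available.

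Next I would verify that $\cparts$ is ``amenable'' in the sense the framework needs. The point is that a convex part $\unit \in \cparts$ is completely determined by its vertex set, that this vertex set can be grown one vertex at a time in $\ptspreceq$-order while maintaining convexity, and that the defining constraint of $\cparts$---the convex hull of $\upts{\unit}$ contains no point of $\pts \setminus \upts{\unit}$---is local: once the current chain of committed vertices and the next vertex to be appended are known, the set of hull-interior points of $\pts$ thereby forced into $\unit$ is determined. Consequently a partially built convex part can be recorded by a witness of polynomial size (for instance its two $\ptspreceq$-extreme committed vertices together with the current upper and lower boundary edges) rather than by an arbitrary subset of $\pts$. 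Since the only genuinely exponential ingredient of a configuration is then ``which of the already-revealed points are still active on the frontier''---a subset of a $\ptspreceq$-prefix of $\pts$, contributing $\sum_{k} 2^{k} = O(2^{\ptsnum})$ configurations in total---multiplied by the polynomially many choices for the bounded witnesses, $\cgraph$ has $O(2^{\ptsnum}\poly(\ptsnum))$ vertices; as each has $O(\poly(\ptsnum))$ outgoing edges, a careful accounting of configurations and transitions yields the advertised size $O(2^{\ptsnum}\ptsnum^3)$.

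The main obstacle is to confirm that crossing-freeness can be certified using only this polynomial-size frontier data, i.e.\ that when a new convex part is opened or the current one is extended by a vertex, non-crossing with all previously placed parts follows from non-crossing with the bounded boundary information carried in the current configuration. Here one uses that for $\cparts$ two parts with disjoint vertex sets are non-crossing precisely when their convex hulls are interior-disjoint, together with the fact that the $\ptspreceq$-ordered sweep makes the boundary between the decided and the undecided region a monotone chain of bounded description, on one side of which every future part must lie. A subtlety specific to convex partitions (absent for $\trgs$ or $\cpartstp$) is that a convex part may swallow interior points of $\pts$ that are themselves already ``covered'': I would need to make sure activating and later killing such a point is handled by the transitions without enlarging the exponential part of the state. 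Once these locality claims are in place, the remaining verifications---acyclicity of $\cgraph$, the source/sink conditions of \cref{def:cgraph}, and injectivity and surjectivity of the path-to-partition map---are straightforward, and \cref{thm:cp} follows.
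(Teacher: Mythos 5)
There is a genuine gap, and it comes from replacing the paper's atomic units by a point-by-point sweep. In the paper, an edge of the combination graph is labeled by an \emph{entire} convex part, added in one step; a state therefore never contains a partially built part, and it suffices to record the covered set $\cpts{\cmb}$ (two colors) plus one marked point, giving $O(2^nn)$ vertices. Crossing-freeness of the next part $\unit$ with everything already placed is certified from this data alone via \cref{obs:dependence} together with the invariant $\clow{\cmb}\subseteq\cpts{\cmb}$: one only checks $\upts{\unit}\cap\clow{\cmb}=\emptyset$ and $\uupp{\unit}\cap\cpts{\cmb}=\emptyset$. Your incremental construction loses exactly this. If several convex parts are open simultaneously during the $\ptspreceq$-sweep --- and there can be $\Theta(\ptsnum)$ of them --- the state must record not just which revealed points are ``active'' but \emph{how they are grouped into parts} and, for each open part, its current upper and lower boundary edges; that is $\ptsnum^{\Theta(\ptsnum)}$ states, not $O(2^{\ptsnum}\poly(\ptsnum))$. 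If instead you finish one part before opening the next (your ``work on the part containing the $\ptspreceq$-smallest undecided point'' rule), then the decided region is a union of convex hulls interleaved arbitrarily with uncovered points, and your claimed ``monotone chain of bounded description'' separating decided from undecided simply does not exist; the non-crossing certification is then unsupported. Either way the central locality claim of your second and third paragraphs is not established.

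Separately, the $O(2^n n^3)$ \emph{size} bound needs an argument you do not supply. Since $|\cparts|$ can be $2^{\Theta(\ptsnum)}$, the generic bound of \cref{cor:partitions} and the naive ``polynomially many outgoing edges per vertex'' count both fail: a vertex of the paper's graph genuinely has exponentially many outgoing labeled edges. The paper's \cref{lem:pt:bound} handles this by a charging argument: for fixed extremes $\pt_l\ptspreceq\pt_r$ there are at most $2^{r-l+1}$ convex parts $\unit$ with those extremes, and each such $\unit$ determines the colors of all points of $\pts_l^r$ in any source vertex of an edge labeled $\unit$ (namely $\uslow{\unit}=\cpts{\cmb}\cap\pts_l^r$), leaving at most $2^{n-(r-l+1)}n$ compatible vertices; summing $2^{r-l+1}\cdot 2^{n-(r-l+1)}n$ over all $l,r$ gives $O(2^nn^3)$. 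Some argument of this type is indispensable for the theorem as stated, and your proposal would need it (or a correct substitute) even if the state-space issues above were repaired.
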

\begin{theorem}[Perfect matchings]\label{thm:pm}
  There exists a combination graph over $\segs$ of size $O(2^nn^3)$ that represents $\cmbspt{\segs}$.
\end{theorem}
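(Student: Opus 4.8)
The plan is to obtain \cref{thm:pm} as a direct instance of the abstract framework of \cref{sec:framework}, applied to the unit set $\units = \segs$ in its partition mode (the one producing $\cmbspt{\cdot}$). By the bijection recorded in \cref{sec:basics}, $\cmbspt{\segs}$ corresponds to the set $\pms$ of crossing-free perfect matchings, so a combination graph over $\segs$ that represents $\cmbspt{\segs}$ is exactly the object we want, and the counting and enumeration consequences quoted in the abstract follow by processing $\cgsource$--$\cgsink$ paths. What the framework asks us to supply is (i) a linear order on $\pts$; (ii) a canonical way of exposing any crossing-free perfect matching one segment at a time, consistently with that order; and (iii) a notion of \emph{state} summarizing a partial matching, such that the admissible next steps depend only on the current state. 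It then manufactures a combination graph whose vertices are the reachable states, whose edges carry the exposed segments as labels, and whose size is $O\bigl(|{\text{states}}|\cdot\poly(n)\bigr)$.

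For perfect matchings I would take the left-to-right order $p_1 \ptspreceq \cdots \ptspreceq p_n$ on $\pts$ (general position makes all $x$-coordinates distinct, and the tie-breaking rule is immaterial) and expose the segments of a matching by a sweep: when the sweep reaches $p_i$, that point either \emph{opens} a new segment, meaning its partner lies ahead of the sweep line, or \emph{closes} the unique currently open segment that ends at $p_i$. Reading off segments in the order in which they are closed is the canonical decomposition. The state just before the sweep reaches $p_i$ is then the set $A \subseteq \{p_1,\dots,p_{i-1}\}$ of \emph{open} points --- points already passed whose partner has not yet been passed. Since $A$ is a subset of $\pts$, there are at most $O(2^n)$ states in total; each step offers at most $n$ candidate partners; and the framework's generic bookkeeping converts this into a combination graph of size $O(2^n n^3)$. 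That every crossing-free perfect matching is exposed by exactly one run of the sweep --- so that $\cgsource$--$\cgsink$ paths biject with $\cmbspt{\segs}$ --- is then immediate from the determinism of the process.

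The step I expect to be the main obstacle is justifying property (iii): that whether $p_i$ may legally close its open segment to a given open point $q \in A$ is decidable from $A$, from $p_i$, and from the fixed coordinates of $\pts$ alone. The new segment $s = \{q,p_i\}$ must be non-crossing both with the segments already closed off behind the sweep line and with the still-open segments emanating from the other points of $A$ toward their as-yet-unplaced partners ahead of the sweep, and neither family is recorded verbatim in the state $A$. Resolving this requires pinning down the right structural facts about how the segments of a crossing-free matching interact with the sweep line --- in essence, that the chosen order prevents an already-closed segment from ever obstructing a future closure, and that the open segments meet the sweep line in an order which, together with $A$, already pins down which closures can be completed to a crossing-free matching. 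Supplying these geometric facts so that the framework's hypotheses are met is the substantive part; once they are in place \cref{thm:pm} drops out with the stated size bound, and \cref{thm:pg,thm:cp} are proved in exactly the same manner, differing only in the unit set and the mode, and hence in the size of the resulting state space.
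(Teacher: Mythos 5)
The gap you flag in your own step (iii) is real and, with the sweep order you chose, it cannot be closed: the set $A$ of open points does not determine whether a candidate closure is crossing-free. Take eight points and two partial matchings that, after the sweep has passed $p_6$, both have open set $\{p_1,p_6\}$ but pair $p_2,\dots,p_5$ differently, say $\{p_2p_3,p_4p_5\}$ versus $\{p_2p_5,p_3p_4\}$; placing $p_2,p_3$ above and $p_4,p_5$ below the line through $p_1$ and $p_7$ makes the future segment $p_1p_7$ compatible with the first pairing but crossing $p_2p_5$ in the second. Merging these two states into one vertex therefore lets some source--sink path collect a crossing pair of labels, so the graph would represent a strict superset of $\cmbspt{\segs}$. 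The difficulty recurs with open segments: a crossing with a segment whose far endpoint is still ahead of the sweep can only be tested once that endpoint is reached, by which time the offending segment is a closed one behind the sweep line, which $A$ again does not record. So the left-to-right closure order with ``open points'' as state is not merely incomplete; it is the wrong state.

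The paper resolves this by changing both the decomposition order and the state. Segments are peeled off as \emph{right-most extreme elements}: a segment is extreme in $\cmb$ if no other element of $\cmb$ has a point vertically above it, equivalently it can be translated upward to infinity. The partial objects are combinations with pairwise disjoint vertex sets satisfying the invariant $\clow{\cmb}\subseteq\cpts{\cmb}$, and the state is $\cpts{\cmb}$ (one bit per point: matched or not) together with a marking of $\ulft{\umaxin{\cmb}}$, which is what makes the graph acyclic and deterministic. The geometric fact your plan is missing is \cref{obs:dependence}: a new segment $\unit$ may be added safely if and only if $\upts{\unit}\cap\clow{\cmb}=\emptyset$ and $\uupp{\unit}\cap\cpts{\cmb}=\emptyset$; the first condition is subsumed by disjointness thanks to the invariant, and the second is computable from $\cpts{\cmb}$ and the fixed coordinates alone, since $\uupp{\unit}$ depends only on $\unit$ and $\pts$. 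It is this vertical-visibility order, not the $x$-order of closure, that makes a state space of size $O(2^n n)$ sufficient; with that substitution the rest of your outline (bijection with $\pms$, at most $|\segs|=O(n^2)$ labeled edges per state, hence size $O(2^nn^3)$) matches the paper's \cref{cor:partitions} and \cref{lem:pt:bound}.
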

\begin{theorem}[Convex subdivisions]\label{thm:cs}
  There exists a combination graph over $\cpartstp$ of size $O(2^nn^3)$ that represents $\cmbssd{\cpartstp}$.
\end{theorem}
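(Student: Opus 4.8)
The plan is to instantiate the abstract framework of \cref{sec:framework} with the unit set $\units \coloneqq \cpartstp$ in its ``subdivision'' flavour. Applied to any $\units \subseteq \cpartstp$, the framework produces a combination graph $\agraphsd{\units}$ over $\units$ that represents $\cmbssd{\units}$: its $\cgsource$-$\cgsink$ paths correspond to subdivisions, recovered by inserting the convex faces of a subdivision one at a time in a fixed canonical order --- ordered, say, by the rightmost vertex $\urgt{\unit}$ (equivalently $\umaxin{\unit}$) of each face $\unit$, with ties broken by a consistent rule such as the angular order around that vertex. A vertex of $\agraphsd{\units}$ is a ``state'' recording exactly the information needed to continue such an insertion process, and the framework bounds the size of $\agraphsd{\units}$ in terms of the number of reachable states times a polynomial in $\ptsnum$. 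Hence the whole argument reduces to (i) verifying the structural conditions the framework imposes on a unit set, for $\units = \cpartstp$, and (ii) showing that only $O(2^{\ptsnum}\poly(\ptsnum))$ states are reachable.

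Step (i) should be routine. Each $\unit \in \cpartstp$ is a convex polygon with no interior vertices, so it carries exactly the data the framework asks for: a leftmost vertex $\ulft{\unit}$, a rightmost vertex $\urgt{\unit}$, and the two $x$-monotone boundary chains $\ulow{\unit}$ and $\uupp{\unit}$ joining them. Moreover, in any $\cmb \in \cmbssd{\cpartstp}$ the shapes of the faces tile $\operatorname{conv}(\pts)$, so whenever the shapes of two faces meet they do so along a common boundary edge; together with general position this is precisely the compatibility property the framework requires, and checking it amounts to unwinding definitions.

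Step (ii) is the heart of the matter. Fix a subdivision and a prefix $\unit_1,\dots,\unit_k$ of its faces in the canonical order, and let $R$ be the union of their shapes. The key geometric claim is that, apart from arcs of $\partial\operatorname{conv}(\pts)$, the region $R$ is bounded by a \emph{single} polygonal chain $F$ --- the frontier --- whose edges are edges of the subdivision, and that $F$ together with the set of points of $\pts$ enclosed by $R$ determines the state completely (in particular, two prefixes giving the same such pair can be completed by exactly the same families of remaining faces). Granting this, the enclosed point set contributes a factor of at most $2^{\ptsnum}$, while $F$, being a monotone chain pinned down by its two endpoints on $\partial\operatorname{conv}(\pts)$ and by the last processed vertex, ranges over only $\poly(\ptsnum)$ possibilities once the enclosed set is fixed. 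So there are $O(2^{\ptsnum}\poly(\ptsnum))$ states; feeding this into the framework's size estimate --- and noting one extra factor of $\ptsnum$ compared with the case of triangles, since a convex face may have up to $\ptsnum$ vertices and its insertion must be bookkept accordingly --- yields a combination graph of size $O(2^{\ptsnum}\ptsnum^{3})$. Because $\cmbssd{\cpartstp}$ is in bijection with $\css$, this graph has $\cssnum$ source-sink paths despite having size only $O(2^{\ptsnum}\ptsnum^{3})$.

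The step I expect to be the main obstacle is the geometric claim in (ii): proving that the processed region $R$ always has a single-chain frontier --- so that the state count does not blow up --- and that the enclosed point set together with the frontier is a sufficient statistic for the insertion process. This rests on choosing the canonical order and tie-breaking rule so that inserting a face never seals off a pocket that is not yet subdivided; once that invariant is in place, the remaining estimates are straightforward counting.
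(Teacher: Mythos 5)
Your overall strategy is the paper's: insert the faces of a subdivision in a canonical sweep order, take as state the $x$-monotone frontier chain together with a marker for the last inserted face, and let the framework turn the resulting state graph into a combination graph. However, there is a genuine gap in the size bound, which is precisely the step the paper has to work for. You assert that the framework bounds the size of the graph by ``number of reachable states times a polynomial in $\ptsnum$.'' The generic bound is number of states times $|\units|$, and here $|\units|=|\cpartstp|$ can be $\Theta(2^{\ptsnum})$ (for points in convex position, every subset of size at least $3$ spans an empty convex polygon). Worse, a \emph{single} state can have exponential out-degree: from the initial frontier (the lower convex hull of $\pts$), any empty convex face whose lower hull is a portion of that frontier may be inserted next, and for points in convex position there are exponentially many such faces with a fixed lower hull (any subset of the upper-hull points in the same $x$-range completes one). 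So ``states times $\poly(\ptsnum)$'' does not bound the number of edges, and the extra factor $\ptsnum$ you add for bookkeeping the vertices of a face does not touch this issue. The paper closes the gap with a dedicated charging argument (\cref{lem:pt:bound}, reused verbatim for \cref{lem:sd:bound}): for fixed $\pt_l\ptspreceq\pt_r$ there are at most $2^{r-l+1}$ faces $\unit$ with $\ulft{\unit}=\pt_l$ and $\urgt{\unit}=\pt_r$, and any such face, when it labels an edge out of a state, forces the color of every point with $x$-coordinate between $\pt_l$ and $\pt_r$ (a point is on the frontier iff it lies on the lower hull of $\unit$), so it can leave at most $2^{\ptsnum-(r-l+1)}\ptsnum$ states; the product is $2^{\ptsnum}\ptsnum$ uniformly in $l,r$, and summing over the $O(\ptsnum^2)$ pairs gives $O(2^{\ptsnum}\ptsnum^3)$. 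Without some such trade-off your argument only yields $O(4^{\ptsnum}\poly(\ptsnum))$.

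A secondary flaw: your state count factors as ``$2^{\ptsnum}$ choices of enclosed point set times $\poly(\ptsnum)$ frontiers per enclosed set.'' The second factor is false. For points on the graph of $y=x^2$ the enclosed set can equal all of $\pts$ while the frontier is any of $2^{\ptsnum-2}$ $x$-monotone chains from $\pt_1$ to $\pt_{\ptsnum}$ (all reachable as partial subdivisions), so a frontier is certainly not ``pinned down by its two endpoints and the last processed vertex.'' The correct count runs the other way: an $x$-monotone chain with vertices in $\pts$ is determined by its vertex set, hence there are at most $2^{\ptsnum}$ frontiers in total, and the frontier determines the enclosed set. This is easily repaired, but as written the counting in your ``heart of the matter'' does not stand; and it is the edge count above, not the single-chain invariant you flag, that is the real obstacle.
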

\begin{theorem}[Triangulations; c.f.\ Theorem 3 in \cite{SA13}]\label{thm:tr}
  There exists a combination graph over $\trgs$ of size $O(2^nn^3)$ that represents $\cmbssd{\trgs}$.
\end{theorem}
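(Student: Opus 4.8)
The plan is to instantiate the abstract machinery of \cref{sec:framework} with the unit set $\units = \trgs$ and the notion of \emph{subdivision}. Concretely, it suffices to exhibit a canonical way of decomposing every triangulation $\cmb \in \cmbssd{\trgs}$ into a sequence of its empty triangles such that, after any prefix of the sequence has been removed, the resulting ``partial state'' has a short encoding, and then to bound the number of such partial states by $O(2^{\ptsnum}\ptsnum^{O(1)})$. Following the idea of Alvarez and Seidel \cite{SA13}, I would fix a generic direction and a convex hull edge $e_\star$ of $\operatorname{conv}(\pts)$, and peel the triangles of $\cmb$ off in a canonical recursive order rooted at $e_\star$: remove the unique triangle of $\cmb$ incident to $e_\star$ from the inside, which turns its other two edges into new ``active'' edges; then, processing active edges in a fixed canonical order, repeatedly take an active edge $e = ab$ that is not yet an edge of $\operatorname{conv}(\pts)$ and remove the unique triangle of $\cmb$ glued along $e$ on the still-uncovered side, replacing $e$ by that triangle's two other edges. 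The process terminates precisely when every active edge lies on $\partial\operatorname{conv}(\pts)$, i.e.\ when all of $\cmb$ has been removed, so it assigns to $\cmb$ a well-defined word over $\trgs$. The key structural claim to establish is that the partial state reached after removing a prefix of this sequence is already determined by the subset $S \subseteq \pts$ of points incident to some already-removed triangle, together with $O(1)$ extra data pinning down the active edge currently being worked on.

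Granting this, the combination graph $\cgraph$ is built by taking one vertex per reachable partial state --- encoded as such a pair, hence $O(2^{\ptsnum}\ptsnum^{2})$ of them --- with the source $\cgsource$ the empty state and the sink $\cgsink$ the unique state in which all of $\operatorname{conv}(\pts)$ is covered, which has no outgoing edges. From every other vertex we put one $\unit$-labeled edge for each empty triangle $\unit \in \trgs$ that may be removed next; since two of $\unit$'s three vertices are forced --- they are the endpoints of the active edge under attack --- only the apex is free, so each vertex has at most $\ptsnum$ outgoing labeled edges, and we add one unlabeled edge into $\cgsink$ from the final state. The graph is acyclic since $S$ strictly increases along every labeled edge, and altogether $\cgraph$ has size $O(2^{\ptsnum}\ptsnum^{3})$. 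I expect the real obstacle to be the structural claim above, i.e.\ understanding how the interior points of $\pts$ get apportioned among the uncovered sub-regions as the peeling proceeds and why this whole frontier is recoverable from $S$ plus bounded data; this is exactly where the exponential factor $2^{\ptsnum}$ (rather than a polynomial, as for triangulations of a simple polygon) enters, and where naive schemes such as a left-to-right vertical sweep break down, their intermediate regions admitting no short encoding.

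Finally I would check that $\cgraph$ represents $\cmbssd{\trgs}$ in the sense required. In one direction, the canonical peeling of any $\cmb \in \cmbssd{\trgs}$ traces a $\cgsource$-$\cgsink$ path whose label set is exactly $\cmb$; the only things to note are that each removed triangle is a genuine empty triangle (immediate from $\cmb \subseteq \trgs$) and that the triangle glued along a given active edge on its uncovered side is unique (true in any triangulation). In the other direction, collecting the labels along an arbitrary $\cgsource$-$\cgsink$ path yields a set of empty triangles whose shapes are pairwise interior-disjoint --- each is removed strictly inside its then-current uncovered region, disjoint from everything already covered --- and whose shapes have union equal to $\operatorname{conv}(\pts)$, since $\cgsink$ becomes reachable only after every active edge has reached $\partial\operatorname{conv}(\pts)$; hence the label set lies in $\cmbssd{\trgs}$. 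These two assignments are mutually inverse by construction of the canonical order, so $\cgraph$ represents $\cmbssd{\trgs}$; together with the bijection between $\cmbssd{\trgs}$ and $\trs$ recorded earlier, this re-derives Theorem 3 of \cite{SA13} inside the present framework.
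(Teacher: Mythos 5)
Your overall strategy---a DAG whose vertices are compactly encoded partial triangulations and whose labeled edges each add one empty triangle in a canonical order---is the right one, but the decomposition order you chose breaks exactly the step you yourself flag as critical. If triangles are peeled off starting from a hull edge $e_\star=ab$, the covered region after a prefix is a simple polygon with $ab$ on its boundary, and such a polygon is \emph{not} determined by its vertex set. Already with two interior points $q,r$ the first two peeled triangles can be $abq,\,aqr$ in one triangulation and $abr,\,arq$ in another: both prefixes have $S=\{a,b,q,r\}$ and can leave $qr$ as the active edge under attack, yet the covered regions are the two different quadrilaterals $abqr$ and $abrq$. Merging these states destroys soundness---a source-sink path could continue the first prefix with a completion of the second, yielding overlapping triangles---so the resulting graph would not represent $\cmbssd{\trgs}$; and encoding the true frontier (a cyclic sequence of points) costs far more than $2^\ptsnum$ states. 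No $O(1)$ amount of extra data repairs this, since the ambiguity grows with the number of interior points.

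The paper sidesteps this by using the canonical order forced by its dependence relation from \cref{sec:framework}: one always removes a right-most \emph{extreme}, i.e.\ topmost, triangle, so every partial state is a subdivision of the region between the lower convex hull of $\pts$ and an $x$-monotone chain $\cchn{\cmb}$ from $\pt_1$ to $\pt_\ptsnum$ (see \cref{fig:encoding_subdivision}). An $x$-monotone chain \emph{is} uniquely determined by its vertex set, and this is precisely where the factor $2^\ptsnum$ comes from; your remark that sweep-like schemes ``break down'' is therefore aimed at the wrong culprit---it is the bottom-to-top sweep with a monotone frontier that rescues the encoding, and the root-edge peeling that does not. With that state space, \cref{thm:tr} follows from \cref{cor:subdivisions} (serializability of the partial subdivisions plus coherence of $\parteq$) together with the edge-counting argument of \cref{lem:sd:bound}, which charges each triangle with leftmost point $\pt_l$ and rightmost point $\pt_r$ only to states that are unconstrained outside the interval between $\pt_l$ and $\pt_r$.
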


Within our framework we can explain similar results for spanning trees and spanning cycles.
However, these two classes are substantially harder to deal with.
\Cref{sec:spanning} is devoted to the corresponding proofs.
\begin{theorem}[Spanning trees]\label{thm:st}
  There exists a combination graph over $\segs$ of size $O(c^nn^3)$ with $c < 7.04313$ that represents $\cmbsst$.
\end{theorem}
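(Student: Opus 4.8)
The plan is to build the combination graph for $\cmbsst$ by the same sweepline paradigm that underlies \cref{thm:pg}, but with extra bookkeeping that records enough of the partial structure to certify, at the sink, that the accumulated set of segments forms a single spanning tree rather than an arbitrary crossing-free graph. I would first recall the construction behind \cref{thm:pg}: fix the order $\ptspreceq$ on $\pts$, and have the combination graph process the points one at a time; the state reached after processing a prefix encodes, for each point already seen, which of its incident segments to later points are ``promised'' (this is essentially the frontier information of a left-to-right sweep, and is what keeps the state space of size $O(c^n n^{O(1)})$ with $c < 2.83929$). A $\cgsource$-$\cgsink$ path in that graph collects exactly the segments of an arbitrary crossing-free geometric graph.

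Next I would refine the vertex set of the combination graph so that each state additionally carries the partition of the already-seen points into connected components of the partial graph built so far (restricted to segments whose both endpoints have been seen), together with the promise information. Crucially, the number of such components that can still be ``open'' — i.e.\ that still have a promised edge to a not-yet-seen point — is bounded by the size of the frontier, so this refinement only blows up the state space by a $\poly(n)$ factor times a bounded-size set partition of the frontier; the key counting step is to check that the resulting base $c$ stays below $7.04313$. An edge of the combination graph labeled by a segment $\seg$ is allowed only if adding $\seg$ does not create a crossing (as before) \emph{and} does not create a cycle, the latter being detectable locally from the component labels of the two endpoints of $\seg$ in the current state. Finally, an edge into $\cgsink$ is permitted only from a state in which all $n$ points have been processed, all promises have been discharged, and the component partition is the single block $\pts$; this guarantees the collected segments form a connected, acyclic, crossing-free graph on all of $\pts$, i.e.\ an element of $\cmbsst$. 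Conversely, any spanning tree, swept in the order $\ptspreceq$, induces a unique such path, so the map is a bijection and $\cgraph$ represents $\cmbsst$.

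I expect the main obstacle to be the analysis of the state space: tracking connected components across a sweepline is delicate because a component is determined not just by the frontier vertices but by how promised edges will eventually be completed, and naively one might be tempted to store the full component structure on all seen points, which is far too expensive. The resolution is that two seen points in the same component can be ``merged'' in the state as soon as neither of them (nor anything forcing them together) can interact with future points — so only the \emph{active} components touching the frontier need to be named, and an amortized/charging argument bounds their number. Making this precise, and squeezing the constant under $7.04313$, is the technical heart; the crossing-freeness, acyclicity, and bijection parts are then routine adaptations of the arguments for \cref{thm:pg} and the framework of \cref{sec:framework}.
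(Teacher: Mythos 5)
Your plan diverges substantially from the paper's, and the step you yourself flag as ``the technical heart'' is precisely where it breaks down as stated. You propose to carry, in each state of the combination graph, the partition of the already-processed points into connected components (restricted to the frontier). But the frontier is not of bounded size: in this sweep the set of points that are still available as endpoints of future segments (the ``alive'' points, i.e.\ vertex points not lying in any lower shadow) can have cardinality $\Theta(\ptsnum)$. The number of set partitions of a $\Theta(\ptsnum)$-element frontier is superexponential, so the claim that the refinement ``only blows up the state space by a $\poly(n)$ factor times a bounded-size set partition of the frontier'' is false. To salvage the count you would need the nontrivial geometric fact that the component partition of the alive points is a \emph{non-crossing} partition (which does hold, via an argument that a connecting path cannot pass above an alive point), reducing the partition count to Catalan numbers; you would then still have to carry out the generating-function computation to verify that the total state count stays below $7.04313^n$, and this is exactly the part you defer. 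As written, the proof has a gap where its exponential base is determined.

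For comparison, the paper avoids component tracking entirely. It directs every segment so that each point other than the topmost point $\hat{\pt}$ has out-degree exactly $1$; such a ``root-oriented'' graph has $\ptsnum-1$ edges and fails to be a spanning tree only if it contains a cycle. Cycles are excluded not by naming components but by a duality argument (\cref{lem:entangledtrees}): vertical borders dropped from segment endpoints partition the region above the bottom into faces, each finite face is required to have out-degree $1$ across these borders, and the lemma shows that two root-oriented multigraphs admitting disjoint, entangled, tangent plane drawings must both be trees. All of this is enforced by purely local per-point information (out-degree $0$ or $1$, and whether/where the point exposes a drain), giving $8$ colors per point and hence an $8^n$ state bound, which is then reduced to $7.04313^n$ by ruling out certain forbidden color patterns (\cref{lem:stbound}). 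Your acyclicity test via component labels is a legitimate alternative idea, but it is a different algorithm whose size analysis you have not supplied, and it would not obviously yield the stated constant.
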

\begin{theorem}[Spanning cycles]\label{thm:sc}
  There exists a combination graph over $\segs$ of size $O(c^nn^3)$ with $c < 5.61804$ that represents $\cmbssc$.
\end{theorem}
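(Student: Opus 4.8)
The plan is to instantiate the abstract framework of \cref{sec:framework} just as for triangulations (\cref{thm:tr}) and for spanning trees (\cref{thm:st}), but with a notion of state rich enough to carry the connectivity constraints that a single Hamiltonian cycle imposes. Fix a generic sweep order $p_1 \ptspreceq \dots \ptspreceq p_n$ on $\pts$ (say, by $x$-coordinate, a total order by general position). A crossing-free spanning cycle $\cmb \in \cmbssc$ is produced along a $\cgsource$-$\cgsink$ path by committing the $n$ segments of $\cmb$ one at a time in a fixed canonical order — for concreteness, sorted by larger endpoint in $\ptspreceq$, ties broken by smaller endpoint — each committed segment being the label of one edge of $\cgraph$, so that the labels on an accepting path are exactly the edge set of $\cmb$. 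After committing a prefix of this sequence we hold a crossing-free subgraph of $\cmb$; being a subgraph of a cycle, it is a family of vertex-disjoint straight-line paths (isolated vertices allowed), in which every degree-$2$ vertex is a saturated interior vertex that is henceforth irrelevant.

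The state after a prefix records precisely what is needed to continue legally: (i) the \emph{active} points, i.e.\ the path endpoints of degree $<2$, where a degree-$0$ isolated point is an active point still wanting two edges; (ii) a bounded-complexity geometric \emph{frontier}, in the style of the triangulation construction, recording which already-committed edges a future segment could possibly cross, so that crossing-freeness of every future segment can be tested against the frontier alone; and (iii) the \emph{linking}, the pairing of the active points into the two endpoints of each path (an isolated point being self-paired). The decisive ingredient is a \emph{shielding lemma}: any two prefixes realising the same state admit exactly the same legal completions to a crossing-free spanning cycle. Its proof rests on the facts that, in $\ptspreceq$-order, every future segment lies to one side of — or is incident to — the frontier, so its interaction with committed edges is entirely captured by the frontier; and that a union of paths encloses no region, so the only remaining, purely combinatorial, obstructions are those recorded by the linking — a segment joining the two endpoints of one and the same path is forbidden, since it would close a proper subcycle, except as the very last segment, and the terminal state must be a single path whose closing edge completes the Hamiltonian cycle. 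A transition out of a state commits one admissible segment, incorporates any points it passes over as isolated active points, and updates the active set (dropping newly saturated endpoints), the frontier, and the linking (merging the two involved paths).

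Since the number of states dominates the size of $\cgraph$, it suffices to count states. The frontier contributes only a $\poly(n)$ factor, just as for triangulations (it is pinned down by $O(1)$ marked points and the shapes of its two chains). For a fixed active set of size $b$ inside a fixed frontier, the admissible linkings form a Catalan-type family of size $O(C^{b}\poly(b))$ for an explicit constant $C$ — the non-crossing ways of pairing path endpoints along the frontier, decorated by a constant number of markers per point. Summing the weight $\binom{n}{b}O(C^{b})$ over $b$ gives a bound of the form $O((1+C)^{n}\poly(n))$, and a careful accounting of $C$ — which is where the precise numerical value enters — delivers size $O(c^{n}n^{3})$ with $c<5.61804$; the $n^{3}$ absorbs the frontier's polynomial factor and the $O(n^{2})$ choices of segment at a transition. (It is telling that for spanning trees the linking must instead be a non-crossing \emph{partition} of a larger active set — there is no degree cap, so saturated vertices cannot be discarded — which is why \cref{thm:st} has the larger base $7.04313$.) Finally the bijection is checked by induction using the shielding lemma and the subcycle bookkeeping: an accepting path yields a crossing-free Hamiltonian cycle, and conversely each such cycle fixes its canonical segment order, hence exactly one accepting path.

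I expect the main obstacle to sit in the second step. The linking representation must be calibrated to be coarse enough to keep the state count below the target base, yet fine enough for the shielding lemma to hold — in particular it must handle degree-$0$ isolated points, which can absorb two future edges, and record from which side of the frontier each path endpoint remains reachable. And the shielding lemma itself is delicate: one must rule out that a future segment slips past the recorded frontier and crosses a deeply buried committed edge, which forces the sweep order and the definition of the frontier chains to be chosen so that they genuinely interlock — the very phenomenon that makes spanning trees and spanning cycles harder than the locally checkable classes such as triangulations.
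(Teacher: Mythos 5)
Your proposal takes the classical route of a sweep DP that stores connectivity explicitly, and this is where it breaks. The ``linking'' --- the pairing of active degree-$<2$ points into path endpoints --- is exactly the piece of state the paper is at pains to \emph{avoid}, because it cannot be encoded by a constant number of colors per point. Your claim that the admissible linkings on an active set of size $b$ form ``a Catalan-type family of size $O(C^b\poly(b))$'' presupposes that the active endpoints lie along a frontier curve in a canonical cyclic order against which the pairing is non-crossing; but here the active points are scattered arbitrarily among the points not yet buried in the lower shadow, and there is no such order. Already the pairings realizable by paths of length one (i.e.\ crossing-free perfect matchings on the active set) number $\Omega^*(2^b)$ for every configuration and $\Omega^*(3^b)$ for some (see \cref{tab:extremalbounds}), so the linking alone multiplies the state count by a factor that, combined with the remaining per-point information needed to certify crossing-freeness (the free/alive/dead data of \cref{subsec:crossingfree}) and the degree counters, gives no route to a base below $5.61804$ --- your final numerical claim is asserted rather than derived, and the natural accounting overshoots it. The ``shielding lemma'' is likewise only announced; you yourself flag both it and the calibration of the linking as open obstacles, so what you have is a plan whose central quantitative step is missing.

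The paper's proof (\cref{subsec:sc}) goes a genuinely different way: it stores \emph{no} connectivity information at all. Units are undirected segments decorated with directed vertical borders; the class $\cmbslsc$ imposes purely local conditions (degree at most $2$ everywhere, degree $2$ on the lower shadow, every normal finite face of out-degree $1$, the last finite face of out-degree $0$), and six colors per point --- degree plus drain-exposure data --- yield a coherent equivalence relation. The global fact that the resulting $2$-regular graph is a \emph{single} cycle rather than a disjoint union of cycles is then extracted from the face structure via \cref{lem:entangledtrees}: the two face-adjacency multigraphs $G_E$ and $G_O$ are root-oriented, and their disjoint, entangled, tangent drawings force both to be trees, which rules out extra cycles. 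The bound $c<5.61804$ finally comes from forbidding a specific color pattern (a drain exposed to the right followed, across points of degree $0$ or $2$, by a drain exposed to the left). This mechanism --- trading explicit connectivity bookkeeping for local face out-degree constraints certified by the entangled-trees lemma --- is the essential idea of the theorem and is absent from your proposal.
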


To get a bound on the running time for computing an explicit representation of any one of the combination graphs $\cgraph$ in the above theorems, it suffices to add another factor $\ptsnum$ to the bound on the size of $\cgraph$.
See \cref{sec:construction} for details.

Given such a representation of $\cgraph$, the corresponding counting problem can be solved in time linear in the size of $\cgraph$ by counting directed $\cgsource$-$\cgsink$ paths using standard graph algorithms.
After removing all dead ends in $\cgraph$, which is also possible in time linear in the size of $\cgraph$, enumeration of the corresponding set $\cmbs$ requires time at most linear in the length of the longest $\cgsource$-$\cgsink$ path per enumerated object.
We will abstain from describing the required algorithms in detail, and instead refer to \cite{SA13} for an example.

Observe that the exponential bases in \Cref{thm:pg,thm:cp,thm:pm,thm:cs,thm:tr} are not larger than the exponential bases of the corresponding lower bounds given in the third row of \Cref{tab:extremalbounds}.
As a corollary we therefore get enumeration algorithms for the sets $\pgs$, $\cps$, $\pms$, $\css$ and $\trs$ whose overall running times are bounded by the length of the output times a polynomial in $n$.
For $\cps$ and $\pms$, a small adaptation, which is described in \cref{sec:enumeration}, results in enumeration algorithms with polynomial delay for each (in particular, the first) output.

\begin{restatable}{theorem}{thmenum}
  The sets $\cps$ and $\pms$ can be enumerated such that the time delay for any output is bounded by a polynomial in $n$.
\end{restatable}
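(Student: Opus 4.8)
The plan is to traverse the combination graphs of \cref{thm:cp} and \cref{thm:pm} \emph{implicitly}, never writing them down in full. Fix one of these graphs, call it $\cgraph$ (over $\cparts$ in the convex partition case, over $\segs$ in the perfect matching case), so that $\cgraph$ represents $\cmbspt{\cparts}$, respectively $\cmbspt{\segs}$; by the bijections recalled in \cref{sec:basics} these are in turn in bijection with $\cps$, respectively $\pms$. Recall the standard fact that all directed $\cgsource$-$\cgsink$ paths in a directed acyclic graph $\cgraph$ can be enumerated with delay polynomial in $\ptsnum$ provided that: (i) every $\cgsource$-$\cgsink$ path has length $O(\poly(\ptsnum))$; (ii) each vertex of $\cgraph$ carries a label of size $O(\poly(\ptsnum))$; and (iii) there is an algorithm that, given the label of a vertex $v$ and an outgoing edge of $v$ returned on the previous call (or a symbol ``start''), computes in time $O(\poly(\ptsnum))$ either the next outgoing edge of $v$ whose head can still reach $\cgsink$, or a symbol ``done''. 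A depth-first search that keeps the current partial $\cgsource$-$\cgsink$ path on a stack and advances it using the algorithm of (iii) then emits one new combination every $O(\poly(\ptsnum))$ steps, the labels along the completed path yielding the corresponding element of $\cmbspt{\cparts}$, respectively $\cmbspt{\segs}$.

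Next I would verify (i), (ii), and the ``list the outgoing edges'' part of (iii) from the construction. Every non-sink edge label is a unit whose point set contains at least one point of $\pts$ that is new along the path, so each $\cgsource$-$\cgsink$ path has at most $\ptsnum + 1$ edges, which is (i). The construction in \cref{sec:framework} and \cref{sec:applications} describes every vertex of $\cgraph$ by combinatorial data of size polynomial in $\ptsnum$, and from such a description one can compute, in time polynomial in $\ptsnum$, the list of outgoing edges together with their labels and the descriptions of their heads; this gives (ii) and lets us produce out-neighbourhoods on demand.

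The crux is the reachability requirement inside (iii): after following an edge we must already be sure that its head reaches $\cgsink$, since otherwise the search, and in particular the wait before the very first output, could take exponential time. For a general combination graph this is precisely what pruning dead ends takes care of, at a cost linear in the exponential size of $\cgraph$; this is why the corollary preceding the theorem only bounds the \emph{total} running time. The small adaptation for $\cps$ and $\pms$ is to arrange the construction so that $\cgraph$ has no dead ends at all, that is, so that every vertex reachable from $\cgsource$ lies on some $\cgsource$-$\cgsink$ path. Then the reachability test in (iii) becomes vacuous and any outgoing edge may be followed.

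To carry this out I would observe that a partial $\cgsource$-$\cgsink$ path of $\cgraph$ corresponds to a crossing-free combination $\cmb$ whose point sets partition an initial segment $\{\pt \in \pts : \pt \ptspreceq \pt_0\}$ of $\pts$ in the order used by the framework, together with the bookkeeping needed to decide crossings with not-yet-chosen units. For convex partitions such a $\cmb$ always extends to an element of $\cmbspt{\cparts}$: adjoin the singleton unit on each remaining point, which lies in $\cparts$ and, having no edges, crosses nothing. For perfect matchings the extension is geometric rather than trivial: one should pick the framework's order so that the points not yet covered all lie in an open halfplane disjoint from the convex hulls of the point sets of the units already chosen, and then invoke the classical fact that every even-sized planar point set in general position admits a crossing-free perfect matching; such a matching of the uncovered points does not cross $\cmb$, and together with $\cmb$ it forms an element of $\cmbspt{\segs}$. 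Checking that the construction of \cref{thm:pm} really can be set up with an order and a vertex labelling that are simultaneously dead-end-free and polynomially computable is the step I expect to demand the most care; the rest is a direct adaptation of the path-counting bookkeeping already behind \cref{thm:cp} and \cref{thm:pm}.
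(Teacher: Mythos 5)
Your reduction to ``enumerate $\cgsource$-$\cgsink$ paths in a DAG with polynomial delay'' is fine as far as it goes, and you correctly isolate the crux: one must be able to follow only edges whose head still reaches $\cgsink$. But the step you defer --- arranging the construction so that $\agraphpt{\cparts}$ and $\agraphpt{\segs}$ have no dead ends --- is not achievable, because the order in which units are added is not a free parameter of the framework. An edge out of $\partclass{\cmb}$ with label $\unit$ exists only when $\unit = \umaxin{\cmb \cup \{\unit\}}$, i.e.\ when $\unit$ becomes the \emph{right-most extreme} element; this rule is what makes the equivalence relation coherent and the counting correct, and it forces genuine dead ends. Concretely, for $\pms$ take $\cmb = \{\unit_0\}$ where $\unit_0$ is the segment joining the two right-most points $\pt_{\ptsnum-1}, \pt_\ptsnum$: the vertex $\partclass{\cmb}$ is reachable from $\cgsource$, yet for any further segment $\unit_1$ on the remaining points both $\unit_0$ and $\unit_1$ are extreme in $\{\unit_0,\unit_1\}$ and $\unit_1 \utorgt \unit_0$, so $\umaxin{\{\unit_0,\unit_1\}} = \unit_0 \neq \unit_1$ and no outgoing edge exists; for $\ptsnum \geq 4$ this vertex is a dead end. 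The same happens for $\cps$ with $\cmb$ the singleton unit on $\pt_\ptsnum$. So your extension arguments (adjoining singletons, or matching the leftover points inside a separating halfplane) produce supersets of $\cmb$ in the target, but not ones of which $\cmb$ is a prefix in canonical order, which is what reachability of $\cgsink$ actually requires. Absent a polynomial-time test for ``does $\partclass{\cmb}$ reach $\cgsink$'' --- which you do not supply and which is not obviously available from the coloring-plus-marking description of a vertex --- the delay before the first output remains exponential.

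The paper resolves this in an entirely different way: it accepts the $\Theta^\ast(2^n)$ preprocessing (building the full combination graph and pruning dead ends) and \emph{hides} it behind output. It defines a Catalan-sized family $\pmss$ of ``easy perfect matchings'' (which lie in both $\pms$ and $\cps$), of size $\Theta(2^n/n^{3/2})$, that can be enumerated and recognized in polynomial time directly from a recursive definition; these are emitted at regular intervals during preprocessing, and afterwards the graph-based enumeration discards any output that belongs to $\pmss$, padding long runs of discards with a reserved half of $\pmss$. If you want to pursue your route instead, the missing ingredient is a proof that dead-end vertices of $\agraphpt{\segs}$ and $\agraphpt{\cparts}$ are recognizable in polynomial time from their encodings; that is a genuinely new claim that would need its own argument.
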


With the exception of $\pms$, the lower bounds are even strictly larger, which means that our algorithms compute the numbers $\pgsnum$, $\cpsnum$, $\cssnum$ and $\trsnum$ with exponential speed-up over any procedures that count by enumerating the respective sets.
For spanning trees it might as well be that the constant $c$ in \cref{thm:st} is smaller than $6.75$, but we were unable to prove it.
For spanning cycles we cannot always hope for such an exponential speed-up because for a set $\pts$ of $\ptsnum$ points in convex position we have $\scsnum = 1$.

\section{The Abstract Framework}\label{sec:framework}

As before, let $\pts$ be a set of $\ptsnum$ points in general position.
Furthermore, assume that no two points share the same $x$-coordinate, which means the points can be ordered as $\pt_1,\dots,\pt_\ptsnum$ from left to right in a unique way.
If a point $\pt_i$ is to the left of another point $\pt_j$, that is, if $i \leq j$, then we write $\pt_i \ptspreceq \pt_j$.
Recall that $\upts{\unit} \subseteq \pts$ is the set of vertex points in any $\unit \in \frags$.
We define $\ulft{\unit} \coloneqq \min_{\ptspreceq}(\upts{\unit})$ and $\urgt{\unit} \coloneqq \max_{\ptspreceq}(\upts{\unit})$, the \emph{left-most} and \emph{right-most point} of $\unit$, respectively.
For any $\unit_1,\unit_2 \in \frags$, if $\urgt{\unit_1} \ptspreceq \ulft{\unit_2}$ holds then we say that $\unit_1$ is \emph{to the left} of $\unit_2$ and we also write $\unit_1 \utorgt \unit_2$.

For each $\unit$ we define $\ulow{\unit} \subseteq \pts$ and $\uupp{\unit} \subseteq \pts$, the \emph{lower} and \emph{upper shadow} of $\unit$, respectively.
The set $\ulow{\unit}$ contains all points in $\pts$ from which a vertical ray shooting upwards intersects the relative interior of some segment of $\unit$.
The set $\uupp{\unit}$ is defined in an analogous way.
Whenever we have either $\upts{\unit_1} \cap \ulow{\unit_2} \neq \emptyset$ or $\uupp{\unit_1} \cap \upts{\unit_2} \neq \emptyset$ for any $\unit_1,\unit_2$, then we say that \emph{$\unit_2$ depends on $\unit_1$} and we write $\unit_1 \udep \unit_2$.
The following lemma aims at making this cryptic definition more intuitive.

Here, and only here, by a \emph{point on $\unit$} we mean either a point in $\upts{\unit}$ or a point in the relative interior of some edge of $\unit$.

\begin{lemma}\label{obs:dependence}
  Let $\unit_1,\unit_2 \in \frags$ be arbitrary.
  Then, $\unit_2$ depends on $\unit_1$ if and only if there exists a point on $\unit_2$ directly (that is, same $x$-coordinate) and strictly above a point on $\unit_1$.
  In particular, if $\unit_1$ and $\unit_2$ are crossing then they are mutually dependent, that is, $\unit_1 \udep \unit_2$ and $\unit_2 \udep \unit_1$.
\end{lemma}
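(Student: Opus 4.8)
The plan is to unwind both the definition of dependence and the geometric characterization directly, treating the two directions of the equivalence separately and then deriving the ``in particular'' clause as an easy consequence.

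For the forward direction, suppose $\unit_1 \udep \unit_2$. By definition this means either $\upts{\unit_1} \cap \ulow{\unit_2} \neq \emptyset$ or $\uupp{\unit_1} \cap \upts{\unit_2} \neq \emptyset$. In the first case, pick a point $\pt \in \upts{\unit_1}$ that also lies in $\ulow{\unit_2}$; by the definition of the lower shadow, the upward vertical ray from $\pt$ meets the relative interior of some edge of $\unit_2$, so there is a point on $\unit_2$ (in that edge's interior) directly and strictly above the point $\pt$ on $\unit_1$ — note $\pt$ is a vertex point of $\unit_1$, hence a point on $\unit_1$. The second case is symmetric: pick $\pt \in \uupp{\unit_1} \cap \upts{\unit_2}$; then $\pt$ is a point on $\unit_2$, and the downward vertical ray from $\pt$ meets the relative interior of some edge of $\unit_1$, giving a point on $\unit_1$ directly and strictly below $\pt$. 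Either way we get the desired configuration. For the converse, suppose there is a point $q$ on $\unit_2$ directly and strictly above a point $q'$ on $\unit_1$. I would split into cases according to whether $q$ is a vertex point of $\unit_2$ or lies in the relative interior of an edge of $\unit_2$, and likewise for $q'$. If $q' \in \upts{\unit_1}$ and $q$ lies in the interior of an edge of $\unit_2$, then $q'$ witnesses $\upts{\unit_1} \cap \ulow{\unit_2} \neq \emptyset$. If $q \in \upts{\unit_2}$ and $q'$ lies in the interior of an edge of $\unit_1$, then $q$ witnesses $\uupp{\unit_1} \cap \upts{\unit_2} \neq \emptyset$. The remaining sub-cases — both endpoints vertices, or both in edge interiors — need a small perturbation argument: if, say, both $q$ and $q'$ are vertex points with $q$ strictly above $q'$ at the same $x$-coordinate, this contradicts the assumption that no two points of $\pts$ share an $x$-coordinate, so this sub-case is vacuous; and if both lie in edge interiors, one slides along the edge of $\unit_2$ a tiny bit to reach a position whose $x$-coordinate is still that of a vertex of $\unit_1$ — actually the cleaner route is to observe that if an interior point of an edge of $\unit_2$ is strictly above an interior point of an edge of $\unit_1$ at a common $x$-coordinate $x_0$, then by continuity this persists on a small $x$-interval around $x_0$, and since general position forbids collinearity but does allow us to find within that interval an $x$-coordinate not equal to any vertex's, we still... hmm — more simply, an interior point of an edge of $\unit_1$ lying strictly below an edge of $\unit_2$ means that the upward ray from \emph{some} vertex endpoint of that edge of $\unit_1$ also hits the interior of an edge of $\unit_2$, or we can appeal to the fact that the shadow of $\unit_2$ is an open set in the $x$-projection and so contains the $x$-coordinate of an endpoint of the relevant edge of $\unit_1$; I would phrase this last step carefully.

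For the ``in particular'' clause: if $\unit_1$ and $\unit_2$ are crossing, then there is an edge $e_1$ of $\unit_1$ and a distinct edge $e_2$ of $\unit_2$ whose relative interiors meet at a point $r$. Since $e_1 \neq e_2$ and the points are in general position, $e_1$ and $e_2$ are not collinear, so in a neighborhood of $r$ the two segments separate: on one side of $r$ (in $x$-coordinate) the point of $e_2$ is strictly above the point of $e_1$, and on the other side it is strictly below. Both configurations are of the form required by the equivalence just proved — the first gives $\unit_1 \udep \unit_2$ and the second gives $\unit_2 \udep \unit_1$ — so the two units are mutually dependent.

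The main obstacle is the converse direction, specifically handling the sub-cases where the witnessing ``point on $\unit_i$'' is an edge-interior point rather than a vertex: the definitions of $\ulow$ and $\uupp$ are stated in terms of vertex points of $\pts$ only, so one must argue that an edge-interior witness can be ``pushed'' to a vertex witness. The right tool is the openness of vertical shadows together with general position (no two points sharing an $x$-coordinate), and I would want to state that reduction as a small self-contained observation before the case analysis to keep the proof clean.
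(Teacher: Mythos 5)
Your forward direction and the ``in particular'' clause are fine, and your reduction of the converse to the subcase where both witnessing points lie in edge interiors (the mixed vertex/interior cases via the shadow definitions, the vertex--vertex case vacuous because no two points of $\pts$ share an $x$-coordinate) is exactly how the paper proceeds. The genuine gap is the subcase you flag but never close: an edge $e_1$ of $\unit_1$ and an edge $e_2$ of $\unit_2$ with an interior point of $e_2$ directly and strictly above an interior point of $e_1$. Both of your proposed fixes fail. It is not true that the upward ray from \emph{some} endpoint of $e_1$ must hit the interior of an edge of $\unit_2$: take $e_1$ long and nearly horizontal and $e_2$ a short edge hovering above the middle of $e_1$; neither endpoint of $e_1$ lies below $e_2$, and the interval of $x$-coordinates over which $e_2$ passes above $e_1$ contains the $x$-coordinate of no endpoint of $e_1$, so the ``openness of the shadow in the $x$-projection'' appeal collapses for the same reason. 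In that configuration $\unit_1 \udep \unit_2$ holds only through the \emph{other} clause of the definition, witnessed by an endpoint of $e_2$ lying in $\uupp{\unit_1}$.

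What is needed --- and what the paper's proof supplies --- is a two-sided dichotomy rather than a one-sided push to an endpoint of $e_1$. Since $e_1$ and $e_2$ are distinct segments in general position, they fail to intersect on at least one side of the common $x$-coordinate; say, after reflecting if necessary, to the left. Walking both edges leftward from that $x$-coordinate they remain disjoint with $e_2$ above $e_1$, so whichever edge terminates first (i.e., whose left endpoint has the larger $x$-coordinate) produces a vertex witness: if $e_1$ ends first, its left endpoint lies in $\upts{\unit_1} \cap \ulow{\unit_2}$; if $e_2$ ends first, its left endpoint lies in $\uupp{\unit_1} \cap \upts{\unit_2}$. Either alternative gives $\unit_1 \udep \unit_2$. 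Your write-up needs this case split on which of the two edges ends first; as it stands, the central step of the converse is missing.
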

\hspace*{-\parindent}%
\begin{minipage}{0.8\textwidth}
\begin{proof}
    The ``only if'' is immediate by definition of $\unit_1 \udep \unit_2$.
    For the ``if'', let us fix two points on $\unit_1$ and $\unit_2$, respectively, the one on $\unit_2$ directly and strictly above the one on $\unit_1$.
    If either of those points is contained in $\upts{\unit_1}$ or $\upts{\unit_2}$, respectively, the conclusion $\unit_1 \udep \unit_2$ is again immediate.
    Otherwise, let $e_1$ and $e_2$ be the edges of $\unit_1$ and $\unit_2$ that contain the two respective fixed points in their relative interiors.
    Without loss of generality we assume that $e_1$ and $e_2$ diverge and thus do not intersect towards the left, as illustrated on the right.
    In case 1, the left endpoint of $e_1$ is below $e_2$, which means $\upts{\unit_1} \cap \ulow{\unit_2} \neq \emptyset$ and hence $\unit_1 \udep \unit_2$.
    In case 2, the left endpoint of $e_2$ is above $e_1$, which means $\uupp{\unit_1} \cap \upts{\unit_2} \neq \emptyset$ and hence again $\unit_1 \udep \unit_2$.
\end{proof}
\end{minipage}
\begin{minipage}{0.2\textwidth}
  \centering
  \begin{tikzpicture}
  \begin{scope}[yshift=70]
    \draw [dashed] (0,0) -- ++(190:0.5) coordinate [label=below:{$e_1$}] (i1);
    \draw [solid]  (i1) -- ++(190:0.666) node [point] (e1) {};
    \coordinate (s2) at (0,0.3);
    \draw [dashed] (s2) -- ++(170:0.5) coordinate [label=above:{$e_2$}] (i2);
    \draw [solid]  (i2) -- ++(170:1) node [point] (e2) {};
    \draw [dotted] (e2) -- ++(270:1.0);
    \node (h) at (-0.6,-0.9) {\footnotesize case 1};
  \end{scope}
  \begin{scope}[yshift=0]
    \draw [dashed] (0,0) -- ++(190:0.5) coordinate [label=below:{$e_1$}] (i1);
    \draw [solid]  (i1) -- ++(190:1) node [point] (e1) {};
    \draw [dotted] (e1) -- ++(90:1.0);
    \coordinate (s2) at (0,0.3);
    \draw [dashed] (s2) -- ++(170:0.5) coordinate [label=above:{$e_2$}] (i2);
    \draw [solid]  (i2) -- ++(170:0.666) node [point] (e2) {};
    \node (h) at (-0.6,-0.9) {\footnotesize case 2};
  \end{scope}
\end{tikzpicture}
\end{minipage}
\vspace{\belowdisplayskip}

To ease notation throughout, we use $\cpts{\cmb}$ and $\clow{\cmb}$ to denote the sets $\bigcup_{\unit \in \cmb} \upts{\unit}$ and $\bigcup_{\unit \in \cmb} \ulow{\unit}$, respectively, where $\cmb$ is any combination.

Besides giving an intuition for the dependence relation, the above lemma turns out to be absolutely crucial for everything that follows.
It suggests a safe and practical way of adding a new segment $\unit \in \segs$, say, to a crossing-free combination $\cmb$ of $\segs$.
Safe means that we do not introduce any crossings, that is, $\cmb \cup \{\unit\}$ is itself crossing-free.
Practical means that we may, to a great extent, remain ignorant of the exact composition of $\cmb$.
Indeed, as long as we know the sets $\cpts{\cmb}$ and $\clow{\cmb}$, and provided that we choose $\unit$ such that $\upts{\unit} \cap \clow{\cmb} = \emptyset$ and $\uupp{\unit} \cap \cpts{\cmb} = \emptyset$, then no element of $\cmb$ can possibly depend on $\unit$ and hence, by \cref{obs:dependence}, $\cmb \cup \{\unit\}$ is crossing-free.

By extension, this suggests a way of constructing a combination $\cmb$ by adding all elements in a succession where earlier occurrences do not depend on later occurrences.
An apparent disadvantage is that this will not work for every conceivable subset $\units$ of $\frags$ and every crossing-free combination $\cmb$ of $\units$.
Most importantly, for it to work, there must be no circular dependencies among elements of $\cmb$.
In the following we formalize this requirement.

Let $\units \subseteq \frags$ and let $\cmb \subseteq \units$ be arbitrary.
An element $\unit$ of $\cmb$ is \emph{extreme (in $\cmb$)} if $\unit \undep \unit'$ holds for all other elements $\unit'$ of $\cmb$.
If it exists, the \emph{right-most extreme element} in $\cmb$ is the unique extreme element $\unit$ in $\cmb$ which satisfies $\unit' \utorgt \unit$ for all other extreme elements $\unit'$ in $\cmb$.

\begin{definition}\label{def:serializable2}
  Let $\units \subseteq \frags$ and let $\cmbs$ be a set of combinations of $\units$.
  We call $\cmbs$ \emph{serializable} if it is non-empty and if every non-empty $\cmb \in \cmbs$ contains a right-most extreme element, which we then denote by $\umaxin{\cmb}$.
  Additionally, $\cmb \setminus \{\umaxin{\cmb}\}$ must itself be an element of $\cmbs$.
\end{definition}

Let $\cmbs$ be a serializable set of combinations of $\units$, and let $\cmb,\cmb' \in \cmbs$.
We will often write $\ctarrow{\cmb}{\cmb'}{\unit}$, which stands for $\cmb = \cmb' \setminus \{\unit\}$ and $\unit = \umaxin{\cmb'}$.
Observe that $\cmbs$ naturally induces a directed and acyclic graph (actually, a tree) with vertex set $\cmbs$ and edges with labels from the set $\units$.
Indeed, whenever $\ctarrow{\cmb}{\cmb'}{\unit}$ holds we simply add an edge from vertex $\cmb$ to vertex $\cmb'$ with label $\unit$.
A combination graph over $\units$ that represents an arbitrary subset of $\cmbs$ is obtained by defining $\cgsource \coloneqq \emptyset$ and by adding appropriate unlabeled edges which end in an additional vertex $\cgsink$.
However, the resulting combination graph is too large as its size is $\Theta(|\cmbs|)$.
In the following we show how to compress it without losing too many of its nice properties.

\begin{definition}\label{def:coherent}
  Let $\cmbs$ be serializable.
  An equivalence relation $\parteq$ on $\cmbs$ is called \emph{coherent} if $\cmb_1 \parteq \cmb_2$ implies the following.
If $\cmb_1 \xrightarrow{\unit} \cmb_1'$ holds then also $\cmb_2 \xrightarrow{\unit} \cmb_2'$ for some $\cmb_2' \parteq \cmb_1'$. 
\end{definition}

Intuitively, to make our combination graph smaller we would like to merge two vertices $\cmb_1$ and $\cmb_2$.
This makes sense only if the subtrees rooted at $\cmb_1$ and $\cmb_2$ are identical when looking at edge labels.
As will be proved later, if $\cmb_1 \parteq \cmb_2$ holds, coherency of $\parteq$ enforces precisely what we want.

In the remainder of the paper we will always deal with a serializable set $\cmbs$ of combinations of some $\units \subseteq \frags$, and an equivalence relation $\parteq$ on $\cmbs$.
For any $\cmb \in \cmbs$ we then define the equivalence class $\partclass{\cmb} \coloneqq \{\cmb' \in \cmbs \colon \cmb' \parteq \cmb\}$, where the relation $\parteq$ will be obvious from the context.
We also define the set $\quotient{\cmbs}{\parteq} \coloneqq \{\partclass{\cmb} \colon \cmb \in \cmbs\}$ of all equivalence classes.

\begin{definition}
  A \emph{combination problem (on $\pts$)} is a tuple $(\units,\cmbs,\parteq,\target)$ where $\units \subseteq \frags$, $\cmbs$ is a serializable set of combinations of $\units$, $\parteq$ is a coherent equivalence relation on $\cmbs$, and $\target$ is a subset of $\quotient{\cmbs}{\parteq}$.
\end{definition}

As will be described below, the set $\target$ allows us to specify the subset of $\cmbs$ we are actually interested in.

Every combination problem $(\units,\cmbs,\parteq,\target)$ induces a corresponding combination graph $\cgraph = \cgraph(\units,\cmbs,\parteq,\target)$ over $\units$ as follows.
The vertices of $\cgraph$ are all equivalence classes in $\quotient{\cmbs}{\parteq}$ plus one extra vertex denoted by $\cgsink$.
The vertex $\partclass{\emptyset}$ is also referred to by $\cgsource$.
Existence of $\cgsource$ follows from $\emptyset \in \cmbs$, an easy consequence of serializability of $\cmbs$.
Whenever $\ctarrow{\cmb}{\cmb'}{\unit}$ holds, then we add an edge from vertex $\partclass{\cmb}$ to vertex $\partclass{\cmb'}$ with label $\unit$.
We do not, however, add the same labeled edge more than once.
Lastly, for every vertex $\partclass{\cmb}$ in $\target$ we add an unlabeled edge which starts in $\partclass{\cmb}$ and ends in $\cgsink$.
Observe that $\cgraph$ does not contain any directed cycles because given any such cycle, it would be possible to construct an infinite sequence $(\ctarrow{\cmb}{\cmb'}{\unit},\ctarrow{\cmb'}{\cmb''}{\unit'},\dots)$, which cannot exist.

Recall that $\ctarrow{\cmb}{\cmb'}{\unit}$ stands for $\cmb = \cmb' \setminus \{\unit\}$ and $\unit = \umaxin{\cmb'}$.
Additionally, we write $\cgarrow{\cmb}{\cmb'}{\unit}$ if there exists an edge in $\cgraph$ from vertex $\partclass{\cmb}$ to vertex $\partclass{\cmb'}$ with label $\unit$.
The following observations are straight-forward consequences of coherency of $\parteq$ and of the way $\cgraph$ is constructed.

\begin{observation}\label{obs:cgproperties}
  The combination graph $\cgraph = \cgraph(\units,\cmbs,\parteq,\target)$, defined as above, is
  \begin{itemize}
    \item
      Complete: if $\ctarrow{\cmb}{\cmb'}{\unit}$, then
      $\cgarrow{\cmb}{\cmb'}{\unit}$,
    \item
      Sound: if $\cgarrow{\cmb}{\cmb'}{\unit}$, then $\ctarrow{\cmb}{\cmb''}{\unit}$
      for some $\cmb'' \parteq \cmb'$,
    \item
      Deterministic: if $\cgarrow{\cmb}{\cmb'}{\unit}$ and $\cgarrow{\cmb}{\cmb''}{\unit}$,
      then $\partclass{\cmb'} = \partclass{\cmb''}$.
  \end{itemize}
\end{observation}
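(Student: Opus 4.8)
The plan is to verify the three properties one by one, directly from the definition of $\cgraph = \cgraph(\units,\cmbs,\parteq,\target)$, observing that completeness is a tautology of the construction, that determinism reduces to the fact that $\cmb'$ is uniquely recoverable from $\cmb$ and $\unit$ whenever $\ctarrow{\cmb}{\cmb'}{\unit}$, and that coherency of $\parteq$ is invoked only once, precisely to establish soundness. The guiding picture is that the labeled edge relation $\cgarrow{\cdot}{\cdot}{\cdot}$ on equivalence classes is, by construction, nothing but the image of the relation $\ctarrow{\cdot}{\cdot}{\cdot}$ on combinations under the quotient map $\cmb \mapsto \partclass{\cmb}$; soundness is the assertion that nothing spurious is created by this passage to the quotient.

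For completeness, I would simply note: if $\ctarrow{\cmb}{\cmb'}{\unit}$ holds, then by the defining rule of $\cgraph$ an edge from $\partclass{\cmb}$ to $\partclass{\cmb'}$ with label $\unit$ is added (possibly it was already present, but that does not matter), so $\cgarrow{\cmb}{\cmb'}{\unit}$. For soundness, suppose $\cgarrow{\cmb}{\cmb'}{\unit}$, i.e.\ there is an edge in $\cgraph$ from $\partclass{\cmb}$ to $\partclass{\cmb'}$ labeled $\unit$. Such an edge is present only because it was inserted for some representatives, that is, there exist $\cmb_0 \parteq \cmb$ and $\cmb_0' \parteq \cmb'$ with $\ctarrow{\cmb_0}{\cmb_0'}{\unit}$, which is exactly $\cmb_0 \xrightarrow{\unit} \cmb_0'$. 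Applying coherency to the pair $\cmb_0 \parteq \cmb$ (using symmetry of $\parteq$) yields $\cmb \xrightarrow{\unit} \cmb''$ for some $\cmb'' \parteq \cmb_0'$, and then $\cmb'' \parteq \cmb'$ by transitivity; this is precisely $\ctarrow{\cmb}{\cmb''}{\unit}$ with $\cmb'' \parteq \cmb'$, as required.

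For determinism, suppose $\cgarrow{\cmb}{\cmb'}{\unit}$ and $\cgarrow{\cmb}{\cmb''}{\unit}$. By soundness there are $\cmb_1 \parteq \cmb'$ with $\ctarrow{\cmb}{\cmb_1}{\unit}$ and $\cmb_2 \parteq \cmb''$ with $\ctarrow{\cmb}{\cmb_2}{\unit}$. But $\ctarrow{\cmb}{\cmb_i}{\unit}$ unfolds to $\cmb = \cmb_i \setminus \{\unit\}$ and $\unit = \umaxin{\cmb_i}$, which forces $\cmb_i = \cmb \cup \{\unit\}$ for $i = 1,2$; hence $\cmb_1 = \cmb_2$ and therefore $\partclass{\cmb'} = \partclass{\cmb_1} = \partclass{\cmb_2} = \partclass{\cmb''}$. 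Since every step above is a short unwinding of definitions, I do not anticipate a real obstacle here; the only point requiring care — and the only place the hypotheses of \cref{def:coherent} are used — is the soundness argument, where one must resist concluding $\ctarrow{\cmb}{\cmb'}{\unit}$ directly (the edge need not have been inserted for the representative $\cmb$ itself) and instead route through a witnessing representative and coherency.
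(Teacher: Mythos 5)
Your proposal is correct and is exactly the intended argument: the paper states \cref{obs:cgproperties} without proof, calling it a straightforward consequence of coherency and of the construction of $\cgraph$, and your unwinding (completeness by the edge-insertion rule, soundness by routing through a witnessing representative and coherency, determinism by noting $\cmb_i = \cmb \cup \{\unit\}$ is forced) fills in precisely those details. The point you flag — that soundness must not conclude $\ctarrow{\cmb}{\cmb'}{\unit}$ for the given representatives themselves — is indeed the only subtlety, and you handle it correctly.
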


By induction it can now be shown that for every vertex $\partclass{\cmb}$ there is a natural bijection from $\partclass{\cmb}$ to the set of directed $\cgsource$-$\partclass{\cmb}$ paths in $\cgraph$, which implies the following lemma.

\begin{restatable}{lemma}{represent}
  \label{lem:represent}
  Let $(\units,\cmbs,\parteq,\target)$ be a combination problem and let $\cgraph$ be the corresponding combination graph.
  Then, $\cgraph$ represents $\bigcup\target$ and the size of $\cgraph$ is at most $O(|\quotient{\cmbs}{\parteq}| \cdot |\units|)$.
\end{restatable}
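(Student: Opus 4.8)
The plan is to prove the two claims in \cref{lem:represent} separately: first the size bound, which is essentially bookkeeping, and then the representation property, which is the substantive part and will be established by induction along the lines already sketched in the paragraph preceding the lemma.

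For the size bound, observe that the vertex set of $\cgraph$ is $\quotient{\cmbs}{\parteq}$ together with the single extra vertex $\cgsink$, so the number of vertices is $|\quotient{\cmbs}{\parteq}| + 1$. Every labeled edge of $\cgraph$ leaves some vertex $\partclass{\cmb}$ with a label $\unit \in \units$, and by the \emph{Deterministic} part of \cref{obs:cgproperties} there is at most one such edge for each pair $(\partclass{\cmb}, \unit)$; hence there are at most $|\quotient{\cmbs}{\parteq}| \cdot |\units|$ labeled edges. The unlabeled edges all end in $\cgsink$ and there is at most one per vertex in $\target$, contributing at most $|\quotient{\cmbs}{\parteq}|$ more. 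Summing these gives the claimed $O(|\quotient{\cmbs}{\parteq}| \cdot |\units|)$ bound on the size.

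For the representation claim, I would prove the stronger statement that for \emph{every} $\cmb \in \cmbs$ there is a bijection $\Phi_{\partclass{\cmb}}$ between the elements of the equivalence class $\partclass{\cmb}$ and the directed $\cgsource$-$\partclass{\cmb}$ paths in $\cgraph$, such that a combination $\cmb' \in \partclass{\cmb}$ is sent to a path whose multiset of labels is exactly $\cmb'$ (as a set, since serializability forces all labels along a path to be distinct). The induction is on $|\cmb|$. In the base case $\cmb = \emptyset$: the only $\cgsource$-$\cgsource$ path is the trivial one, and $\partclass{\emptyset} = \{\emptyset\}$ by serializability (any nonempty combination has an $\umaxin{\cdot}$, and coherency together with the construction ensures it is not merged with $\emptyset$ — more carefully, one checks $\emptyset$ is the unique sink-of-incoming-edges-free class, or simply that $\partclass{\emptyset}$ contains only $\emptyset$ because every $\cmb' \parteq \emptyset$ with $\cmb' \neq \emptyset$ would, by coherency applied to $\cmb' \xrightarrow{\umaxin{\cmb'}} \cmb'$, force $\emptyset$ to have an outgoing edge with that label, contradicting that $\emptyset \setminus \{\cdot\}$ is undefined). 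For the inductive step, take $\cmb$ with $|\cmb| \geq 1$ and set $\unit \coloneqq \umaxin{\cmb}$, $\cmb^- \coloneqq \cmb \setminus \{\unit\} \in \cmbs$. By \emph{Completeness} there is an edge $\cgarrow{\cmb^-}{\cmb}{\unit}$. Given any $\cgsource$-$\partclass{\cmb}$ path, its last edge is some $\cgarrow{\cmb_1}{\cmb_2}{\unit'}$ with $\partclass{\cmb_2} = \partclass{\cmb}$; by \emph{Soundness} there is $\cmb_3 \parteq \cmb$ with $\cmb_3 = \cmb_1' \setminus \{\unit'\}$ and $\unit' = \umaxin{\cmb_3}$; coherency (it is an equivalence relation, so $\cmb \parteq \cmb_3$ too) propagates back so that the prefix of the path is a $\cgsource$-$\partclass{\cmb^-}$ path, and one invokes the induction hypothesis on $\cmb^-$ to reconstruct a unique preimage; conversely any preimage composed with the edge $\cgarrow{\cmb^-}{\cmb}{\unit}$ gives a valid path. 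Checking that this correspondence is a bijection and is label-preserving is then a matter of unwinding the definitions, using \emph{Determinism} to rule out ambiguity in the last step. Applying this to every $\partclass{\cmb} \in \target$ and observing that the unlabeled edges into $\cgsink$ biject $\cgsource$-$\cgsink$ paths with $\bigcup_{\partclass{\cmb} \in \target} \partclass{\cmb} = \bigcup \target$ finishes the proof.

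The main obstacle is the inductive step for the representation claim, and specifically the need to transfer information \emph{across} equivalence classes: a $\cgsource$-$\partclass{\cmb}$ path might a priori arrive via a representative $\cmb_3 \parteq \cmb$ that is not literally $\cmb$, so one must use the \emph{Soundness} property together with the fact that $\parteq$ is an equivalence (symmetry and transitivity) to argue that $\umaxin{\cmb_3}$ behaves compatibly and that the truncation $\cmb_3 \setminus \{\umaxin{\cmb_3}\}$ lies in $\partclass{\cmb^-}$ — this is exactly where coherency is doing the heavy lifting, and care is needed to make sure the bijection is well-defined independently of which representative a path happens to realize. The rest — the size bound and the base case — is routine.
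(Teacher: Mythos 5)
Your size bound is correct and matches the paper's argument. The representation part, however, contains a genuine gap in the inductive step. You claim that for an arbitrary $\cgsource$-$\partclass{\cmb}$ path, soundness plus coherency let you conclude that the truncation $\cmb_3 \setminus \{\umaxin{\cmb_3}\}$ of the representative $\cmb_3 \parteq \cmb$ realized by that path lies in $\partclass{\cmb^-}$ where $\cmb^- = \cmb \setminus \{\umaxin{\cmb}\}$, so that the path's prefix is a $\cgsource$-$\partclass{\cmb^-}$ path. This is false: coherency only propagates \emph{forward} (it constrains outgoing edges of equivalent combinations, i.e., extensions) and says nothing about whether deleting the right-most extreme elements of two equivalent combinations yields equivalent combinations. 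Concretely, for the relation on $\cmbscf{\segs}$ from \cref{subsec:crossingfree}, the three combinations in \cref{fig:encoding_crossingfree} are equivalent but have $3$, $6$ and $7$ segments; removing their respective right-most extreme elements produces combinations with different point sets and lower shadows, hence inequivalent ones. A $\cgsource$-$\partclass{\cmb}$ path may therefore enter $\partclass{\cmb}$ through a penultimate vertex different from $\partclass{\cmb^-}$, and your converse map (``compose a preimage with the edge $\cgarrow{\cmb^-}{\cmb}{\unit}$'') misses all such paths. Relatedly, the induction on $|\cmb|$ is not well-founded for this purpose: since equivalent combinations can have different cardinalities, the penultimate vertex $\partclass{\cmb_1}$ of a path into $\partclass{\cmb}$ may have only representatives of size at least $|\cmb|$, so the induction hypothesis need not be available there.

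The repair is what the paper does: induct over a topological ordering of the vertices of $\cgraph$ (acyclicity is already established) and decouple the bijection into three per-vertex claims that never require backward propagation --- (i) every representative of $\partclass{\cmb}$ is the label set of some $\cgsource$-$\partclass{\cmb}$ path, via completeness applied to $\ctarrow{\cmb^-}{\cmb}{\unit}$; (ii) the label set of every $\cgsource$-$\partclass{\cmb}$ path is \emph{some} representative of $\partclass{\cmb}$, via soundness applied to whatever its last edge happens to be; and (iii) the labels on a path realizing $\cmb$ appear in the canonical order of $\cmb$, which together with determinism forces uniqueness of the realizing path and hence yields the bijection. Your base-case justification also needs fixing: coherency constrains outgoing, not incoming, arrows, so it cannot directly show $\partclass{\emptyset} = \{\emptyset\}$; that fact instead follows from completeness together with acyclicity (a nonempty $\cmb \parteq \emptyset$ would produce a closed walk at $\cgsource$), or is simply not needed once one argues per path as in (i)--(iii).
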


\begin{proof}
  The upper bound on the size of $\cgraph$ holds because it has exactly $|\quotient{\cmbs}{\parteq}| + 1$ vertices and, since it is deterministic (as in \Cref{obs:cgproperties}) and has no duplicate labeled edges, each vertex has at most $|\units| + 1$ outgoing edges.
  It only remains to show that $\cgraph$ represents $\bigcup\target$.
  
  For each directed path $\cgpath$ in $\cgraph$ let $\cmbntr{\cgpath}$ denote the set of all labels on $\cgpath$, that is, $\cmbntr{\cgpath}$ is the combination of $\units$ corresponding to the path $\cgpath$.
  Moreover, for any $\cmb \in \cmbs$, the \emph{canonical order} of $\cmb$ is the unique sequence over all elements in $\cmb$ which is obtained by successively removing right-most extreme elements from $\cmb$ and then putting the removed elements in reverse order.

  For every vertex $\partclass{\cmb}$ in $\cgraph$ we now claim the following.
  Firstly, there exists a directed $\cgsource$-$\partclass{\cmb}$ path $\cgpath$ in $\cgraph$ with $\cmbntr{\cgpath} = \cmb$, irrespective of the choice of the representative $\cmb$.
  Secondly, for every directed $\cgsource$-$\partclass{\cmb}$ path $\cgpath$ we have that $\cmbntr{\cgpath} \in \partclass{\cmb}$.
  Thirdly, the labels on any directed $\cgsource$-$\partclass{\cmb}$ path $\cgpath$ with $\cmbntr{\cgpath} = \cmb$ appear in canonical order of $\cmb$.
  The proof of these claims is by induction over an arbitrary topological ordering of the vertices in $\cgraph$.
  After that, the lemma follows by combining the claims with the fact that $\cgraph$ is deterministic.

  For the first part of the claim, if $\cmb = \emptyset$, then $\partclass{\cmb} = \cgsource$ and the $\cgsource$-$\cgsource$ path without edges works.
  Otherwise, if $\cmb \neq \emptyset$, then we know that $\ctarrow{\cmb'}{\cmb}{\unit}$ for some $\cmb'$ and $\unit$.
  By completeness, there is an edge $\cgarrow{\cmb'}{\cmb}{\unit}$ in $\cgraph$ and a $\cgsource$-$\partclass{\cmb}$ path $\cgpath$ with $\cmbntr{\cgpath} = \cmb$ can be constructed from the $\cgsource$-$\partclass{\cmb'}$ path $\cgpath'$ with $\cmbntr{\cgpath'} = \cmb'$, which exists by induction.

  For the second part of the claim, let $\cgpath$ be a directed $\cgsource$-$\partclass{\cmb}$ path in $\cgraph$.
  If $\cgpath$ is of length zero then $\partclass{\cmb} = \cgsource$ and $\cmbntr{\cgpath} = \emptyset \in \partclass{\cmb}$.
  Otherwise, let $\cgpath'$ be the $\cgsource$-$\partclass{\cmb'}$ path that is obtained by removing the last edge $\cgarrow{\cmb'}{\cmb}{\unit}$ from $\cgpath$.
  By induction, and without loss of generality, $\cmbntr{\cgpath'} = \cmb'$.
  By soundness, we have $\ctarrow{\cmb'}{\cmb''}{\unit}$ for some $\cmb'' \parteq \cmb$ and hence also $\cmbntr{\cgpath} = \cmb'' \in \partclass{\cmb}$.

  For the third part of the claim, simply observe that in the previous paragraph, the last label on the path $\cgpath$ with $\cmbntr{\cgpath} = \cmb''$ is the right-most extreme element in $\cmb''$.
  Hence, this last claim also follows by induction. 
\end{proof}

\section{Three Simple Applications}\label{sec:applications}

We present three generic kinds of combination problems.
They directly correspond to the three types of crossing-free combinations depicted earlier in \cref{fig:combinations}.

\subsection{All Crossing-free Combinations}\label{subsec:crossingfree}
The main aim of this subsection is to give a proof of \cref{thm:pg}.
Still, some of the following insights are fairly general and can be used in many different settings.

Let us fix a set of units $\units \subseteq \frags$.
We make the explicit assumption that $\cmbscf{\units}$, the set of crossing-free combinations of $\units$, is serializable.
Then, as follows, a combination $\cmb$ in $\cmbscf{\units}$ can be described by a coloring of the point set $\pts$ with three colors $\ptfree,\ptalive,\ptdead$ and a special marking, e.g., $\ptaliveM$, on one of the points.
A point $\pt \in \pts$ is given the color $\ptdead$ if $\pt \in \clow{\cmb}$, it is given the color $\ptalive$ if $\pt \in \cpts{\cmb} \setminus \clow{\cmb}$, and it is given the color $\ptfree$ in all other cases.
The special marking is put on the left-most point of $\umaxin{\cmb}$.
\Cref{fig:encoding_crossingfree} shows, for the special case $\units = \segs$, that different crossing-free combinations can have identical such descriptions.
Whenever that is the case, we consider two combinations equivalent.
More formally, we put $\cmb \parteq \cmb'$ if and only if\footnote{Of course, the third condition only makes sense if $\cmb$ and $\cmb'$ are non-empty.}
\label{def:cf:eq}
\begin{itemize}
  \item $\clow{\cmb} = \clow{\cmb'}$,
  \item $\cpts{\cmb} \setminus \clow{\cmb} = \cpts{\cmb'} \setminus \clow{\cmb'}$,
  \item $\ulft{\umaxin{\cmb}} = \ulft{\umaxin{\cmb'}}$.
\end{itemize}

\begin{figure}[b]
  \begin{center}
    \begin{tikzpicture}
  \newcommand\exampleCoords{
    \coordinate (c0) at (0.0,1.0);
    \coordinate (c1) at (0.15,0.4);
    \coordinate (c2) at (0.3,1.7);
    \coordinate (c3) at (0.45,0.9);
    \coordinate (c4) at (0.6,1.3);
    \coordinate (c5) at (0.75,0.1);
    \coordinate (c6) at (0.9,1.9);
    \coordinate (c7) at (1.05,0.9);
    \coordinate (c8) at (1.2,0.3);
    \coordinate (c9) at (1.35,1.2);
    \coordinate (c10) at (1.5,0.8);
  }
  \newcommand\examplePoints{
    \node[alive] (p0) at (c0) {};
    \node[dead] (p1) at (c1) {};
    \node[free] (p2) at (c2) {};
    \node[free] (p3) at (c3) {};
    \node[alive] (p4) at (c4) {};
    \node[marked] (m) at (c4) {};
    \node[dead] (p5) at (c5) {};
    \node[free] (p6) at (c6) {};
    \node[dead] (p7) at (c7) {};
    \node[dead] (p8) at (c8) {};
    \node[alive] (p9) at (c9) {};
    \node[alive] (p10) at (c10) {};
  }
  \newcommand\exampleLabel[1]{
    \node (label) at (0.55,-0.4) {\footnotesize #1};
  }
  \begin{scope}[xshift=0]
    \exampleCoords
    \draw (c0) -- (c5);
    \draw (c4) -- (c9);
    \draw (c8) -- (c10);
    \examplePoints
  \end{scope}
  \begin{scope}[xshift=100]
    \exampleCoords
    \draw (c0) -- (c8);
    \draw (c1) -- (c8);
    \draw (c4) -- (c10);
    \draw (c4) -- (c9);
    \draw (c4) -- (c8);
    \draw (c5) -- (c8);
    \examplePoints
  \end{scope}
  \begin{scope}[xshift=200]
    \exampleCoords
    \draw (c0) -- (c1);
    \draw (c0) -- (c5);
    \draw (c1) -- (c5);
    \draw (c4) -- (c5);
    \draw (c4) -- (c9);
    \draw (c5) -- (c9);
    \draw (c8) -- (c10);
    \examplePoints
  \end{scope}
\end{tikzpicture}
  \end{center}
  \caption{
    Three elements of $\cmbscf{\segs}$ that are considered equivalent.
  }
  \label{fig:encoding_crossingfree}
\end{figure}

The goal now is to prove that $\parteq$ is coherent, as in \cref{def:coherent}.
Unfortunately, this endeavor is doomed to fail because for some contrived choices of $\units$ we can have $\ctarrow{\cmb}{\cmb'}{\unit}$ and, at the same time, $\cmb \parteq \cmb'$. 
In the language of combination graphs this means that we would have to introduce loops, which leads to a potentially infinite number of source-sink paths.
To avoid this problem, we require for any $\cmb,\cmb' \in \cmbscf{\units}$ and any $\unit \in \units$ that $\ctarrow{\cmb}{\cmb'}{\unit}$ implies $\cmb \not\parteq \cmb'$.
If this additional requirement is met, then we say that $\cmbscf{\units}$ is \emph{progressive}.

\begin{lemma}\label{lem:cf:coherent}
  Let $\units \subseteq \frags$ with $\cmbscf{\units}$ both serializable and progressive.
  Then, the equivalence relation $\parteq$ on $\cmbscf{\units}$, as defined above, is coherent.
\end{lemma}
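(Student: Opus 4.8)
The plan is to fix $\cmb_1 \parteq \cmb_2$ and an edge $\ctarrow{\cmb_1}{\cmb_1'}{\unit}$ (so $\cmb_1' = \cmb_1 \cup \{\unit\}$, $\unit \notin \cmb_1$, and $\unit = \umaxin{\cmb_1'}$) and to show that $\cmb_2' \coloneqq \cmb_2 \cup \{\unit\}$ witnesses coherency, i.e.\ $\ctarrow{\cmb_2}{\cmb_2'}{\unit}$ and $\cmb_2' \parteq \cmb_1'$. If $\cmb_1 = \emptyset$ then the first two conditions defining $\parteq$ force $\cmb_2 = \emptyset$, and then $\cmb_1' = \cmb_2' = \{\unit\}$ with nothing left to do; so assume from now on that $\cmb_1, \cmb_2 \neq \emptyset$.

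The workhorse will be the two \emph{shadow conditions} $\upts{\unit} \cap \clow{\cmb_2} = \emptyset$ and $\uupp{\unit} \cap \cpts{\cmb_2} = \emptyset$. To obtain them: since $\unit = \umaxin{\cmb_1'}$ is extreme in $\cmb_1'$ we have $\upts{\unit} \cap \clow{\cmb_1} = \emptyset$ and $\uupp{\unit} \cap \cpts{\cmb_1} = \emptyset$, and \cref{obs:dependence} upgrades this with $\uupp{\unit} \cap \clow{\cmb_1} = \emptyset$ (a common point of $\pts$ would lie strictly above an edge of $\unit$ and strictly below an edge of some $\unit' \in \cmb_1$, exhibiting a point on $\unit'$ directly above a point on $\unit$, i.e.\ $\unit \udep \unit'$). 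Feeding in $\clow{\cmb_2} = \clow{\cmb_1}$ and $\cpts{\cmb_2} \subseteq \cpts{\cmb_1} \cup \clow{\cmb_1}$ (consequences of $\cmb_1 \parteq \cmb_2$) yields the two shadow conditions. As explained right after \cref{obs:dependence}, they imply that no element of $\cmb_2$ depends on $\unit$; hence $\cmb_2'$ is crossing-free (so $\cmb_2' \in \cmbscf{\units}$) and $\unit$ is extreme in $\cmb_2'$. A routine set computation — adding the same $\unit$ to the $\parteq$-equivalent $\cmb_1, \cmb_2$ — also gives $\clow{\cmb_1'} = \clow{\cmb_2'}$ and $\cpts{\cmb_1'} \setminus \clow{\cmb_1'} = \cpts{\cmb_2'} \setminus \clow{\cmb_2'}$, the first two conditions of $\cmb_2' \parteq \cmb_1'$.

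The hard part — the main obstacle — is to promote ``$\unit$ is extreme in $\cmb_2'$'' to ``$\unit = \umaxin{\cmb_2'}$''. This is also exactly what is still needed: it is the third condition of $\cmb_2' \parteq \cmb_1'$ (both sides have left-most point $\ulft{\unit}$), and, once $\unit \notin \cmb_2$ is known, it makes $\ctarrow{\cmb_2}{\cmb_2'}{\unit}$ an actual edge. So suppose not. By serializability $\umaxin{\cmb_2'}$ exists; call it $\unit_0 \neq \unit$. Then $\unit_0 \in \cmb_2$, $\unit_0$ is extreme in $\cmb_2'$ and hence in $\cmb_2$, and since $\unit$ is extreme and $\unit_0$ is right-most we get $\unit \utorgt \unit_0$, i.e.\ $\urgt{\unit} \ptspreceq \ulft{\unit_0}$. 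Because $\unit_0 = \umaxin{\cmb_2}$ or $\unit_0 \utorgt \umaxin{\cmb_2}$, the third condition of $\cmb_1 \parteq \cmb_2$ then forces $\urgt{\unit} \ptspreceq \ulft{\umaxin{\cmb_1}}$. I would now inspect $\umaxin{\cmb_1}$ relative to $\cmb_1' = \cmb_1 \cup \{\unit\}$. If it is still extreme in $\cmb_1'$ then, being $\neq \unit$ (as $\umaxin{\cmb_1} \in \cmb_1$ and $\unit \notin \cmb_1$), it satisfies $\umaxin{\cmb_1} \utorgt \unit$, so the chain $\ulft{\unit} \ptspreceq \urgt{\unit} \ptspreceq \ulft{\umaxin{\cmb_1}} \ptspreceq \urgt{\umaxin{\cmb_1}} \ptspreceq \ulft{\unit}$ collapses to equalities, forcing $\unit$ and $\umaxin{\cmb_1}$ to be one and the same one-vertex unit — impossible since $\unit \notin \cmb_1$. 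If $\umaxin{\cmb_1}$ is not extreme in $\cmb_1'$ then, being extreme in $\cmb_1$, it must be that $\unit$ depends on $\umaxin{\cmb_1}$; but $\upts{\umaxin{\cmb_1}} \cap \ulow{\unit} = \emptyset$ and $\uupp{\umaxin{\cmb_1}} \cap \upts{\unit} = \emptyset$, because every vertex point of $\unit$ is $\ptspreceq \urgt{\unit}$, every vertex point of $\umaxin{\cmb_1}$ satisfies $\urgt{\unit} \ptspreceq \ulft{\umaxin{\cmb_1}} \ptspreceq$ it, and lying strictly above or below a (non-vertical) edge requires an abscissa strictly interior to that edge's horizontal span. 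Either branch is a contradiction, so $\umaxin{\cmb_2'} = \unit$.

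It remains to rule out $\unit \in \cmb_2$. If $\unit \in \cmb_2$ then $\cmb_2' = \cmb_2$, and using $\ulow{\unit} \subseteq \clow{\cmb_2}$, $\upts{\unit} \subseteq \cpts{\cmb_2}$, and $\umaxin{\cmb_2} = \umaxin{\cmb_2'} = \unit$ one checks the three defining equations and obtains $\cmb_1' \parteq \cmb_2$; together with $\cmb_2 \parteq \cmb_1$ this gives $\cmb_1 \parteq \cmb_1'$, contradicting progressivity applied to the edge $\ctarrow{\cmb_1}{\cmb_1'}{\unit}$. Hence $\unit \notin \cmb_2$, so $\ctarrow{\cmb_2}{\cmb_2'}{\unit}$ holds, and with $\cmb_2' \parteq \cmb_1'$ already in hand the coherency of $\parteq$ follows.
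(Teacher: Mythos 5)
Your proof is correct and follows essentially the same route as the paper's: set $\cmb_2' = \cmb_2 \cup \{\unit\}$, transfer the shadow conditions through $\cmb_1 \parteq \cmb_2$ to show $\unit$ is extreme in $\cmb_2'$ (hence $\cmb_2'$ crossing-free), verify the three defining conditions of $\parteq$ for $\cmb_1'$ and $\cmb_2'$, pin down $\umaxin{\cmb_2'} = \unit$ via the marking, and invoke progressivity to exclude $\unit \in \cmb_2$. The only difference is that you spell out more carefully the step establishing $\umaxin{\cmb_2'} = \unit$ (in particular the degenerate single-point case and the fact that $\umaxin{\cmb_1}$ stays extreme in $\cmb_1'$), which the paper treats as immediate.
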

\begin{proof}
  Let $\cmb_1,\cmb_2 \in \cmbscf{\units}$ be non-empty (otherwise, the proof is trivial) with $\cmb_1 \parteq \cmb_2$ and assume that $\ctarrow{\cmb_1}{\cmb_1'}{\unit}$ holds for $\cmb_1' \in \cmbscf{\units}$ and $\unit \in \units$.
  Consider $\cmb_2' \coloneqq \cmb_2 \cup \{\unit\}$.
  We show $\cmb_2' \in \cmbscf{\units}$, $\cmb_1' \parteq \cmb_2'$ and $\ctarrow{\cmb_2}{\cmb_2'}{\unit}$,  which implies the lemma.
  
  By assumption we have $\clow{\cmb_1} = \clow{\cmb_2}$ and $\cpts{\cmb_1} \setminus \clow{\cmb_1} = \cpts{\cmb_2} \setminus \clow{\cmb_2}$.
  From this and from the fact that $\unit$ is extreme in $\cmb_1'$, we first derive that $\unit$ is also extreme in $\cmb_2'$.
  So, assume the opposite, that is, either $\upts{\unit} \cap \clow{\cmb_2} \neq \emptyset$ or $\uupp{\unit} \cap \cpts{\cmb_2} \neq \emptyset$ holds.
  In the first case, we get $\upts{\unit} \cap \clow{\cmb_1} = \upts{\unit} \cap \clow{\cmb_2} \neq \emptyset$, contradicting that $\unit$ is extreme in $\cmb_1'$.
  In the second case, we get at least one of $\uupp{\unit} \cap \cpts{\cmb_1} \supseteq \uupp{\unit} \cap (\cpts{\cmb_1} \setminus \clow{\cmb_1}) = \uupp{\unit} \cap (\cpts{\cmb_2} \setminus \clow{\cmb_2}) \neq \emptyset$ and $\uupp{\unit} \cap \clow{\cmb_1} = \uupp{\unit} \cap \clow{\cmb_2} \supseteq \uupp{\unit} \cap (\cpts{\cmb_2} \cap \clow{\cmb_2}) \neq \emptyset$.
  Both possibilities again lead to $\unit$ not being extreme in $\cmb_1'$.
  We conclude that, indeed, $\unit$ is extreme in $\cmb_2'$.

  Since $\unit$ is extreme in $\cmb_2'$, no other element of $\cmb_2'$ depends on $\unit$.
  By \cref{obs:dependence}, since $\cmb_2$ is crossing-free, also $\cmb_2'$ must be crossing-free.
  Hence, $\cmb_2' \in \cmbscf{\units}$.

  Next, by definition of $\cmb_2'$ it is easily seen that $\clow{\cmb_1'} = \clow{\cmb_2'}$ and $\cpts{\cmb_1'} \setminus \clow{\cmb_1'} = \cpts{\cmb_2'} \setminus \clow{\cmb_2'}$ both hold.
  To prove also $\ulft{\umaxin{\cmb_1'}} = \ulft{\umaxin{\cmb_2'}}$, and consequently $\cmb_1' \parteq \cmb_2'$, it suffices to show $\umaxin{\cmb_2'} = \unit$. 
  So, for $\unit' \coloneqq \umaxin{\cmb_2'}$ and assuming that $\unit' \neq \unit$, we get $\unit \utorgt \unit'$ and thus also $\umaxin{\cmb_2} = \unit'$.
  Invoking $\cmb_1 \parteq \cmb_2$ yields $\unit \utorgt \umaxin{\cmb_1}$, which contradicts the assumption $\unit = \umaxin{\cmb_1'}$.

  It remains to prove $\ctarrow{\cmb_2}{\cmb_2'}{\unit}$.
  The crucial observation here is that $\unit = \umaxin{\cmb_2'}$ and $\cmb_2 \cup \{\unit\} = \cmb_2'$ alone are not sufficient.
  What we need instead is $\cmb_2 = \cmb_2' \setminus \{\unit\}$, that is, $\unit$ must not be contained in $\cmb_2$.
  However, assuming that $\unit$ is contained in $\cmb_2$, we easily derive $\cmb_1 \parteq \cmb_2 = \cmb_2' \parteq \cmb_1'$, which, when combined with $\ctarrow{\cmb_1}{\cmb_1'}{\unit}$, contradicts the fact that $\cmbscf{\units}$ is progressive.
\end{proof}

We define $\target \coloneqq \quotient{\cmbscf{\units}}{\parteq}$.
With \cref{lem:cf:coherent} we see that $(\units,\cmbscf{\units},\parteq,\target)$
is a combination problem as long as $\cmbscf{\units}$ is serializable and progressive.
The corresponding combination graph over $\units$ is referred to by
$\agraphcf{\units}$.

\begin{corollary}\label{cor:crossingfree}
  If $\units \subseteq \frags$ is such that $\cmbscf{\units}$ is both serializable and progressive, then $\agraphcf{\units}$ represents $\cmbscf{\units}$ and the size of $\agraphcf{\units}$ is at most $O(|\quotient{\cmbscf{\units}}{\parteq}| \cdot |\units|)$.
\end{corollary}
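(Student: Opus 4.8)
The plan is to assemble the pieces that are already in place. First I would record that $\cmbscf{\units}$ is serializable by hypothesis and, since it is also progressive, \cref{lem:cf:coherent} shows that the equivalence relation $\parteq$ on $\cmbscf{\units}$ defined above is coherent. Setting $\target \coloneqq \quotient{\cmbscf{\units}}{\parteq}$ as we did, the tuple $(\units,\cmbscf{\units},\parteq,\target)$ is then a combination problem in the sense of the relevant definition, and the associated combination graph $\agraphcf{\units} = \cgraph(\units,\cmbscf{\units},\parteq,\target)$ is exactly the one named in the statement.

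Next I would apply \cref{lem:represent} to this combination problem. It yields directly that $\agraphcf{\units}$ represents $\bigcup\target$ and that its size is $O(|\quotient{\cmbscf{\units}}{\parteq}| \cdot |\units|)$, which is precisely the claimed bound. The last step is the observation that $\bigcup\target = \cmbscf{\units}$: since $\parteq$ is an equivalence relation on $\cmbscf{\units}$ and $\target$ consists of all of its classes, these classes partition $\cmbscf{\units}$, so their union is $\cmbscf{\units}$. Combining the two facts gives that $\agraphcf{\units}$ represents $\cmbscf{\units}$, as required.

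There is no real obstacle here, as all the substance is carried by \cref{lem:cf:coherent} and \cref{lem:represent}. The only points needing any attention are of a bookkeeping nature: that the progressiveness hypothesis is what makes \cref{lem:cf:coherent} applicable (and, implicitly, rules out loops in the combination graph, hence infinitely many source–sink paths), and that the trivial identity $\bigcup \quotient{\cmbscf{\units}}{\parteq} = \cmbscf{\units}$ is what turns the conclusion ``$\agraphcf{\units}$ represents $\bigcup\target$'' into the desired statement.
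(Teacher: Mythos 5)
Your proposal is correct and follows exactly the route the paper intends: invoke \cref{lem:cf:coherent} to get coherency, observe that $(\units,\cmbscf{\units},\parteq,\target)$ with $\target = \quotient{\cmbscf{\units}}{\parteq}$ is a combination problem, and apply \cref{lem:represent} together with the trivial identity $\bigcup\target = \cmbscf{\units}$. Nothing is missing.
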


\Cref{thm:pg} follows by invoking \cref{cor:crossingfree} with $\units = \segs$ and by making use of the next
two lemmas. 
The first one essentially shows that $\cmbscf{\segs}$ is a well-behaved set.
The second one gives us a better upper bound on the size of \mbox{$\agraphcf{\segs}$.
Observe,} however, that a bound of $O(3^nn^3)$ is immediate because we can encode equivalence classes with 3 colors and with one marking, and because of $|\segs| = O(\ptsnum^2)$.

\begin{restatable}{lemma}{cfserializable}
  \label{lem:cf:serializable}
  For any point set $\pts$, $\cmbscf{\segs}$ is both serializable and progressive.
\end{restatable}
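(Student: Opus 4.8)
Serializability of $\cmbscf{\segs}$ amounts to three things: that it is non-empty, that every non-empty $\cmb\in\cmbscf{\segs}$ has a right-most extreme element $\umaxin{\cmb}$, and that $\cmb\setminus\{\umaxin{\cmb}\}\in\cmbscf{\segs}$. The first and third are immediate, since $\emptyset$ is crossing-free and a subset of a crossing-free set of segments is again crossing-free; so the real content is the existence of a right-most extreme element. Progressivity asks that $\ctarrow{\cmb}{\cmb'}{\unit}$ imply $\cmb\nparteq\cmb'$; the plan is to prove this by contradiction, combining \cref{obs:dependence} with the three defining conditions of $\parteq$ and with the extremeness of $\unit$ in $\cmb'$.

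\textbf{Right-most among the extreme elements.} I would first observe that two distinct extreme elements $\unit_1,\unit_2$ of a crossing-free $\cmb$ cannot have $x$-projections that overlap in an interval of positive length: on such an overlap both are graphs of affine functions which neither cross (as $\cmb$ is crossing-free) nor coincide (by general position, since $\unit_1$ and $\unit_2$ together span at least three non-collinear points of $\pts$, and any common endpoint sits at the boundary of the overlap), so one of them is strictly above the other throughout the interior; by \cref{obs:dependence} the upper one then depends on the lower one, so the lower one is not extreme — a contradiction. Hence any two distinct extreme elements satisfy $\unit_1\utorgt\unit_2$ or $\unit_2\utorgt\unit_1$. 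Since $\utorgt$ is a strict partial order, it linearly orders the finite set of extreme elements of $\cmb$, and its maximum is exactly the right-most extreme element.

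\textbf{Existence of an extreme element (the main obstacle).} This is the geometric heart: I must rule out a directed cycle in the dependence relation restricted to $\cmb$. The plan is to argue through the upper envelope $h$ of $\bigcup_{\unit\in\cmb}\unit$. This envelope is piecewise linear, each maximal piece being a sub-segment of a single $\unit\in\cmb$. Pick any maximal piece, owned by $\unit_1$: if $\unit_1$ lies on $h$ over its whole $x$-range then $\unit_1$ is extreme and we are done; otherwise $\unit_1$ dips below $h$ on one side, say to the right, and \cref{obs:dependence} together with crossing-freeness and general position forces the next piece to be owned by a segment $\unit_2$ whose \emph{left} endpoint coincides with the right end of $\unit_1$'s piece. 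Iterating, the left endpoints produced are points of $\pts$ with strictly increasing $x$-coordinate, so the process must terminate, and it can only terminate at a segment occupying $h$ over its entire $x$-range — an extreme element. (Alternatively, one can prove acyclicity of the restricted dependence relation directly: a dependence between two non-crossing members of $\cmb$ is just the statement of which one lies above over their common $x$-interval; a directed cycle, reduced to a chordless one, would force three consecutive segments to share an $x$-interval of positive length — using that pairwise positively-overlapping intervals have a common point — on which the cyclic above/below relation contradicts the linear bottom-to-top order of the segments met by a vertical line there.) This settles serializability.

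\textbf{Progressivity.} Suppose $\ctarrow{\cmb}{\cmb'}{\unit}$ and $\cmb\parteq\cmb'$; I would derive a contradiction (the case $\cmb=\emptyset$ being trivial). Write $\cmb'=\cmb\cup\{\unit\}$ with $\unit=\umaxin{\cmb'}\notin\cmb$. From the first two conditions of $\parteq$ one gets $\clow{\cmb}=\clow{\cmb'}$, hence $\ulow{\unit}\subseteq\clow{\cmb}$; and, using also that $\unit$ is extreme in $\cmb'$ (so $\upts{\unit}\cap\clow{\cmb}=\emptyset$), one gets $\upts{\unit}\subseteq\cpts{\cmb}$. The third condition gives $\ulft{\umaxin{\cmb}}=\ulft{\umaxin{\cmb'}}=\ulft{\unit}=:\pt_0$, so $\umaxin{\cmb}$ and $\unit$ are distinct, non-collinear segments sharing the left endpoint $\pt_0$; over the interior of their common $x$-range one is strictly above the other, and it must be $\unit$ — otherwise $\umaxin{\cmb}\in\cmb'\setminus\{\unit\}$ would depend on $\unit$ by \cref{obs:dependence}, contradicting extremeness of $\unit$ in $\cmb'$. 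The two segments end at distinct $x$-coordinates; let $q$ be whichever of $\urgt{\umaxin{\cmb}}$, $\urgt{\unit}$ has the smaller $x$-coordinate, so $q$ lies in the open $x$-range of the other segment, and ``$\unit$ strictly above $\umaxin{\cmb}$'' (with strictness from general position) yields: if $q=\urgt{\umaxin{\cmb}}$ then $q\in\ulow{\unit}\subseteq\clow{\cmb}$, so some $v\in\cmb$ has $q\in\ulow{v}$, and since $q\in\upts{\umaxin{\cmb}}$ while the upward ray from the endpoint $q$ of $\umaxin{\cmb}$ cannot meet $\umaxin{\cmb}$'s interior we have $v\neq\umaxin{\cmb}$ and $\upts{\umaxin{\cmb}}\cap\ulow{v}\neq\emptyset$, i.e.\ $\umaxin{\cmb}\udep v$; if instead $q=\urgt{\unit}$ then $q\in\uupp{\umaxin{\cmb}}$ and $q\in\upts{\unit}\subseteq\cpts{\cmb}$, so some $w\in\cmb$ has $q\in\upts{w}$, with $w\neq\umaxin{\cmb}$ because $q$ is not a vertex of $\umaxin{\cmb}$, giving $\uupp{\umaxin{\cmb}}\cap\upts{w}\neq\emptyset$, i.e.\ $\umaxin{\cmb}\udep w$. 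In either case an element of $\cmb$ other than $\umaxin{\cmb}$ depends on $\umaxin{\cmb}$, contradicting that $\umaxin{\cmb}$ is extreme in $\cmb$. Hence $\cmb\nparteq\cmb'$, so $\cmbscf{\segs}$ is progressive.
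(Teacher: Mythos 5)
Your proof is correct. The progressivity half is essentially the paper's argument: you reduce to the two segments $\umaxin{\cmb}$ and $\unit$ sharing the left endpoint $\ulft{\umaxin{\cmb}}=\ulft{\unit}$, note that $\unit$ must lie strictly above $\umaxin{\cmb}$ on their common $x$-range (else extremeness of $\unit$ in $\cmb'$ fails via \cref{obs:dependence}), and then split on which right endpoint comes first; your two cases produce exactly the paper's two witnesses (a point of $\clow{\cmb'}\setminus\clow{\cmb}$, resp.\ of $(\cpts{\cmb'}\setminus\clow{\cmb'})\setminus(\cpts{\cmb}\setminus\clow{\cmb})$), merely phrased as a violation of extremeness of $\umaxin{\cmb}$ in $\cmb$ rather than of the color equalities. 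The serializability half is where you genuinely diverge: the paper obtains an extreme element by citing the movable-separability result \cite{T85} (some segment can be translated vertically to infinity) and then orders the extreme elements by $x$-projection without further comment, whereas you give a self-contained walk along the upper envelope, with termination guaranteed by the strictly increasing $x$-coordinates of the left endpoints encountered, and you additionally verify — which the paper leaves implicit — that the $x$-projections of two extreme elements cannot overlap in an interval of positive length, so that $\utorgt$ totally orders them and the right-most extreme element is well defined. Your route buys self-containedness at the cost of some bookkeeping; two points deserve a touch more care in the write-up: the left endpoint of $\unit_2$ shares only the $x$-coordinate of (and lies strictly above) the right end of $\unit_1$'s envelope piece, and if the initial piece dips below the envelope only on its left you must run the mirror-image leftward walk (or simply start the walk from the leftmost envelope piece, whose owner's piece begins at its own left endpoint, so the walk always proceeds rightward).
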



\begin{proof}[Proof (serializable)]
  Let $\cmb \in \cmbscf{\segs}$ be arbitrary but non-empty.
  We prove serializability by exhibiting a right-most extreme element $\unit$ in $\cmb$.
  For this, consider the relative interiors of all segments in $\cmb$.
  These segments without endpoints are convex and pairwise non-intersecting.
  For a set of convex and non-intersecting shapes in the plane it is well known that at least one of them can be translated in the $y$-direction to infinity in a continuous motion, without intersecting any other segment in the process \cite{T85}.
  Any segment with this property corresponds to an extreme element in $\cmb$.
  The extreme elements in $\cmb$ now can be ordered from left to right according to their relative positions when projected orthogonally onto the $x$-axis.
  The segment $\unit$ on the far right is a right-most extreme element in $\cmb$.
\end{proof}
\hspace*{-\parindent}%
\begin{minipage}{0.8\textwidth}
  \begin{proof}[Proof (progressive)]
    For the sake of contradiction, assume that we have $\ctarrow{\cmb'}{\cmb}{\unit}$ and $\cmb \parteq \cmb'$ for some $\cmb,\cmb' \in \cmbscf{\segs}$ and $\unit \in \segs$. 
    Let $\unit' \coloneqq \umaxin{\cmb'}$ (note that the case where $\cmb'$ is empty is trivial).
    We have $\ulft{\unit} = \ulft{\unit'}$ and we have to distinguish the two cases illustrated on the right.
    In case 1, we see that $\urgt{\unit} \in \cpts{\cmb} \setminus \clow{\cmb}$
    but also $\urgt{\unit} \not\in \cpts{\cmb'} \setminus \clow{\cmb'}$ since
    otherwise $\unit'$ would not be extreme in $\cmb'$.
    In case 2, we see that $\urgt{\unit'} \in \clow{\cmb}$ but also
    $\urgt{\unit'} \not\in \clow{\cmb'}$ since otherwise $\unit'$ would not be
    extreme in $\cmb'$.
    Hence, in both cases we get a contradiction to the assumption $\cmb \parteq \cmb'$.
  \end{proof}
\end{minipage}
\begin{minipage}{0.2\textwidth}
  \centering
  \begin{tikzpicture}
  \begin{scope}[yshift=0]
    \node[point] (l) at (0,0) {};
    \draw (l) -- ++(10:1.0) node[point] (r1) {} node[midway,above] {$\unit$};
    \draw (l) -- ++(-10:1.5) node[point] (r2) {} node[midway,below] {$\unit'$};
    \draw [dotted] (r2) -- ++(0,0.8);
    \node (h) at (0.74,-0.9) {\footnotesize case 1};
  \end{scope}
  \begin{scope}[yshift=-60]
    \node[point] (l) at (0,0) {};
    \draw (l) -- ++(10:1.5) node[point] (r1) {} node[midway,above] {$\unit$};
    \draw [label={[below]{$\unit'$}}] (l) -- ++(-10:1.0) node[point] (r2) {} node[midway,below] {$\unit'$};
    \draw [dotted] (r1) -- ++(0,-0.8);
    \node (h) at (0.74,-0.9) {\footnotesize case 2};
  \end{scope}
\end{tikzpicture}
\end{minipage}
\vspace{\belowdisplayskip}

\begin{restatable}{lemma}{cfbound}
  \label{lem:cf:bound}
  For any set $\pts$ of $\ptsnum$ points, the relation $\parteq$ partitions $\cmbscf{\segs}$ into at most $O(\alpha^nn)$ equivalence classes, that is, $|\quotient{\cmbscf{\segs}}{\parteq}| = O(\alpha^nn)$, where $\alpha \lessapprox 2.83929$.\footnote{We write $\alpha \lessapprox \beta$ if a parameter $\alpha$ is approximately equal to $\beta$ and also strictly smaller.}
\end{restatable}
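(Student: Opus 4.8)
The plan is to count equivalence classes by counting \emph{realized encodings}. By the definition of $\parteq$, two combinations in $\cmbscf{\segs}$ are equivalent exactly when they induce the same encoding, i.e.\ the same colouring of $\pts$ by $\ptfree,\ptalive,\ptdead$ (with $\ptdead$ on $\clow{\cmb}$, $\ptalive$ on $\cpts{\cmb}\setminus\clow{\cmb}$, and $\ptfree$ elsewhere) together with the same $\ptaliveM$-marked point $\ulft{\umaxin{\cmb}}$. Hence $|\quotient{\cmbscf{\segs}}{\parteq}|$ is precisely the number of encodings that are \emph{realizable}, i.e.\ arise from at least one $\cmb \in \cmbscf{\segs}$, and it suffices to bound that number by $O(\alpha^n n)$. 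I would first observe that the marked point is always coloured $\ptalive$: it lies in $\cpts{\cmb}$, and it lies outside $\clow{\cmb}$ because it is an endpoint of the segment $\umaxin{\cmb}$, so the upward ray from it meets that segment only in a vertex, while extremality of $\umaxin{\cmb}$ forbids every other segment of $\cmb$ from passing strictly above it. Thus the marking costs only a factor of $n$, and it remains to show that the number of realizable \emph{colourings} of $\pts$ is $O(\alpha^n)$ for some $\alpha \lessapprox 2.83929$. The naive bound here is $3^n$, so the point is that certain colourings must be shown impossible.

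The heart of the argument is to extract a purely combinatorial necessary condition for realizability — one that mentions only the left-to-right order $\pt_1 \ptspreceq \dots \ptspreceq \pt_n$ of $\pts$ and not the actual coordinates, so that it holds for every point set. The starting point is that every $\ptdead$ point $p$ has a \emph{witness}: a segment of $\cmb$ passing strictly above $p$, whose two endpoints are coloured $\ptalive$ or $\ptdead$ and straddle $p$ in the order $\ptspreceq$. Picking one witness per $\ptdead$ point yields a non-crossing family of segments (non-crossing by \cref{obs:dependence}, since all lie in $\cmb$) whose endpoints avoid the $\ptfree$ points and which passes above every $\ptdead$ point but above no $\ptfree$ point. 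Analysing which short patterns of consecutive colours can simultaneously be witnessed this way, and in particular how few segments are needed and how their $x$-projections may nest, I expect to arrive at a condition of the form ``the colour sequence, read left to right, avoids a fixed finite set of factors'' (or an equivalent Dyck-type balance condition tying each maximal run of $\ptdead$ points to the $\ptalive$ points that must flank it). Pinning down the exact shape of this condition, and proving both that it is necessary and that it is strong enough to bring the growth rate below $2.84$, is the main obstacle; if no clean forbidden-factor description emerges, the fallback is to bound directly the number of admissible non-crossing witness families, which is again a transfer-matrix count.

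Granting such a condition, the final step is a transfer-matrix computation. One takes the matrix $M$ over the finitely many states needed to track the condition while scanning $\pt_1,\dots,\pt_n$, with a $1$ in each admissible transition, so that the number of admissible colourings of $\pts$ is at most a fixed polynomial times $\|M^{\,n}\|$ and hence is $O(c^{\,n})$ for any $c$ exceeding the spectral radius $\rho(M)$. It then remains only to check numerically — a finite computation — that $\rho(M) < 2.83929$. Multiplying the resulting bound on realizable colourings by the factor $n$ coming from the marked point yields $|\quotient{\cmbscf{\segs}}{\parteq}| = O(\alpha^n n)$ with $\alpha \lessapprox 2.83929$, as claimed; note that this is exactly what is needed so that, combined with $|\segs| = O(n^2)$ and \cref{cor:crossingfree}, the combination graph $\agraphcf{\segs}$ of \cref{thm:pg} has size $O(c^n n^3)$.
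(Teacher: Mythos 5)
Your reduction is the same as the paper's: an equivalence class is determined by its $\{\ptfree,\ptalive,\ptdead\}$-colouring of $\pts$ together with the marked index, the marking costs a factor $n$, and everything hinges on bounding the number of \emph{realizable} colourings by $O(\alpha^n)$. The gap is precisely the step you flag as the main obstacle, and it is not a technicality: there is no purely order-theoretic forbidden-factor condition of the kind you are hoping for. The local constraint is genuinely geometric. For three consecutive points $\pt_i,\pt_{i+1},\pt_{i+2}$, if $\pt_{i+1}$ lies strictly \emph{below} the segment joining the outer two, then the pattern $(\ptdead,\ptfree,\ptdead)$ is unrealizable (the witnessing segments would have to cross), but $(\ptfree,\ptdead,\ptfree)$ may well occur; if $\pt_{i+1}$ lies strictly \emph{above} that segment, the situation is exactly reversed. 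Which pattern is excluded therefore depends on the orientation of each triple, so for a general point set the realizable colourings are not contained in any fixed factor-avoiding language with growth rate below $3$, and your transfer matrix has nothing to act on. (This dichotomy is also why the paper's easy argument only gives $O(26^{n/3}n)$.) Your fallback of ``bounding the admissible non-crossing witness families directly'' does not rescue this: the number of such families is essentially $\pgsnum$, and bounding the number of distinct colourings they induce is the original problem restated.

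The paper closes the gap with a different idea: rather than showing that realizable strings \emph{avoid} the pattern $(\ptdead,\ptfree,\dots,\ptfree,\ptdead)$, it constructs an explicit \emph{injection} $f$ from the set of realizable strings into the set $B$ of strings that do avoid it, by scanning left to right and rewriting each occurrence of $(\ptdead,\ptfree,\dots,\ptfree,\ptdead)$ as $(\ptalive,\ptdead,\dots,\ptdead,\ptalive)$; the bound then follows from $|B|=\Theta(\alpha^n)$ with $\alpha\lessapprox 2.83929$. Injectivity is where the geometry enters: assuming two distinct realizable strings have the same image, one looks at the first position where they differ and derives, from the witness segments you describe, that some triple of points would have to be simultaneously a left turn and a right turn. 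This case analysis (three cases according to the image symbol at the first difference, with sub-cases for overlapping rewritten blocks) is the real content of the proof and cannot be absorbed into a routine finite computation; without it, or some substitute for it, your proposal does not go through.
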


The proof of \cref{lem:cf:bound} is a bit tedious.
We only sketch the main idea here, and postpone a more careful analysis to \cref{sec:bounds}.

Note that for any three consecutive points $\pt_i,\pt_{i+1},\pt_{i+2}$ in $\pts$, the point $\pt_{i+1}$ is either below or above the straight line through $\pt_i$ and $\pt_{i+2}$, as depicted in \cref{fig:upperbound_crossingfree}.
\begin{figure}[b]
  \begin{center}
    \begin{tikzpicture}
  \begin{scope}[xshift=0]
    \node[dead] (p1) at (0,0.33) {};
    \node[free] (p2) at (1,0) {};
    \node[dead] (p3) at (2,0.33) {};
    \coordinate (a1) at ($(p1)+(0,0.2)$);
    \coordinate (b2) at ($(p2)-(0,0.2)$);
    \coordinate (a3) at ($(p3)+(0,0.2)$);
    \draw[shorten <= -0.25cm, shorten >= -0.5cm] (a1) -- (b2);
    \draw[shorten <= -0.5cm, shorten >= -0.25cm] (b2) -- (a3); 
  \end{scope}
  \begin{scope}[xshift=150]
    \node[free] (p1) at (0,0) {};
    \node[dead] (p2) at (1,0.33) {};
    \node[free] (p3) at (2,0) {};
    \coordinate (b1) at ($(p1)-(0,0.2)$);
    \coordinate (a2) at ($(p2)+(0,0.2)$);
    \coordinate (b3) at ($(p3)-(0,0.2)$);
    \draw[rounded corners] (b1) [shorten <= -0.25cm] -- (a2) -- (b3) [shorten >= -0.25cm]; 
  \end{scope}
\end{tikzpicture}
  \end{center}
  \caption{
    Three consecutive points can always be colored in a way that does not correspond to an element of $\cmbscf{\segs}$.
  }
  \label{fig:upperbound_crossingfree}
\end{figure}
In both cases we can show that at least one of the $3^3 = 27$ different ways of assigning colors to $\pt_i,\pt_{i+1},\pt_{i+2}$ does not describe an actual element of $\cmbscf{\segs}$.
A bound of $O(26^{n/3}n) = O(2.963^nn)$ on the size of $\quotient{\cmbscf{\segs}}{\parteq}$ then follows after partitioning $\pts$ into $n/3$ consecutive triples.

In the first case, if we assign colors $\ptdead,\ptfree,\ptdead$, there must be two distinct segments which pass over the points $\pt_i$ and $\pt_{i+2}$, respectively, and which pass under $\pt_{i+1}$.
Clearly, any two such segments are crossing.
In the second case, if we assign colors $\ptfree,\ptdead,\ptfree$, there must be a segment that passes over $\pt_{i+1}$ and under $\pt_i$ and $\pt_{i+2}$.
Clearly, only a bent ``segment'' can achieve that.

Note also that if the combination graph $\agraphcf{\segs}$ is constructed bottom-up, as will be explained in more detail in \cref{sec:construction}, then all these impossible colorings are avoided automatically.


\subsection{Crossing-free Partitions}\label{subsec:partitions}

\begin{figure}[b]
  \begin{center}
    \begin{tikzpicture}
  \newcommand\exampleCoords{
    \coordinate (c0) at (0.0,1.0);
    \coordinate (c1) at (0.15,0.4);
    \coordinate (c2) at (0.3,1.7);
    \coordinate (c3) at (0.45,0.9);
    \coordinate (c4) at (0.6,1.3);
    \coordinate (c5) at (0.75,0.1);
    \coordinate (c6) at (0.9,1.9);
    \coordinate (c7) at (1.05,0.9);
    \coordinate (c8) at (1.2,0.3);
    \coordinate (c9) at (1.35,1.2);
    \coordinate (c10) at (1.5,0.8);
  }
  \newcommand\examplePoints{
    \node[alive] (p0) at (c0) {};
    \node[alive] (p1) at (c1) {};
    \node[alive] (p2) at (c2) {};
    \node[free] (p3) at (c3) {};
    \node[alive] (p4) at (c4) {};
    \node[marked] (m) at (c4) {};
    \node[alive] (p5) at (c5) {};
    \node[free] (p6) at (c6) {};
    \node[alive] (p7) at (c7) {};
    \node[alive] (p8) at (c8) {};
    \node[free] (p9) at (c9) {};
    \node[alive] (p10) at (c10) {};
  }
  \newcommand\exampleLabel[1]{
    \node (label) at (0.55,-0.4) {\footnotesize #1};
  }
  \begin{scope}[xshift=0]
    \exampleCoords
    \draw (c0) -- (c2);
    \draw (c1) -- (c5);
    \draw (c4) -- (c10);
    \draw (c7) -- (c8);
    \examplePoints
  \end{scope}
  \begin{scope}[xshift=100]
    \exampleCoords
    \draw[fill=black!20!white] (c0) -- (c1) -- (c5) -- cycle;
    \draw[fill=black!20!white] (c4) -- (c8) -- (c10) -- cycle;
    \examplePoints
  \end{scope}
  \begin{scope}[xshift=200]
    \exampleCoords
    \draw (c0) -- (c2);
    \draw[fill=black!20!white] (c1) -- (c10) -- (c7) -- cycle;
    \draw (c5) -- (c8);
    \examplePoints
  \end{scope}
\end{tikzpicture}
  \end{center}
  \caption{
    Three elements of $\protect\cmbslpt{\cparts}$ that are considered equivalent.
  }
  \label{fig:encoding_partition}
\end{figure}

Let us again fix a set of units $\units \subseteq \frags$.
We define the set $\cmbslpt{\units}$ of all crossing-free combinations $\cmb$ of $\units$
for which the sets $\upts{\unit}$ of all $\unit \in \cmb$ are pairwise disjoint and
for which $\clow{\cmb} \subseteq \cpts{\cmb}$ holds.
\Cref{fig:encoding_partition} depicts three of these combinations for the special case $\units = \cparts$.
Observe that $\cmbspt{\units}$ and $\cmbslpt{\units}$ are different sets, and also observe that we have in fact $\cmbspt{\units} \subseteq \cmbslpt{\units}$.

Assume that $\cmbslpt{\units}$ is serializable.
Similar to the previous subsection, we use two colors $\ptfree,\ptalive$ and a special marking $\ptaliveM$ on the points in $\pts$ to describe an element $\cmb$ of $\cmbslpt{\units}$.
A point $\pt$ receives the color $\ptalive$ if $\pt \in \cpts{\cmb}$, and $\ptfree$ otherwise.
The marking is again put on the left-most point of $\umaxin{\cmb}$. 
If two combinations have identical such descriptions, we consider them equivalent.
Formally, we put $\cmb \parteq \cmb'$ if and only if
\begin{itemize}
  \item $\cpts{\cmb} = \cpts{\cmb'}$,
  \item $\ulft{\umaxin{\cmb}} = \ulft{\umaxin{\cmb'}}$.
\end{itemize}

One peculiarity in \cref{fig:encoding_partition} is that some points have been given the color $\ptalive$ even though there are no incident segments.
This is because the set $\cparts$ also contains all isolated points, that is, all $\unit \in \frags$ with $|\upts{\unit}| = 1$.

Also observe that $\ctarrow{\cmb}{\cmb'}{\unit}$ implies $\cmb \not\parteq \cmb'$ because $\cpts{\cmb}$ is always a proper subset of $\cpts{\cmb'}$.
Consequently, an explicit notion of ``progressive'', as in the previous subsection, is not needed here.

\begin{lemma}\label{lem:pt:coherent}
  Let $\units \subseteq \frags$ with $\cmbslpt{\units}$ serializable.
  Then, the equivalence relation $\parteq$ on $\cmbslpt{\units}$, as defined above, is coherent.
\end{lemma}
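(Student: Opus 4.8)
The plan is to prove Lemma~\ref{lem:pt:coherent} by closely mimicking the proof of \cref{lem:cf:coherent}, exploiting that the partition setting is genuinely simpler: here the defining data of an equivalence class is just the point set $\cpts{\cmb}$ together with the left-most point of $\umaxin{\cmb}$, and new units must have \emph{disjoint} vertex point sets from what came before. So suppose $\cmb_1,\cmb_2 \in \cmbslpt{\units}$ with $\cmb_1 \parteq \cmb_2$, and suppose $\ctarrow{\cmb_1}{\cmb_1'}{\unit}$. As in the earlier proof, the natural candidate is $\cmb_2' \coloneqq \cmb_2 \cup \{\unit\}$, and I would show three things: that $\cmb_2' \in \cmbslpt{\units}$, that $\cmb_1' \parteq \cmb_2'$, and that $\ctarrow{\cmb_2}{\cmb_2'}{\unit}$.

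First I would argue $\cmb_2' \in \cmbslpt{\units}$. Since $\unit = \umaxin{\cmb_1'}$, the unit $\unit$ is extreme in $\cmb_1'$, so no element of $\cmb_1'$ depends on $\unit$; unravelling the definition of dependence this means $\upts{\unit} \cap \clow{\cmb_1} = \emptyset$ and $\uupp{\unit} \cap \cpts{\cmb_1} = \emptyset$. Because $\cpts{\cmb_1} = \cpts{\cmb_2}$ and (using $\clow{\cmb_1}\subseteq\cpts{\cmb_1}$, valid since $\cmb_1\in\cmbslpt{\units}$) also $\clow{\cmb_1}\subseteq\cpts{\cmb_1}=\cpts{\cmb_2}$, the same disjointness holds against $\cmb_2$: $\uupp{\unit}\cap\cpts{\cmb_2}=\emptyset$ directly, and $\upts{\unit}\cap\clow{\cmb_2}=\emptyset$ because any point of $\clow{\cmb_2}$ witnessing dependence would, by \cref{obs:dependence}, also make $\unit$ non-extreme in $\cmb_1'$ — here I should be a little careful and spell out that $\clow{\cmb_2}$ need not equal $\clow{\cmb_1}$, so the cleanest route is: if $\unit\udep\unit'$ for some $\unit'\in\cmb_2$, then by \cref{obs:dependence} there is a point on $\unit'$ strictly above a point on $\unit$, hence in particular a vertex of $\unit'$ or a shadow relation involving $\cpts{\unit'}\subseteq\cpts{\cmb_2}=\cpts{\cmb_1}$, contradicting extremality of $\unit$ in $\cmb_1'$. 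Thus $\unit$ is extreme in $\cmb_2'$, no element of $\cmb_2'$ depends on it, and by \cref{obs:dependence} $\cmb_2'$ is crossing-free. Disjointness of the vertex sets is immediate since $\upts{\unit}$ is disjoint from $\cpts{\cmb_1}=\cpts{\cmb_2}$, and $\clow{\cmb_2'}\subseteq\cpts{\cmb_2'}$ follows because $\clow{\cmb_2}\subseteq\cpts{\cmb_2}\subseteq\cpts{\cmb_2'}$ and $\ulow{\unit}\subseteq\upts{\unit}$ by extremality (no point of $\clow{\unit}$ lies outside $\upts{\unit}$, else it would sit below an edge of $\unit$ and below... — again a short \cref{obs:dependence} argument, or simply $\ulow{\unit}\subseteq\clow{\cmb_2'}\subseteq\cpts{\cmb_2'}$ using that $\unit$ is extreme so nothing it could depend on survives). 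Hence $\cmb_2'\in\cmbslpt{\units}$.

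Next, $\cmb_1'\parteq\cmb_2'$. Clearly $\cpts{\cmb_1'}=\cpts{\cmb_1}\cup\upts{\unit}=\cpts{\cmb_2}\cup\upts{\unit}=\cpts{\cmb_2'}$. For the marking, it suffices to show $\umaxin{\cmb_2'}=\unit$. Since $\unit$ is extreme in $\cmb_2'$, if it were not the right-most extreme element then there is an extreme $\unit''\in\cmb_2'$ with $\unit\utorgt\unit''$; but $\unit''\neq\unit$ forces $\unit''\in\cmb_2$, and then $\unit''$ would be extreme in $\cmb_2$ with $\umaxin{\cmb_2}\utorgt$-position at least that of $\unit''$, giving $\unit$ strictly to the right of $\umaxin{\cmb_2}$; invoking $\cmb_1\parteq\cmb_2$ (so $\ulft{\umaxin{\cmb_1}}=\ulft{\umaxin{\cmb_2}}$) this places $\unit$ strictly to the right of $\umaxin{\cmb_1}$, contradicting $\unit=\umaxin{\cmb_1'}$ being extreme and right-most in $\cmb_1'\supseteq\cmb_1\cup\{\unit\}$ — here I use the same style of argument as in \cref{lem:cf:coherent}. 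So $\umaxin{\cmb_2'}=\unit$, giving the second clause of $\parteq$. Finally, $\ctarrow{\cmb_2}{\cmb_2'}{\unit}$ requires $\unit\notin\cmb_2$, which is immediate: $\upts{\unit}$ is disjoint from $\cpts{\cmb_2}$, so $\unit$ cannot be a member of $\cmb_2$ (it would need a vertex in $\cpts{\cmb_2}$). Combined with $\cmb_2'=\cmb_2\cup\{\unit\}$ and $\umaxin{\cmb_2'}=\unit$, this is exactly $\ctarrow{\cmb_2}{\cmb_2'}{\unit}$, completing the proof.

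The main obstacle, and the part deserving the most care, is the extremality transfer — showing $\unit$ stays extreme when moved from $\cmb_1'$ to $\cmb_2'$. Unlike the crossing-free subsection, where $\clow{\cmb_1}=\clow{\cmb_2}$ is part of $\parteq$, here only $\cpts{\cmb_1}=\cpts{\cmb_2}$ is given, so I cannot directly compare lower shadows; the rescue is the structural constraint $\clow{\cmb}\subseteq\cpts{\cmb}$ built into $\cmbslpt{\units}$, which collapses the relevant shadow comparisons back to point-set comparisons. I expect the bookkeeping around $\ulow{\unit}\subseteq\upts{\unit}$ and the disjointness bookkeeping (ensuring $\cmb_2'$ is still a ``partition-like'' combination, not just crossing-free) to be where a careless write-up would slip, so I would state these two sub-claims explicitly before assembling the conclusion.
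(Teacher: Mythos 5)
Your overall route is exactly the paper's: set $\cmb_2' \coloneqq \cmb_2 \cup \{\unit\}$, transfer extremality of $\unit$ from $\cmb_1'$ to $\cmb_2'$ using the structural constraint $\clow{\cmb_2}\subseteq\cpts{\cmb_2}$ together with $\cpts{\cmb_1}=\cpts{\cmb_2}$ and the disjointness of $\upts{\unit}$ from $\cpts{\cmb_1}$, then verify membership in $\cmbslpt{\units}$, establish $\cmb_1'\parteq\cmb_2'$ via the marking argument, and observe $\unit\notin\cmb_2$. All of that matches the paper's proof and is sound.

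The one place you slip is precisely the sub-claim you flagged as delicate: you assert $\ulow{\unit}\subseteq\upts{\unit}$ ``by extremality''. This is false. Extremality of $\unit$ in $\cmb_1'$ only says that no \emph{other} element of $\cmb_1'$ depends on $\unit$, i.e.\ $\upts{\unit}\cap\clow{\cmb_1}=\emptyset$ and $\uupp{\unit}\cap\cpts{\cmb_1}=\emptyset$; it places no constraint whatsoever on $\ulow{\unit}$. Concretely, $\unit$ may be a segment or convex part with a point $\pt\in\pts$ strictly below one of its lower edges, so that $\pt\in\ulow{\unit}\setminus\upts{\unit}$; this is perfectly compatible with $\unit$ being the right-most extreme element of $\cmb_1'$, as long as $\pt$ happens to be a vertex of some other unit of $\cmb_1$. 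Your fallback (``$\ulow{\unit}\subseteq\clow{\cmb_2'}\subseteq\cpts{\cmb_2'}$ using that $\unit$ is extreme'') is circular, since $\clow{\cmb_2'}\subseteq\cpts{\cmb_2'}$ is exactly the containment you are trying to establish. The correct repair is the paper's one-line chain $\ulow{\unit}\subseteq\clow{\cmb_1'}\subseteq\cpts{\cmb_1'}=\cpts{\cmb_1}\cup\upts{\unit}=\cpts{\cmb_2}\cup\upts{\unit}=\cpts{\cmb_2'}$, which uses the hypothesis $\cmb_1'\in\cmbslpt{\units}$ rather than extremality. With that substitution your proof is complete and coincides with the paper's.
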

\begin{proof}
  Let $\cmb_1,\cmb_2 \in \cmbslpt{\units}$ be non-empty (otherwise, the proof is trivial) with $\cmb_1 \parteq \cmb_2$ and assume that $\ctarrow{\cmb_1}{\cmb_1'}{\unit}$ holds for $\cmb_1' \in \cmbslpt{\units}$ and $\unit \in \units$.
  Consider $\cmb_2' \coloneqq \cmb_2 \cup \{\unit\}$.
  We show $\cmb_2' \in \cmbslpt{\units}$, $\cmb_1' \parteq \cmb_2'$ and $\ctarrow{\cmb_2}{\cmb_2'}{\unit}$, which implies the lemma.
  
  First, observe that since $\unit$ is extreme in $\cmb_1'$, it is also extreme in $\cmb_2'$.
  Indeed, $\upts{\unit} \cap \clow{\cmb_2} \subseteq \upts{\unit} \cap \cpts{\cmb_2} = \upts{\unit} \cap \cpts{\cmb_1} = \emptyset$ and $\uupp{\unit} \cap \cpts{\cmb_2} = \uupp{\unit} \cap \cpts{\cmb_1} = \emptyset$ can be derived immediately.
  By \cref{obs:dependence}, $\cmb_2'$ is crossing-free.
  Moreover, deriving $\clow{\cmb_2'} = \clow{\cmb_2} \cup \ulow{\unit} \subseteq \cpts{\cmb_2} \cup \clow{\cmb_1'} \subseteq \cpts{\cmb_2'} \cup \cpts{\cmb_1'} = \cpts{\cmb_2'}$ proves that, indeed, $\cmb_2' \in \cmbslpt{\units}$.
  By making use of the marking in the same way as in the proof of \cref{lem:cf:coherent} we get $\umaxin{\cmb_2'} = \unit$, and it follows that $\cmb_1' \parteq \cmb_2'$ and $\ctarrow{\cmb_2}{\cmb_2'}{\unit}$.
\end{proof}

We define the set $\target \subseteq \quotient{\cmbslpt{\units}}{\parteq}$
which contains all equivalence classes $\partclass{\cmb}$ for which
$\cpts{\cmb} = \pts$ holds.
Observe that then $\bigcup \target = \cmbspt{\units}$.
From \Cref{lem:pt:coherent} it follows that $(\units,\cmbslpt{\units},\parteq,\target)$ is a combination
problem provided that $\cmbslpt{\units}$ is serializable.
We denote by $\agraphpt{\units}$ the corresponding combination graph.

\begin{corollary}\label{cor:partitions}
  If $\units \subseteq \frags$ is such that $\cmbslpt{\units}$ is serializable, then $\agraphpt{\units}$ represents $\cmbspt{\units}$ and the size of $\agraphpt{\units}$ is at most $O(|\quotient{\cmbslpt{\units}}{\parteq}| \cdot |\units|)$. 
\end{corollary}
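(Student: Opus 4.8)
The plan is to obtain \cref{cor:partitions} as an essentially immediate consequence of the abstract machinery already set up, namely \cref{lem:pt:coherent} together with \cref{lem:represent}. The only step with any genuine content is to identify the set of combinations represented by $\agraphpt{\units}$ with $\cmbspt{\units}$.

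First I would invoke \cref{lem:pt:coherent}: under the standing hypothesis that $\cmbslpt{\units}$ is serializable, the relation $\parteq$ on $\cmbslpt{\units}$ defined above is coherent. Combined with the set $\target \subseteq \quotient{\cmbslpt{\units}}{\parteq}$ stipulated just before the corollary (the classes $\partclass{\cmb}$ with $\cpts{\cmb} = \pts$), this shows that $(\units,\cmbslpt{\units},\parteq,\target)$ is a combination problem. Applying \cref{lem:represent} to this combination problem then yields at once that the induced combination graph $\agraphpt{\units}$ represents $\bigcup\target$ and that its size is at most $O(|\quotient{\cmbslpt{\units}}{\parteq}| \cdot |\units|)$, which is exactly the size bound asserted in the corollary.

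It remains only to check that $\bigcup\target = \cmbspt{\units}$. Since $\cmb_1 \parteq \cmb_2$ forces $\cpts{\cmb_1} = \cpts{\cmb_2}$ by definition of $\parteq$, the condition ``$\cpts{\cmb} = \pts$'' is constant on each equivalence class, so $\bigcup\target = \{\cmb \in \cmbslpt{\units} : \cpts{\cmb} = \pts\}$. For any such $\cmb$, the defining requirement $\clow{\cmb} \subseteq \cpts{\cmb}$ of membership in $\cmbslpt{\units}$ is automatic, because $\clow{\cmb} \subseteq \pts = \cpts{\cmb}$; hence this set is precisely the set of crossing-free combinations of $\units$ whose vertex sets $\upts{\unit}$ are pairwise disjoint with union $\pts$ — that is, $\cmbspt{\units}$. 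Together with the previous paragraph this establishes the corollary.

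I do not expect any real obstacle here, as all the heavy lifting has been done in \cref{lem:pt:coherent} and \cref{lem:represent}; the only point meriting a moment's care is the remark that equivalence classes determine $\cpts{\cdot}$, which is what makes $\target$ well defined and gives the clean description of $\bigcup\target$ used above.
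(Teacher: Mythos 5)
Your proposal is correct and follows exactly the route the paper takes (which it leaves implicit): \cref{lem:pt:coherent} makes $(\units,\cmbslpt{\units},\parteq,\target)$ a combination problem, \cref{lem:represent} gives the representation of $\bigcup\target$ and the size bound, and the identification $\bigcup\target = \cmbspt{\units}$ is the observation the paper records just before defining $\agraphpt{\units}$. Your extra remarks — that $\parteq$ preserves $\cpts{\cdot}$ so $\target$ is well defined, and that $\clow{\cmb}\subseteq\cpts{\cmb}$ is automatic when $\cpts{\cmb}=\pts$ — are exactly the details behind that observation.
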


\Cref{thm:cp,thm:pm} now follow from \cref{cor:partitions} and the following
two lemmas.

\begin{restatable}{lemma}{ptserializable}
  For any point set $\pts$ and any subset $\units$ of $\cparts$, it holds that $\cmbslpt{\units}$ is serializable.
\end{restatable}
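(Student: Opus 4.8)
The plan is to follow the same template as the serializability argument for $\cmbscf{\segs}$ above. Non-emptiness is immediate, since the empty combination lies in $\cmbslpt{\units}$: indeed $\clow{\emptyset} = \emptyset \subseteq \emptyset = \cpts{\emptyset}$ and the remaining conditions hold vacuously. So fix a non-empty $\cmb \in \cmbslpt{\units}$; I must produce a right-most extreme element $\umaxin{\cmb}$ of $\cmb$ and check that $\cmb \setminus \{\umaxin{\cmb}\}$ is again in $\cmbslpt{\units}$.

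To find an extreme element, I would first observe that the convex hulls $\operatorname{conv}(\upts{\unit})$ over $\unit \in \cmb$ form a family of pairwise disjoint compact convex sets. This is exactly where the hypothesis $\units \subseteq \cparts$ enters: since each $\unit$ is a convex part, $\operatorname{conv}(\upts{\unit})$ contains no point of $\pts$ besides the vertices of $\unit$, so (as distinct units of $\cmb$ have disjoint vertex sets) no such hull contains a vertex of another unit; together with $\cmb$ being crossing-free and $\pts$ being in general position --- so that no edge of one unit passes through a vertex of, or overlaps an edge of, another --- this forces the hull boundaries, and hence the hulls themselves, to be pairwise disjoint (the degenerate cases where a unit is a single point or a single segment are checked in the same way). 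I then invoke the same classical fact used for segments, \cite{T85}: from a finite family of pairwise disjoint compact convex sets in the plane, at least one can be moved to infinity in the $+y$-direction by a continuous translation without ever meeting another member. If $\operatorname{conv}(\upts{\unit})$ has this property, then $\unit$ is extreme: were some $\unit' \in \cmb$ to satisfy $\unit \udep \unit'$, then by \cref{obs:dependence} there would be a point $p'$ on $\unit'$ lying directly and strictly above a point $p$ on $\unit$, and translating $\operatorname{conv}(\upts{\unit})$ upward by $y(p') - y(p) > 0$ would send $p$ to $p' \in \operatorname{conv}(\upts{\unit'})$, contradicting the movability property.

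Next I would show that any two extreme elements of $\cmb$ are comparable under $\utorgt$. If $\unit_1 \neq \unit_2$ are both extreme but neither is entirely to the left of the other, their $x$-ranges overlap; since the range endpoints are distinct points of $\pts$ and therefore have distinct $x$-coordinates, the overlap contains an $x$-coordinate $x_0$ interior to both. The line $x = x_0$ meets the two disjoint hulls in two disjoint vertical segments, one strictly above the other --- say the one coming from $\unit_1$. Since a convex part contains all boundary edges of its hull, the bottom point of the $\unit_1$-segment is a point on $\unit_1$ and the top point of the $\unit_2$-segment is a point on $\unit_2$, the former strictly above the latter, so by \cref{obs:dependence} we get $\unit_2 \udep \unit_1$, contradicting that $\unit_2$ is extreme. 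Hence the (nonempty, finite) set of extreme elements of $\cmb$ is linearly ordered by $\utorgt$ and has a maximum, the desired right-most extreme element $\umaxin{\cmb}$.

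Finally, write $\unit = \umaxin{\cmb}$; I must check $\cmb \setminus \{\unit\} \in \cmbslpt{\units}$. It is a subset of a crossing-free combination with pairwise disjoint vertex sets, so the only condition to verify is $\clow{\cmb \setminus \{\unit\}} \subseteq \cpts{\cmb \setminus \{\unit\}}$. Using disjointness of vertex sets, $\cpts{\cmb \setminus \{\unit\}} = \cpts{\cmb} \setminus \upts{\unit}$, and since $\clow{\cmb} \subseteq \cpts{\cmb}$ it is enough to show $\ulow{\unit'} \cap \upts{\unit} = \emptyset$ for every $\unit' \in \cmb$ with $\unit' \neq \unit$. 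But $\ulow{\unit'} \cap \upts{\unit} \neq \emptyset$ would give $\unit \udep \unit'$ directly from the definition of $\udep$, contradicting that $\unit$ is extreme. I expect the main obstacle to be the clean handling of the first step --- proving the convex hulls are pairwise disjoint (not merely interior-disjoint) and treating the low-dimensional units uniformly --- together with the careful translation, throughout, between the geometric statements (movability, vertical slices) and the combinatorial relation $\udep$ via \cref{obs:dependence}; the overall skeleton is exactly that of the segment case.
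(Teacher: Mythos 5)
Your proposal is correct and follows essentially the same route as the paper: a right-most extreme element is obtained by applying the movability result of \cite{T85} to the (pairwise disjoint) convex hulls of the units and ordering the extreme elements by $\utorgt$, and closure under removal reduces to showing $\clow{\cmb\setminus\{\unit\}}\cap\upts{\unit}=\emptyset$, which would otherwise contradict extremality of $\unit$ via the definition of $\udep$. The paper states the first part only as ``analogous to the segment case'' and phrases the second part as a contradiction starting from a point of $\clow{\cmb'}\setminus\cpts{\cmb'}$, but these are the same arguments you give, just in less detail.
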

\begin{proof}
  The proof for the existence of right-most extreme elements is analogous to the first part of the proof of \Cref{lem:cf:serializable}.
  However, proving that $\cmb' \coloneqq \cmb \setminus \{\unit\}$, where $\unit = \umaxin{\cmb}$, is also an element of $\cmbslpt{\units}$ is not completely trivial since we have to verify that $\clow{\cmb'} \subseteq \cpts{\cmb'}$ holds.

  For the sake of contradiction, let us assume that there exists a point $\pt \in \clow{\cmb'} \setminus \cpts{\cmb'}$.
  From $\cmb' \subseteq \cmb$ we get $\clow{\cmb'} \subseteq \clow{\cmb}$ and hence also $\pt \in \clow{\cmb}$.
  We now use $\clow{\cmb} \subseteq \cpts{\cmb}$, which holds by definition of $\cmbslpt{\units}$, to obtain $\pt \in \cpts{\cmb}$.
  By combining this with the assumption $\pt \not\in \cpts{\cmb'}$ we obtain $\pt \in \upts{\unit}$ because $\cpts{\cmb} \setminus \cpts{\cmb'} = \upts{\unit}$.
  It follows that $\clow{\cmb'} \cap \upts{\unit}$ is non-empty since it contains at least the point $\pt$.
  This, however, contradicts the fact that $\unit$ is extreme in $\cmb$.
\end{proof}

Note that in the case of convex partitions, the general bound on the size of the resulting combination graph from \cref{cor:partitions} is insufficient to prove \Cref{thm:cp} because $\units = \cparts$ alone can be of size $\Omega(2^\ptsnum)$.
The following lemma is therefore really needed.

\begin{restatable}{lemma}{ptbound}
  \label{lem:pt:bound}
  For any point set $\pts$ of size $\ptsnum$ and any subset $\units$ of $\cparts$, the size of $\agraphpt{\units}$ is at most $O(2^nn^3)$.
\end{restatable}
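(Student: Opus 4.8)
The plan is to bound the number of equivalence classes of $\parteq$ on $\cmbslpt{\units}$ and then invoke \cref{cor:partitions}. Recall that, by definition, an equivalence class of $\parteq$ is determined entirely by two pieces of data: the vertex set $\cpts{\cmb} \subseteq \pts$ and the left-most point $\ulft{\umaxin{\cmb}} \in \pts$ of the right-most extreme unit. The first is an arbitrary subset of the $\ptsnum$-element set $\pts$, contributing a factor of at most $2^\ptsnum$; the second is one of at most $\ptsnum$ points. Hence $|\quotient{\cmbslpt{\units}}{\parteq}| = O(2^\ptsnum \ptsnum)$, regardless of how large $\units = \cparts$ itself is. This is the crucial point: even though $|\cparts|$ can be as large as $\Omega(2^\ptsnum)$, the equivalence relation collapses combinations so aggressively that only the coarse ``alive/free'' coloring and the marking survive.

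Next I would feed this into \cref{cor:partitions}, which gives size at most $O(|\quotient{\cmbslpt{\units}}{\parteq}| \cdot |\units|)$. Naively this yields $O(2^\ptsnum \ptsnum \cdot |\cparts|)$, which is far too weak since $|\cparts|$ may be exponential. So the general bound from the corollary is not directly usable, and the real work is a sharper count of the out-degree of each vertex in $\agraphpt{\units}$. The key observation is that, since $\agraphpt{\units}$ is deterministic (\cref{obs:cgproperties}), from a fixed vertex $\partclass{\cmb}$ each label $\unit$ leads to at most one successor; but more importantly, the labels $\unit$ that can legitimately appear on an edge out of $\partclass{\cmb}$ are highly constrained. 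Such a $\unit$ must be extreme in $\cmb \cup \{\unit\}$, so $\upts{\unit}$ is disjoint from $\cpts{\cmb}$ and $\uupp{\unit} \cap \cpts{\cmb} = \emptyset$; moreover $\unit$ must become $\umaxin{\cmb \cup \{\unit\}}$, which pins down $\ulft{\unit}$ to one of $\ptsnum$ choices. I would argue that, once $\cpts{\cmb}$, the coloring, and $\ulft{\unit}$ are fixed, the relevant quantity for the successor vertex is only $\cpts{\cmb} \cup \upts{\unit}$ together with the new marking $\ulft{\unit}$ — that is, the successor's equivalence class is again one of $O(2^\ptsnum \ptsnum)$ possibilities. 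So I would count edges by their \emph{target} rather than by their source: for each of the $O(2^\ptsnum \ptsnum)$ possible target classes $\partclass{\cmb'}$ and each of the $\ptsnum$ possible values of $\ulft{\umaxin{\cmb'}}$ (already part of the class data) and each of the $\ptsnum$ possible values of $\urgt{\umaxin{\cmb'}}$, there is at most one incoming edge, because the label $\unit = \umaxin{\cmb'}$ of an incoming edge is a convex part whose vertex set is forced: $\upts{\unit}$ must be exactly the set of points of $\pts$ in the convex hull of $\{\ulft{\unit}\} \cup (\cpts{\cmb'} \setminus \cpts{\cmb})$, but since a convex part contains \emph{all} points of $\pts$ in its convex hull, $\unit$ is uniquely determined by any two of its extreme points, in particular by $(\ulft{\unit}, \urgt{\unit})$ together with the requirement $\upts{\unit} \subseteq \cpts{\cmb'}$. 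Thus each target class has at most $O(\ptsnum^2)$ incoming edges, giving $O(2^\ptsnum \ptsnum \cdot \ptsnum^2) = O(2^\ptsnum \ptsnum^3)$ edges in total, and $O(2^\ptsnum \ptsnum)$ vertices; hence the size is $O(2^\ptsnum \ptsnum^3)$ as claimed.

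I expect the main obstacle to be the edge-counting step, specifically the claim that a convex part $\unit \in \cparts$ appearing as a label is uniquely pinned down by a bounded amount of data extractable from the source and target equivalence classes. The definition of $\cparts$ helps here — convex parts are ``filled in'' (every point of $\pts$ inside the convex hull is a vertex), so the geometry of $\unit$ is recoverable from $\upts{\unit}$, and $\upts{\unit}$ in turn is recoverable from $\cpts{\cmb'} \setminus \cpts{\cmb}$ — but one must be careful that $\cmb$ (the source) is not uniquely recoverable from $\partclass{\cmb}$; only $\cpts{\cmb}$ is, and fortunately that is all we need since $\upts{\unit} = \cpts{\cmb'} \setminus \cpts{\cmb}$ depends only on the vertex sets. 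A secondary subtlety is verifying that $\ulft{\unit}$ and $\urgt{\unit}$ really do determine $\unit$ among convex parts contained in a fixed point set; this follows because the convex hull of $\upts{\unit}$ has $\ulft{\unit}$ and $\urgt{\unit}$ as its leftmost and rightmost vertices, and $\upts{\unit}$ is then forced to be all points of $\cpts{\cmb'}$ in that hull — but one should check there is no ambiguity about which points of $\pts \setminus \cpts{\cmb'}$ could intrude, which is ruled out precisely because $\unit$ is extreme in $\cmb'$ so $\ulow{\unit}, \uupp{\unit}$ avoid $\cpts{\cmb'}$ appropriately. I would want to double-check whether the factor is $\ptsnum^3$ or can be shaved to $\ptsnum^2$, but $\ptsnum^3$ is all the lemma asks for.
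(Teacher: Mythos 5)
Your vertex count of $O(2^n n)$ classes is correct and matches the paper, and you are also right that the generic bound from \cref{cor:partitions} is useless here because $|\cparts|$ can be exponential. The gap is in the edge-counting step. Your argument hinges on the claim that an incoming edge's label $\unit = \umaxin{\cmb'}$ is uniquely determined by the target class $\partclass{\cmb'}$ together with $(\ulft{\unit},\urgt{\unit})$, because ``a convex part is uniquely determined by any two of its extreme points'' subject to $\upts{\unit}\subseteq\cpts{\cmb'}$. That is false. A convex part is determined by its full vertex set (equivalently its convex hull), not by its leftmost and rightmost vertices: a point $\pt$ of $\cpts{\cmb'}$ lying between $\pt_l=\ulft{\unit}$ and $\pt_r=\urgt{\unit}$ can either be a vertex of $\unit$ or lie strictly below the lower hull of $\unit$ (i.e.\ in $\uslow{\unit}$, covered by some other part of $\cmb'$), and both options are realized by different representatives of the \emph{same} target class. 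Already for three points $\pt_1,\pt_2,\pt_3$ with $\pt_2$ below the segment $\pt_1\pt_3$, the class $\partclass{\cmb'}$ with $\cpts{\cmb'}=\pts$ and marking $\pt_1$ has two distinct incoming edges whose labels (the triangle $\pt_1\pt_2\pt_3$ and the segment $\pt_1\pt_3$) share the same pair $(\pt_1,\pt_3)$; scaling this up, a single target class can receive exponentially many incoming edges with identical $(\ulft{\unit},\urgt{\unit})$. So counting edges by target with a $\poly(n)$ per-target bound cannot work, and the recovery of $\upts{\unit}$ as $\cpts{\cmb'}\setminus\cpts{\cmb}$ needs the \emph{source}, which reintroduces the factor you were trying to avoid.

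The paper's proof counts by (label, source) instead, and the balancing act is the point you are missing. Fix $\pt_l\ptspreceq\pt_r$ and let $\pts_l^r$ be the points between them: there are at most $2^{r-l+1}$ convex parts $\unit$ with these extremes, which is exponential in $r-l$ --- but for a \emph{fixed} such $\unit$, the key observation is that any source $\cmb$ of an edge labeled $\unit$ satisfies $\cpts{\cmb}\cap\pts_l^r=\ulow{\unit}\setminus\upts{\unit}$ (a point of $\pts_l^r$ in $\upts{\unit}$ cannot be in $\cpts{\cmb}$ by disjointness, a point in $\uupp{\unit}$ cannot be in $\cpts{\cmb}$ by extremeness, and a point strictly below $\unit$ must be in $\cpts{\cmb}$ since $\clow{\cmb'}\subseteq\cpts{\cmb'}$). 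Hence the source's coloring is completely forced on $\pts_l^r$ and free only on the remaining $n-(r-l+1)$ points, giving at most $2^{\,n-(r-l+1)}\,n$ sources per label. The product $2^{r-l+1}\cdot 2^{\,n-(r-l+1)}\,n=2^n n$ is independent of $r-l$, and summing over the $O(n^2)$ pairs $(l,r)$ yields $O(2^n n^3)$. To repair your proof you would need to replace the target-based uniqueness claim with an argument of this trade-off type.
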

\begin{proof}
  We prove that the number of labeled edges in $\agraphpt{\units}$ can be bounded by $O(2^nn^3)$ from above, which then implies the lemma.

  Fix $\pt_l, \pt_r \in \pts$ with $\pt_l \ptspreceq \pt_r$, and let $\pts_l^r \coloneqq \{\pt_l,\pt_{l+1},\dots,\pt_{r-1},\pt_r\} \subseteq \pts$ be the set of points between $\pt_l$ and $\pt_r$.
  Observe that there are at most $2^{r-l+1}$ convex parts $\unit \in \cparts$ for which $\pt_l = \ulft{\unit}$ and $\pt_r = \urgt{\unit}$ holds.
  Let us also fix such a convex part $\unit$. 
  Next, we will give a bound on the number of edges in $\agraphpt{\units}$ with label $\unit$.
  This bound will only depend on the indices $l$ and $r$.
  
  Suppose that $\cgarrow{\cmb}{\cmb'}{\unit}$ is an edge in $\agraphpt{\units}$ and assume further that $\ctarrow{\cmb}{\cmb'}{\unit}$ holds.
  Then, we easily see that $\uslow{\unit} = \cpts{\cmb} \cap \pts_l^r$, where $\uslow{\unit} \coloneqq \ulow{\unit} \setminus \upts{\unit}$.
  This means that for each point $\pt \in \pts_l^r$ it is determined by $\unit$ whether $\pt \in \cpts{\cmb}$ holds or not.
  For a point $\pt \in \pts \setminus \pts_l^r$ there are at most two choices, either $\pt \in \cpts{\cmb}$ or $\pt \not\in \cpts{\cmb}$.
  It follows that there are at most $2^{n-(r-l+1)}n$ many vertices $\partclass{\cmb}$ with an outgoing edge that is labeled by $\unit$.
  As usual, the additional factor $n$ comes from the special marking.

  The total number of labeled edges in $\agraphpt{\units}$ can therefore be bounded by
  \begin{align*}
    \sum_{l=1}^{n} \sum_{r=l}^{n} 2^{r-l+1} \cdot 2^{n-(r-l+1)}n = \sum_{l=1}^{n} \sum_{r=l}^{n} 2^nn = O(2^nn^3)\text{.}
  \end{align*}
\end{proof}


\subsection{Subdivisions}
Let $\units \subseteq \cpartstp$, which means in particular that the \emph{shape} of each unit $\unit$ is defined.
We define the set $\cmbslsd{\units}$ which contains all combinations $\cmb$ of $\units$ for which the following holds.
There exists an $x$-monotone polygonal chain, denoted by $\cchn{\cmb}$, which starts in $\pt_1$, ends in $\pt_\ptsnum$, has only points from $\pts$ as vertices, and satisfies the following with regard to $\cmb$.
The shapes of all $\unit$ in $\cmb$ form a subdivision of the region between $\cchn{\cmb}$ and the lower convex hull of $\pts$, by which we mean that the shapes are pairwise interior-disjoint and each point of the plane in the interior of that region is contained in the shape of at least one element $\unit$ of $\cmb$.

As depicted in \cref{fig:encoding_subdivision}, we describe such a combination $\cmb$ by giving the vertex points of $\cchn{\cmb}$ the color $\ptalive$, by giving all other points the color $\ptfree$, and by adding the usual marking.
Guided by this description, we put $\cmb \parteq \cmb'$ if and only if
\begin{itemize}
  \item $\cchn{\cmb} = \cchn{\cmb'}$,
  \item $\ulft{\umaxin{\cmb}} = \ulft{\umaxin{\cmb'}}$.
\end{itemize}
\begin{figure}
  \begin{center}
    \begin{tikzpicture}
  \newcommand\exampleCoords{
    \coordinate (c0) at (0.0,1.0);
    \coordinate (c1) at (0.15,0.4);
    \coordinate (c2) at (0.3,1.7);
    \coordinate (c3) at (0.45,0.9);
    \coordinate (c4) at (0.6,1.3);
    \coordinate (c5) at (0.75,0.1);
    \coordinate (c6) at (0.9,1.9);
    \coordinate (c7) at (1.05,0.9);
    \coordinate (c8) at (1.2,0.3);
    \coordinate (c9) at (1.35,1.2);
    \coordinate (c10) at (1.5,0.8);
  }
  \newcommand\examplePoints{
    \node[alive] (p0) at (c0) {};
    \node[free] (p1) at (c1) {};
    \node[free] (p2) at (c2) {};
    \node[alive] (p3) at (c3) {};
    \node[alive] (p4) at (c4) {};
    \node[alive] (p4) at (c4) {};
    \node[marked] (m) at (c4) {};
    \node[free] (p5) at (c5) {};
    \node[free] (p6) at (c6) {};
    \node[free] (p7) at (c7) {};
    \node[free] (p8) at (c8) {};
    \node[alive] (p9) at (c9) {};
    \node[alive] (p10) at (c10) {};
  }
  \newcommand\exampleLabel[1]{
    \node (label) at (0.55,-0.4) {\footnotesize #1};
  }
  \begin{scope}[xshift=0]
    \exampleCoords
    \draw[fill=black!20!white] (c0) -- (c1) -- (c3) -- cycle;
    \draw[fill=black!20!white] (c1) -- (c3) -- (c7) -- cycle;
    \draw[fill=black!20!white] (c3) -- (c4) -- (c7) -- cycle;
    \draw[fill=black!20!white] (c1) -- (c5) -- (c7) -- cycle;
    \draw[fill=black!20!white] (c5) -- (c7) -- (c8) -- cycle;
    \draw[fill=black!20!white] (c4) -- (c7) -- (c9) -- cycle;
    \draw[fill=black!20!white] (c7) -- (c9) -- (c10) -- cycle;
    \draw[fill=black!20!white] (c7) -- (c8) -- (c10) -- cycle;
    \examplePoints
  \end{scope}
  \begin{scope}[xshift=200]
    \exampleCoords
    \draw[fill=black!20!white] (c0) -- (c1) -- (c3) -- cycle;
    \draw[fill=black!20!white] (c1) -- (c3) -- (c8) -- (c5) -- cycle;
    \draw[fill=black!20!white] (c3) -- (c4) -- (c8) -- cycle;
    \draw[fill=black!20!white] (c4) -- (c8) -- (c9) -- cycle;
    \draw[fill=black!20!white] (c4) -- (c8) -- (c7) -- cycle;
    \draw[fill=black!20!white] (c4) -- (c7) -- (c9) -- cycle;
    \draw[fill=black!20!white] (c7) -- (c8) -- (c9) -- cycle;
    \draw[fill=black!20!white] (c8) -- (c9) -- (c10) -- cycle;
    \examplePoints
  \end{scope}
  \begin{scope}[xshift=100]
    \exampleCoords
    \draw[fill=black!20!white] (c0) -- (c1) -- (c5) -- (c3) -- cycle;
    \draw[fill=black!20!white] (c3) -- (c5) -- (c8) -- (c7) -- (c4) -- cycle;
    \draw[fill=black!20!white] (c7) -- (c8) -- (c10) -- cycle;
    \draw[fill=black!20!white] (c4) -- (c9) -- (c10) -- (c7) -- cycle;
    \examplePoints
  \end{scope}
\end{tikzpicture}
  \end{center}
  \caption{
    Three elements of $\protect\cmbslsd{\cpartstp}$ that are considered equivalent.
  }
  \label{fig:encoding_subdivision}
\end{figure}

\begin{lemma}
  \label{lem:sd:coherent}
  Let $\units \subseteq \cpartstp$ with $\cmbslsd{\units}$ serializable.
  Then, the equivalence relation $\parteq$ on $\cmbslsd{\units}$, as defined above, is coherent.
\end{lemma}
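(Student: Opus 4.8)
The plan is to mimic closely the proofs of \cref{lem:cf:coherent} and \cref{lem:pt:coherent}, adjusting only for the fact that the equivalence data is now the polygonal chain $\cchn{\cmb}$ rather than a coloring via shadows. So fix non-empty $\cmb_1,\cmb_2 \in \cmbslsd{\units}$ with $\cmb_1 \parteq \cmb_2$, and suppose $\ctarrow{\cmb_1}{\cmb_1'}{\unit}$ for some $\cmb_1' \in \cmbslsd{\units}$ and $\unit \in \units$. I set $\cmb_2' \coloneqq \cmb_2 \cup \{\unit\}$ and aim to prove the three statements $\cmb_2' \in \cmbslsd{\units}$, $\cmb_1' \parteq \cmb_2'$, and $\ctarrow{\cmb_2}{\cmb_2'}{\unit}$, which together yield coherency.

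First I would argue that $\unit$ is extreme in $\cmb_2'$. Here the key observation is that $\cmb_1 \parteq \cmb_2$ gives $\cchn{\cmb_1} = \cchn{\cmb_2}$, and since in a subdivision the shapes exactly fill the region between the chain and the lower convex hull, the set $\cpts{\cmb}$ together with the whole ``filled'' region is determined by $\cchn{\cmb}$; in particular $\cpts{\cmb_1}=\cpts{\cmb_2}$ and $\clow{\cmb_1}=\clow{\cmb_2}$ (every point strictly below the chain but above the lower hull lies in some shape's interior or on its boundary). Using that $\unit$ is extreme in $\cmb_1'$ — i.e.\ $\upts{\unit}\cap\clow{\cmb_1}=\emptyset$ and $\uupp{\unit}\cap\cpts{\cmb_1}=\emptyset$ — the same two emptiness conditions transfer verbatim to $\cmb_2$, so $\unit$ is extreme in $\cmb_2'$. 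By \cref{obs:dependence}, since $\cmb_2$ is crossing-free and nothing in $\cmb_2$ depends on $\unit$, the combination $\cmb_2'$ is crossing-free.

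Next I would show $\cmb_2' \in \cmbslsd{\units}$ by exhibiting its chain: I claim $\cchn{\cmb_2'} = \cchn{\cmb_1'}$. Intuitively, adding the extreme unit $\unit$ to $\cmb_1$ changes the chain $\cchn{\cmb_1}$ to $\cchn{\cmb_1'}$ in a way that depends only on the old chain and on $\unit$ itself (the new chain is obtained by replacing the portion of the old chain lying over the shape of $\unit$ with the upper boundary of that shape); since $\cchn{\cmb_2}=\cchn{\cmb_1}$, adding the very same $\unit$ to $\cmb_2$ produces the same new chain. I would make this precise by checking that the shapes of $\cmb_2'$ tile exactly the region between $\cchn{\cmb_1'}$ and the lower hull — the shapes of $\cmb_2$ tile the region below $\cchn{\cmb_2}=\cchn{\cmb_1}$, the shape of $\unit$ sits directly on top along part of that chain (this is precisely what extremality of $\unit$ encodes: $\unit$'s lower boundary meets no shape of $\cmb_2$ from above), and their union is interior-disjoint and fills the region below $\cchn{\cmb_1'}$. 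With the chain identified, $\cchn{\cmb_2'}=\cchn{\cmb_1'}$ follows, and then the marking argument from \cref{lem:cf:coherent} (if $\umaxin{\cmb_2'}=\unit'\neq\unit$ then $\unit\utorgt\unit'$, forcing $\umaxin{\cmb_2}=\unit'$ and, via $\cmb_1\parteq\cmb_2$, $\unit\utorgt\umaxin{\cmb_1}$, contradicting $\unit=\umaxin{\cmb_1'}$) gives $\umaxin{\cmb_2'}=\unit$, hence $\ulft{\umaxin{\cmb_1'}}=\ulft{\umaxin{\cmb_2'}}$ and so $\cmb_1'\parteq\cmb_2'$. Finally, $\ctarrow{\cmb_2}{\cmb_2'}{\unit}$ requires $\unit\notin\cmb_2$: if $\unit\in\cmb_2$ then $\cmb_1\parteq\cmb_2=\cmb_2'\parteq\cmb_1'$, and since adding $\unit$ strictly enlarges the filled region (the chain strictly moves up, or $\cpts{}$ strictly grows), $\cchn{\cmb_1}\neq\cchn{\cmb_1'}$, contradicting $\cmb_1\parteq\cmb_1'$ — this plays the role that ``progressive'' played before but is automatic here just as the analogous remark preceding \cref{lem:pt:coherent}.

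The main obstacle I anticipate is the second step: making rigorous the claim that the chain transforms identically, i.e.\ that $\cchn{\cmb_1'}$ is a function of $\cchn{\cmb_1}$ and $\unit$ alone, independent of the internal structure of $\cmb_1$. This is geometrically plausible — it is exactly the statement that $\umaxin{}$ ``peels off'' the right-most extreme unit from the top of the subdivision — but writing it out carefully requires a clean description of how the upper boundary of a subdivision-of-a-region changes when one removes an extreme cell, and verifying that the removed cell's shape lies against the chain in a way determined by the shadow conditions. Everything else (extremality transfer, the marking bookkeeping, the non-membership argument) is routine and parallels the earlier lemmas.
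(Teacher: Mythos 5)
Your proposal follows the same route as the paper's proof: transfer extremality of $\unit$ from $\cmb_1'$ to $\cmb_2'$, establish $\cmb_2'\in\cmbslsd{\units}$ with $\cchn{\cmb_2'}=\cchn{\cmb_1'}$, and finish with the standard marking argument; the paper simply declares the chain identification ``immediate from the definition'' where you spell it out. The extra details you supply (that $\cchn{\cmb}$ determines $\cpts{\cmb}$ and $\clow{\cmb}$ for subdivisions, and that progressivity is automatic because adding $\unit$ strictly changes the chain) are correct and consistent with the paper's remarks surrounding \cref{lem:pt:coherent}.
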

\begin{proof}
  Let $\cmb_1,\cmb_2 \in \cmbslsd{\units}$ be non-empty (otherwise, the proof is trivial) with $\cmb_1 \parteq \cmb_2$ and assume that $\ctarrow{\cmb_1}{\cmb_1'}{\unit}$ holds for $\cmb_1' \in \cmbslsd{\units}$ and $\unit \in \units$.
  Consider $\cmb_2' \coloneqq \cmb_2 \cup \{\unit\}$.
  We show $\cmb_2' \in \cmbslsd{\units}$, $\cmb_1' \parteq \cmb_2'$ and $\ctarrow{\cmb_2}{\cmb_2'}{\unit}$, which implies the lemma.
  
  From the definition of $\cmbslsd{\units}$, it is immediate that also $\cmb_2' \in \cmbslsd{\units}$ and $\cchn{\cmb_2'} = \cchn{\cmb_1'}$.
  Since $\unit$ is extreme in $\cmb_1'$, it is also extreme in $\cmb_2'$.
  Using the usual argument involving the marking we see that $\umaxin{\cmb_2'} = \unit$, it follows that $\cmb_1' \parteq \cmb_2'$ and $\ctarrow{\cmb_2}{\cmb_2'}{\unit}$.
\end{proof}

Let $\target \subseteq \quotient{\cmbslsd{\units}}{\parteq}$ be the set that contains $\partclass{\cmb}$ if and only if $\cchn{\cmb}$ is equal to the upper convex hull of $\pts$.
Observe that $\bigcup \target = \cmbssd{\units}$.
If $\units$ is such that $\cmbslsd{\units}$ is serializable, then $(\units,\cmbslsd{\units},\parteq,\target)$ is a combination problem and we denote by $\agraphsd{\units}$ the corresponding combination graph.

\begin{corollary}\label{cor:subdivisions}
  If $\units \subseteq \cpartstp$ is such that $\cmbslsd{\units}$ is serializable, then $\agraphsd{\units}$ represents $\cmbssd{\units}$ and the size of $\agraphsd{\units}$ is at most $O(|\quotient{\cmbslsd{\units}}{\parteq}| \cdot |\units|)$. 
\end{corollary}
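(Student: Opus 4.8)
The plan is to observe that this corollary is an immediate instance of the general machinery from \cref{sec:framework}. By hypothesis $\cmbslsd{\units}$ is serializable, and \cref{lem:sd:coherent} tells us that the equivalence relation $\parteq$ defined above (via the chain $\cchn{\cmb}$ together with the marking) is coherent on $\cmbslsd{\units}$. Combined with the set $\target \subseteq \quotient{\cmbslsd{\units}}{\parteq}$ specified just before the corollary, this makes $(\units,\cmbslsd{\units},\parteq,\target)$ a combination problem, so $\agraphsd{\units}$ is a well-defined combination graph to which \cref{lem:represent} applies.

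Applying \cref{lem:represent} directly yields that $\agraphsd{\units}$ represents $\bigcup\target$ and that its size is $O(|\quotient{\cmbslsd{\units}}{\parteq}| \cdot |\units|)$, which is exactly the claimed bound. To conclude it only remains to check the identity $\bigcup\target = \cmbssd{\units}$. By definition, a class $\partclass{\cmb}$ lies in $\target$ iff $\cchn{\cmb}$ is the upper convex hull of $\pts$; and for such a $\cmb$ the defining property of $\cmbslsd{\units}$ says that the shapes of the units in $\cmb$ subdivide the region between $\cchn{\cmb}$ and the lower convex hull of $\pts$, which is then the whole convex hull of $\pts$, so $\cmb \in \cmbssd{\units}$. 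The converse inclusion is equally direct, since any subdivision of the convex hull of $\pts$ by shapes of units in $\units$ has the upper convex hull as its associated chain and hence lies in some class of $\target$. Thus $\bigcup\target = \cmbssd{\units}$ and the corollary follows.

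All the genuine work has already been carried out in \cref{lem:sd:coherent} (and, separately, in the forthcoming lemma asserting that $\cmbslsd{\units}$ is serializable — in particular for $\units = \cpartstp$, which is what \cref{thm:cs,thm:tr} will additionally need); for the corollary itself there is no real obstacle. The only step requiring a moment of care is the unwinding of definitions in the equality $\bigcup\target = \cmbssd{\units}$, where one has to match the geometric notion ``subdivision of the convex hull'' against the combinatorial condition defining $\cmbslsd{\units}$ and the chosen acceptance set $\target$.
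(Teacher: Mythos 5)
Your proposal is correct and matches the paper's treatment: the corollary is exactly the instantiation of \cref{lem:represent} to the combination problem $(\units,\cmbslsd{\units},\parteq,\target)$, whose well-definedness rests on the serializability hypothesis and \cref{lem:sd:coherent}, together with the identity $\bigcup\target = \cmbssd{\units}$ that the paper merely asserts with ``Observe that''. Your explicit unwinding of that identity is the same definitional check the paper leaves implicit.
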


\Cref{thm:cs,thm:tr} now follow from \cref{cor:subdivisions} and the following two lemmas.

\begin{restatable}{lemma}{sdserializable}
  For any point set $\pts$ and any subset $\units$ of $\cpartstp$, it holds that $\cmbslsd{\units}$ is serializable.
\end{restatable}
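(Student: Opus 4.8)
The plan is to verify the three requirements of \cref{def:serializable2} in turn: non-emptiness of $\cmbslsd{\units}$, existence of a right-most extreme element in every non-empty $\cmb \in \cmbslsd{\units}$, and closure of $\cmbslsd{\units}$ under removal of that element. Non-emptiness is immediate: taking $\cchn{\emptyset}$ to be the lower convex hull of $\pts$, the region between $\cchn{\emptyset}$ and the lower hull is empty and is vacuously subdivided by the empty family of shapes, so $\emptyset \in \cmbslsd{\units}$.

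For the existence of a right-most extreme element in a non-empty $\cmb$, I would argue exactly as in the first part of the proof of \cref{lem:cf:serializable}, but applied to the \emph{shapes} of the units in $\cmb$ rather than to open segments. These shapes are convex and pairwise interior-disjoint, so by the cited topological fact \cite{T85} at least one of them, say the shape of $\unit$, can be translated to $+\infty$ in the $y$-direction along a continuous motion without ever meeting the interior of another shape; such a $\unit$ has nothing of $\cmb$ directly and strictly above it and is therefore, by \cref{obs:dependence}, extreme in $\cmb$. Next I would observe that any two extreme units $\unit_1,\unit_2$ have disjoint interiors of their $x$-projections: otherwise, at a common interior $x$-coordinate the vertical slices of the two shapes (both non-degenerate, since general position rules out vertical edges) are disjoint open intervals, so the one whose slice is lower has a point of the other strictly directly above it and thus depends on it, contradicting its extremeness. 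Hence the extreme units are linearly ordered by $\utorgt$, and the $\utorgt$-largest one is the unique right-most extreme element $\umaxin{\cmb}$.

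The genuinely new work, compared with \cref{lem:cf:serializable}, is showing $\cmb' \coloneqq \cmb \setminus \{\umaxin{\cmb}\} \in \cmbslsd{\units}$. Write $\unit = \umaxin{\cmb}$ and let $\mathcal{R}$ be the region subdivided by the shapes of $\cmb$. Since $\unit$ is extreme and $\cmb$ exactly subdivides $\mathcal{R}$, the portion of $\mathcal{R}$ lying strictly above the shape of $\unit$ is empty, so the upper boundary of the shape of $\unit$ is a contiguous sub-chain of $\cchn{\cmb}$ running between $\ulft{\unit}$ and $\urgt{\unit}$. I would then take $\cchn{\cmb'}$ to be the chain obtained from $\cchn{\cmb}$ by replacing this sub-chain with the \emph{lower} boundary of the shape of $\unit$. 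This new chain is $x$-monotone (an $x$-monotone chain with one $x$-monotone piece swapped for another over the same $x$-interval), runs from $\pt_1$ to $\pt_\ptsnum$, has only points of $\pts$ as vertices (the shape of $\unit$ is a convex face with no interior vertices), and stays weakly above the lower hull (the shape of $\unit$ lies inside $\mathcal{R}$). Moreover the region it bounds with the lower hull is exactly $\mathcal{R}$ with the shape of $\unit$ removed, which is subdivided by the shapes of $\cmb \setminus \{\unit\}$. Hence $\cmb' \in \cmbslsd{\units}$, which completes the argument.

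I expect the fiddly points to be (i) the degenerate contact cases in the translation argument — two shapes meeting only in a single vertex or along a horizontal edge — which are handled just as in \cref{lem:cf:serializable}, using that general position forbids vertical edges; and (ii) pinning down cleanly that the upper boundary of an extreme shape really is a sub-chain of $\cchn{\cmb}$, which is where the defining property of $\cmbslsd{\units}$ (that $\cmb$ \emph{exactly} subdivides the region $\mathcal{R}$) does the essential work together with extremeness of $\unit$.
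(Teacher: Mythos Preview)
Your proof is correct and follows essentially the same approach as the paper, which likewise defers existence of a right-most extreme element to the argument of \cref{lem:cf:serializable} and then obtains $\cchn{\cmb'}$ from $\cchn{\cmb}$ by swapping the upper boundary of the extreme shape for its lower boundary. One small slip worth fixing: by \cref{obs:dependence}, when one slice lies above another it is the \emph{upper} unit that depends on the lower (i.e., $\unit_{\text{low}} \udep \unit_{\text{high}}$), not the reverse as you wrote---but this still witnesses $\unit_{\text{low}} \udep \unit_{\text{high}}$ and hence non-extremeness of the lower one, so your conclusion is unaffected.
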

\begin{proof}
  The proof for the existence of right-most extreme elements is again analogous to \Cref{lem:cf:serializable}.
  Moreover, it is not hard to see that $\cmb' \coloneqq \cmb \setminus \{\unit\}$, where $\unit = \umaxin{\cmb}$, is also an element of $\cmbslsd{\units}$.
  Simply observe that the upper convex hull of the shape of $\unit$ must be contained in $\cchn{\cmb}$, which means that $\cchn{\cmb'}$ is obtained from $\cchn{\cmb}$ by replacing the upper hull of $\unit$ with its lower hull.
\end{proof}

Similar to the previous subsection, the next lemma is not implied by the general bound from \cref{cor:subdivisions}.

\begin{restatable}{lemma}{sdbound}
  \label{lem:sd:bound}
  For any point set $\pts$ of size $\ptsnum$ and any subset $\units$ of $\cpartstp$, the size of $\agraphsd{\units}$ is at most $O(2^nn^3)$.
\end{restatable}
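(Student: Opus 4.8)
The plan is to imitate the proof of \cref{lem:pt:bound} almost line for line, with the chain $\cchn{\cmb}$ playing the role of the point coloring. First note that the general estimate of \cref{cor:subdivisions} is hopeless here: for $\pts$ in convex position every subset of size at least $3$ is an element of $\cpartstp$, so $|\units|$ can be as large as $\Theta(2^\ptsnum)$. On the positive side, an equivalence class $\partclass{\cmb} \in \quotient{\cmbslsd{\units}}{\parteq}$ is determined by the set of points colored $\ptalive$ (which fixes $\cchn{\cmb}$, the unique $x$-monotone chain through them) together with the marked point, so $|\quotient{\cmbslsd{\units}}{\parteq}| = O(2^\ptsnum\ptsnum)$ and hence $\cgraph \coloneqq \agraphsd{\units}$ has $O(2^\ptsnum\ptsnum)$ vertices. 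Since each vertex has at most one outgoing unlabeled edge (to $\cgsink$), it remains to bound the number of \emph{labeled} edges of $\cgraph$ by $O(2^\ptsnum\ptsnum^3)$.

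Next I would fix $\pt_l,\pt_r \in \pts$ with $\pt_l \ptspreceq \pt_r$ and set $\pts_l^r \coloneqq \{\pt_l,\dots,\pt_r\}$. Every $\unit \in \cpartstp$ is determined by its vertex set $\upts{\unit}$ (its vertices are in convex position and it carries all boundary edges), and if $\ulft{\unit} = \pt_l$ and $\urgt{\unit} = \pt_r$ then $\upts{\unit} \subseteq \pts_l^r$; so there are at most $2^{r-l+1}$ such units. Fix one of them, call it $\unit$, and consider an edge $\cgarrow{\cmb}{\cmb'}{\unit}$ of $\cgraph$ together with a representative for which $\ctarrow{\cmb}{\cmb'}{\unit}$ holds, i.e.\ $\unit = \umaxin{\cmb'}$ and $\cmb = \cmb' \setminus \{\unit\}$ (such a representative exists by soundness). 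The crucial point --- the same observation already used to prove serializability of $\cmbslsd{\units}$ --- is that $\cmb' = \cmb \cup \{\unit\}$ tiles the region between $\cchn{\cmb'}$ and the lower hull of $\pts$ with $\unit$ as its right-most, hence top-most, piece, so that the shape of $\unit$ is precisely the region enclosed between $\cchn{\cmb}$ and $\cchn{\cmb'}$; consequently $\cchn{\cmb}$ restricted to the $x$-range $[\ulft{\unit},\urgt{\unit}]$ coincides exactly with the lower hull of the shape of $\unit$. Therefore the $\ptalive$/$\ptfree$ status of every point of $\pts_l^r$ is dictated by $\unit$ (a point of $\pts_l^r$ is $\ptalive$ in $\cmb$ iff it is a vertex of the lower hull of $\unit$), leaving free only the $\ptsnum - (r-l+1)$ points outside $\pts_l^r$ and, as always, the choice of marked point. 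Hence at most $2^{\ptsnum-(r-l+1)}\ptsnum$ equivalence classes have an outgoing $\unit$-labeled edge, and since $\cgraph$ has no duplicate labeled edges this also bounds the number of $\unit$-labeled edges.

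Summing over the at most $2^{r-l+1}$ choices of $\unit$ and over all $\pt_l \ptspreceq \pt_r$ then gives
\begin{align*}
  \sum_{l=1}^{\ptsnum}\sum_{r=l}^{\ptsnum} 2^{r-l+1}\cdot 2^{\ptsnum-(r-l+1)}\ptsnum
  = \sum_{l=1}^{\ptsnum}\sum_{r=l}^{\ptsnum} 2^{\ptsnum}\ptsnum
  = O(2^{\ptsnum}\ptsnum^3)
\end{align*}
labeled edges, which together with the $O(2^{\ptsnum}\ptsnum)$ vertices yields the claimed bound on the size of $\cgraph$.

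I expect the main obstacle to be the geometric claim in the middle paragraph, namely that for $\ctarrow{\cmb}{\cmb'}{\unit}$ the chain $\cchn{\cmb}$ agrees with the lower hull of the shape of $\unit$ over the \emph{entire} interval $[\ulft{\unit},\urgt{\unit}]$ --- equivalently, that no shape of $\cmb$ pokes above the lower hull of $\unit$ in that range. This should follow rigorously from the definition of $\cmbslsd{\units}$ (the shapes of $\cmb$ tile the region below $\cchn{\cmb}$, and $\unit$ being extreme means no shape of $\cmb$ lies above any point of $\unit$) combined with the inclusion $R \subseteq R'$ of the regions subdivided by $\cmb$ and by $\cmb'$; but this step deserves to be argued carefully rather than merely asserted.
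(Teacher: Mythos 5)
Your proposal is correct and is essentially the paper's intended argument: the paper's own proof of this lemma is literally ``analogous to the proof of \cref{lem:pt:bound}'', and you carry out that analogy faithfully, with the lower hull of the shape of $\unit$ playing the role that $\uslow{\unit}$ played for partitions. The geometric step you flag as the main obstacle is already established in the serializability proof for $\cmbslsd{\units}$ (the upper hull of $\unit$ lies on $\cchn{\cmb'}$ and $\cchn{\cmb}$ is obtained by swapping it for the lower hull), so nothing further is needed.
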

\begin{proof}
  Analogous to the proof of \cref{lem:pt:bound}.
\end{proof}

\section{Representations of Combination Graphs}\label{sec:construction}

In the preceding sections, vertices of a combination graph $\cgraph$ were always treated as equivalence classes over a set of combinations.
Of course, this is a very inefficient way to represent them in an actual algorithm.
However, in all cases we have seen how to describe these equivalence classes by an assignment of a constant number of colors to the points in $\pts$ and an index to the point with the special marking.
It is thus easy to encode vertices of $\cgraph$ with a linear number of bits for the coloring and a logarithmic number of bits for the index.

As for the construction of $\cgraph$, this is most easily done bottom-up.
That is, we start with the source $\cgsource$, we enumerate all outgoing edges and add the corresponding new vertices to our representation of $\cgraph$.
We continue this process, that is, pick a vertex and enumerate all outgoing edges, until we have done so for all vertices in $\cgraph$.

In all cases except for convex partitions and convex subdivisions, the enumeration of outgoing edges for a given vertex can be done in the most wasteful way while still staying within the required time bounds.
For example, in the case of \cref{thm:pg}, for every vertex in $\agraphcf{\segs}$ we can simply enumerate the whole set $\segs$ and check for each segment whether it corresponds to an outgoing edge in linear time.
In the case of \cref{thm:cp,thm:cs} we have to be more careful since we do not have the time to enumerate the whole set $\cparts$, say, for every vertex in $\agraphpt{\cparts}$.
A simple solution is to enumerate the set $\cparts$ only once at the beginning and to find for each enumerated $\unit \in \cparts$ all vertices in $\agraphpt{\cparts}$ for which $\unit$ corresponds to an outgoing edge.
The running time of this approach can be bounded by using the same arguments as in the proof of \cref{lem:pt:bound}.

\section{Enumeration with Polynomial-Time Delay}\label{sec:enumeration}

In this section we sketch an additional trick\footnote{This trick was suggested to the author by Emo Welzl.}, which allows us to enumerate the sets of crossing-free convex partitions and perfect matchings with polynomial-time delay.
That is, we give algorithms which output the elements of $\cps$ or $\pms$ in some order and without repetitions, and such that the time we have to wait for any new output is not larger than a polynomial.

As already discussed in \cref{sec:basics}, from \cref{thm:cp,thm:pm} we get enumeration algorithms for these two sets that run in time $O^\ast(\cpsnum)$ and $O^\ast(\pmsnum)$, respectively.
However, while the time delay \emph{between} any two outputs is bounded by a polyomial, the time delay \emph{before} the first output is exponential.
Precisely, there is a preprocessing phase that might take time $\Theta^\ast(2^n)$ during which we construct representations of the respective combination graphs $\agraphpt{\cparts}$ and $\agraphpt{\segs}$ and during which we do not produce any outputs.
The trick is to hide this preprocessing phase by outputting $\Theta^\ast(2^n)$ objects obtained by other means.

\thmenum*

\begin{proof}

We begin by defining a sufficiently large subset $\pmss$ of $\cps$ and $\pms$.
Elements of this set are called \emph{easy perfect matchings}, and they are constructed recursively.
If $\pts$ is the empty set, then $\pmss = \pms$.
Otherwise, let $\pt_1$ be the left-most point and let $\pt_i$ be any other point.
Let $\pts_i$ be the set of points that are to the left of the directed line through $\pt_1$ and $\pt_i$, and let $\pts_i'$ be the set of points that are to the right of that line.
The set $\pmss$ contains all perfect matchings that, for any choice of $i$, are composed of the edge $\pt_1\pt_i$ and two easy perfect matchings on $P_i$ and $P_i'$, respectively.

It is clear that easy perfect matchings are crossing-free.
Furthermore, note that efficient enumeration and recognition algorithms for the set $\pmss$ are easy to obtain from the definition.
Lastly, the number of easy perfect matchings on even-sized point sets satisfies the Catalan recurrence, and thus $|\pmss| = \Theta(2^n/n^{3/2})$.
Refer to \cite{GNT00} for more details, where these objects are used to prove a corresponding lower bound on the number of perfect matchings.

We now have everything that we need.
In order to enumerate the set $\pms$, say, with polynomial-time delay, we start the construction of $\agraphpt{\segs}$.
During this preprocessing phase, we output elements of $\pmss$ in appropriate time intervals.
Once we have an explicit representation of $\agraphpt{\segs}$, we continue the enumeration by outputting arbitrary elements of $\pms$.
Of course, whenever we have a new potential output, we have to check first whether it is an easy perfect matching, which means that it has been output before.
If that is the case, then we simply discard it.
One final caveat now is that there might be a long period where we have to discard all potential outputs, which might again lead to a delay that is no longer polynomially bounded.
However, this is easily fixed for example by only using up half of the set $\pmss$ during the preprocessing phase, and by using the other half as a substitute for every other discarded output during the second phase.
\end{proof}

\section{Spanning Trees and Spanning Cycles}\label{sec:spanning}

In this section we show that it is possible to construct non-trivial combination graphs for the sets of crossing-free spanning trees and spanning cycles.

Spanning trees and spanning cycles are harder to deal with than anything that we have encountered before.
The reason is that these graphs have properties which hold globally.
For example, the construction of $\agraphpt{\segs}$ in \cref{subsec:partitions} can be adapted in such a way that source-sink paths correspond to 2-regular (instead of 1-regular) crossing-free geometric graphs.
We simply would have to keep track of the degree of each vertex (whether it is currently 0, 1, or 2, which means we would need three instead of two colors) and in the end require that every vertex has degree 2.
However, if we want that source-sink paths correspond only to crossing-free spanning cycles, then we also need that each such path corresponds to a connected geometric graph.
Being connected is such a property that holds globally, and there seems to be no obvious and efficient way to deal with it.
To get rid of this problem, at least in the case of spanning trees and spanning cycles, we next state an auxiliary lemma.
It will allow us to translate connectivity into simpler features which can be enforced on a local level.

Let $G$ be a directed multigraph\footnote{To avoid confusion, let us stress that we introduce completely new entities here. That is, $G$ is neither a geometric graph nor a combination graph.} and let $v$ be a vertex in $G$.
$G$ is \emph{root-oriented towards $v$} if all vertices in $G$ have exactly one outgoing edge, except for $v$, which has no outgoing edges.
If $G$ is root-oriented towards $v$, then $v$ is called the \emph{root} of $G$.
Observe that being root-oriented implies that $G$ has exactly $n-1$ edges, where $n$ is the number of vertices in $G$.
It does however not imply that $G$ is connected or, in other words, a tree.
The reason is that there might be a connected component with a directed cycle.
Such components are always disconnected from the root.

A \emph{plane drawing} of $G$ is a drawing which maps all vertices of $G$ to distinct points in the plane and which draws all edges as simple curves such that no two edges intersect except possibly in a common endpoint.
Given two respective plane drawings of directed multigraphs $G_1$ and $G_2$, we say the drawings are \emph{disjoint} if they do not use any common points in the plane.
Moreover, the drawings are \emph{entangled} if for each cycle in either drawing, both its interior and exterior contain a point used by the respective other drawing.
Finally, for fixed vertices $v_1$ in $G_1$ and $v_2$ in $G_2$ we say the two drawings are \emph{tangent} in $v_1$ and $v_2$ if the points corresponding to $v_1$ and $v_2$ can be connected by an additional curve without intersecting any points already used in either drawing.


\begin{restatable}{lemma}{entangledtrees}
  \label{lem:entangledtrees}
  Let $G_1$ and $G_2$ be finite, directed multigraphs that are root-oriented towards $v_1$ and $v_2$, respectively.
  Then, there exist plane drawings of $G_1$ and $G_2$ that are disjoint, entangled and tangent in $v_1$ and $v_2$ if and only if both $G_1$ and $G_2$ are trees.
\end{restatable}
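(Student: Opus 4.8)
The plan is to prove both directions by exploiting the combinatorial structure forced by root-orientation, and then building the drawings carefully around the cycles that appear when a $G_i$ fails to be a tree. First I would record the basic structural fact: a finite directed multigraph that is root-oriented towards $v$ decomposes into the basin of attraction of $v$ (which is a tree oriented towards $v$) together with a collection of ``rho-shaped'' components, each consisting of a unique directed cycle with trees hanging off it, oriented towards the cycle. Thus $G_i$ is a tree if and only if it contains no directed cycle. This reduces the statement to: disjoint, entangled, tangent plane drawings exist if and only if neither $G_1$ nor $G_2$ contains a directed cycle. Note that ``entangled'' only constrains the drawings via cycles, and ``directed cycle in $G_i$'' forces a closed curve in every drawing of $G_i$ (the images of the edges of that directed cycle form a closed curve, possibly non-simple, but it contains a simple closed subcurve).

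For the ``if'' direction, suppose both $G_1$ and $G_2$ are trees. Since a tree oriented towards a root is acyclic as an undirected graph, any plane drawing of $G_i$ can be taken to have no closed curve at all (draw it as an actual plane tree). Then the ``entangled'' condition is vacuously satisfied: there are no cycles in either drawing. It remains to realize $G_1$ and $G_2$ as disjoint plane trees that are tangent in $v_1$ and $v_2$. This is easy: draw $G_1$ as a plane tree inside a small disk $D_1$ with $v_1$ placed on the boundary of $D_1$, draw $G_2$ similarly in a disjoint disk $D_2$ with $v_2$ on its boundary, and position $D_1, D_2$ so that the arc connecting $v_1$ to $v_2$ through the complement of $D_1 \cup D_2$ avoids both drawings. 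Tangency holds by construction, disjointness is clear, and entanglement is vacuous.

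For the ``only if'' direction, suppose without loss of generality that $G_1$ is not a tree, so it contains a directed cycle, and hence any plane drawing of $G_1$ contains a simple closed curve $\gamma$. The entangled condition then forces the drawing of $G_2$ to meet both the interior and the exterior of $\gamma$, so the drawing of $G_2$ is not entirely contained in the open disk bounded by $\gamma$ nor in its exterior; since the two drawings are disjoint, the image of $G_2$ meets $\gamma$'s interior and exterior but not $\gamma$ itself, so $G_2$ is disconnected as a plane set — which is impossible once we argue that a connected multigraph has a connected plane drawing, so in fact $G_2$ must itself ``wrap around'' $\gamma$, meaning $G_2$ also contains a closed curve, hence (by the structural fact) $G_2$ is not a tree either. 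Now apply the same reasoning with the roles reversed, using a simple closed curve $\delta$ in the drawing of $G_2$: the drawing of $G_1$ meets both components of the complement of $\delta$, and the drawing of $G_2$ meets both components of the complement of $\gamma$. The main obstacle, and the crux of the argument, is to derive a contradiction from this mutual ``linking'' of two \emph{disjoint} plane curves $\gamma$ and $\delta$: one shows that two disjoint simple closed curves in the plane cannot each separate a connected piece of the other's drawing into its two sides. Concretely, $\gamma$ lies entirely in the interior or exterior of $\delta$ (being connected and disjoint from $\delta$); say $\gamma$ lies in the interior of $\delta$. Then everything in the exterior of $\gamma$ that also lies in the \emph{exterior} of $\delta$ is one region, and the drawing of $G_2$, being connected and touching the exterior of $\gamma$, must reach the interior of $\gamma$ by a connected path disjoint from $\gamma$ — but such a path lies in the interior of $\delta$, and ... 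I would formalize this via the Jordan curve theorem applied twice: pick the innermost such curve on each side and count how the bounded complementary regions nest, obtaining that one of $\gamma, \delta$ must bound a disk containing the other \emph{and} be contained in a disk bounded by the other, which is absurd. This ``two disjoint simple closed curves cannot be mutually linking in the plane'' lemma is where essentially all the work lies; everything else is bookkeeping about root-oriented multigraphs and routine plane-drawing manipulation.
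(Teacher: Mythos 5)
Your ``if'' direction is fine and matches the paper, which treats it as immediate. The ``only if'' direction has a genuine gap, and in fact the argument you sketch would prove a false statement, because it never uses tangency in $v_1$ and $v_2$ --- yet tangency is essential. Take $G_1$ with vertices $v_1,a,b$ and edges $a\to b$, $b\to a$, and $G_2$ with vertices $v_2,c,d$ and edges $c\to d$, $d\to c$. Draw the $\{a,b\}$ cycle as a large circle $\gamma$, the $\{c,d\}$ cycle as a concentric smaller circle $\delta$, put $v_1$ at the common centre and $v_2$ far outside. These drawings are disjoint and entangled (each circle has points of the other drawing both in its interior and in its exterior), yet neither graph is a tree; only tangency fails. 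So any correct proof of this direction must use the position of the roots, and yours does not.

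The step that actually fails is your proposed crux lemma: from ``$\gamma$ has points of $G_2$'s drawing on both sides, $\delta$ has points of $G_1$'s drawing on both sides, and $\gamma,\delta$ are disjoint'' no contradiction follows, precisely because the drawings need not be connected. Indeed, your own structural fact shows that a root-oriented multigraph containing a cycle is necessarily disconnected, so the phrase ``the drawing of $G_2$, being connected'' contradicts what you established two sentences earlier; a disconnected drawing can meet both sides of $\gamma$ without any piece of it crossing $\gamma$. The two-cycle configuration you want to exclude is realizable (nest a further component of each graph inside the other's cycle, as in the example above); the contradiction only appears when one considers \emph{all} cycles of \emph{both} drawings at once and observes that the nesting forced by entanglement cannot terminate. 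That is exactly what the paper does: starting from a cycle $C$ of $G_1$, it uses tangency to place both roots in the exterior of $C$, picks a cycle $C'$ (of either graph) that is innermost among all cycles of both drawings lying inside $C$, uses entanglement to find a vertex of the other graph inside $C'$, and then follows outgoing edges from that vertex --- the walk cannot escape $C'$ by disjointness, cannot reach the root (which lies outside $C$), and so by finiteness must enter a directed cycle drawn strictly inside $C'$, contradicting minimality. Your sketch is missing both the innermost-cycle/minimality argument and any use of the out-degree-one structure, and these are the heart of the proof.
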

\begin{proof}
  If $G_1$ and $G_2$ are trees then the desired drawings clearly exist.
  To prove the other direction of the lemma, it suffices to show that both $G_1$ and $G_2$ do not contain any cycles.

  Let us thus fix plane drawings of $G_1$ and $G_2$ with the desired properties and assume that $G_1$ contains a cycle $C$.
  Without loss of generality, the root $v_2$ of $G_2$ is contained in the exterior of $C$.
  Since the drawings of $G_1$ and $G_2$ are tangent in $v_1$ and $v_2$, also the root $v_1$ of $G_1$ is contained in the exterior of $C$.
  Now, from all cycles of either $G_1$ or $G_2$ that are contained in the interior of $C$, let us select a minimal cycle $C'$.
  Minimal means that $C'$ does itself not contain any other cycles in its interior.
  Such a cycle exists since $G_1$ and $G_2$ are finite.
  We assume that $C'$ again belongs to $G_1$, the other case being analogous.
  Since the drawings of $G_1$ and $G_2$ are entangled, we get a vertex of $G_2$ in the interior of $C'$.
  Starting from this vertex we now follow directed edges in $G_2$. 
  Since the drawings of $G_1$ and $G_2$ are disjoint, we never leave the interior of $C'$, which in particular means that we never reach $v_2$.
  However, since $G_2$ only has a finite number of vertices and all except for $v_2$ have an outgoing edge, we are bound to get into a cycle eventually.
  Clearly, this new cycle of $G_2$ is still contained in the interior of $C'$, in contradiction to minimality.
\end{proof}

We make two adaptations to the abstract framework from \cref{sec:framework}.
Note that all definitions and lemmas from that section extend naturally to the following setting.

A set of units $\units$ is no longer understood as a simple subset of $\frags$.
Firstly, in this section we restrict units to be segments from the set $\segs$.
Secondly, a unit can have additional information attached to it.
As an example, $\units$ could be defined as the set of directed segments.
That is, each $\unit$ in $\units$ would correspond to an element of $\segs$, but it would also have a direction.
In particular, this means that multiple elements of $\units$ can correspond to the same geometric graph.

Moreover, a combination $\cmb$ of $\units$ is still understood as a subset of $\units$.
However, we do not allow the same geometric graph to appear twice in $\cmb$.
That is, in the above example, the elements of $\cmb$ must be pairwise distinct as segments.
It is not sufficient if only their directions differ. 

We conclude by giving some definitions and conventions that will be used in the following two subsections.
We refer to \Cref{fig:drains} for illustrations.

We assume there are unique points $\hat{\pt},\check{\pt} \in \pts$ with largest and smallest $y$-coordinates, respectively.
The horizontal line through $\check{\pt}$ is called the \emph{bottom}.

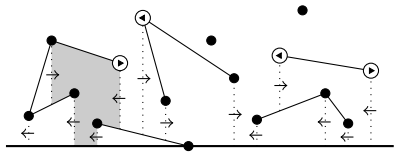
\begin{figure}[b]
  \begin{center}
    \begin{tikzpicture}[xscale=1.5]
  \newcommand\exampleCoords{
    \coordinate (c0)  at (0.0,+0.0);
    \coordinate (c1)  at (0.2,+1.0);
    \coordinate (c2)  at (0.4,+0.3);
    \coordinate (c3)  at (0.6,-0.1);
    \coordinate (c4)  at (0.8,+0.7);
    \coordinate (c5)  at (1.0,+1.3);
    \coordinate (c6)  at (1.2,+0.2);
    \coordinate (c7)  at (1.4,-0.4);
    \coordinate (c8)  at (1.6,+1.0);
    \coordinate (c9)  at (1.8,+0.5);
    \coordinate (c10) at (2.0,-0.05);
    \coordinate (c11) at (2.2,+0.8);
    \coordinate (c12) at (2.4,+1.4);
    \coordinate (c13) at (2.6,+0.3);
    \coordinate (c14) at (2.8,-0.1);
    \coordinate (c15) at (3.0,+0.6);
  }
  \newcommand\examplePoints{
    \node[point] (p0)  at (c0) {};
    \node[point] (p1)  at (c1) {};
    \node[point] (p2)  at (c2) {};
    \node[point] (p3)  at (c3) {};
    \node[drrgt] (p4)  at (c4) {};
    \node[drlft] (p5)  at (c5) {};
    \node[point] (p6)  at (c6) {};
    \node[point] (p7)  at (c7) {};
    \node[point] (p8)  at (c8) {};
    \node[point] (p9)  at (c9) {};
    \node[point] (p10) at (c10) {};
    \node[drlft] (p11) at (c11) {};
    \node[point] (p12) at (c12) {};
    \node[point] (p13) at (c13) {};
    \node[point] (p14) at (c14) {};
    \node[drrgt] (p15) at (c15) {};
    \draw[thick] (-0.2,-0.4) -- (+3.2,-0.4);
  }
  \begin{scope}[xshift=0]
    \exampleCoords
    \fill[fill=black!20!white]
      (c1) -- (0.2,0.15) --
      (c2) -- (0.4,-0.4) --
      (0.6,-0.4) -- (c3) --
      (0.8,-0.175) -- (c4) -- cycle;
    \draw (c0) -- (c1);
    \draw (c0) -- (c2);
    \draw (c1) -- (c4);
    \draw (c3) -- (c7);
    \draw (c5) -- (c6);
    \draw (c5) -- (c9);
    \draw (c11) -- (c15);
    \draw (c10) -- (c13);
    \draw (c13) -- (c14);
    \draw[dotted,dloriented] (0.0,-0.4) -- (c0);
    \draw[dotted,droriented] (0.2,+0.15) -- (c1);
    \draw[dotted,dloriented] (0.4,-0.4) -- (c2);
    \draw[dotted,dloriented] (0.6,-0.4) -- (c3);
    \draw[dotted,dloriented] (0.8,-0.175) -- (c4);
    \draw[dotted,droriented] (1.0,-0.25) -- (c5);
    \draw[dotted,droriented] (1.2,-0.325) -- (c6);
    \draw[dotted,droriented] (1.8,-0.4) -- (c9);
    \draw[dotted,dloriented] (2.0,-0.4) -- (c10);
    \draw[dotted,droriented] (2.2,+0.0666) -- (c11);
    \draw[dotted,dloriented] (2.6,-0.4) -- (c13);
    \draw[dotted,dloriented] (2.8,-0.4) -- (c14);
    \draw[dotted,dloriented] (3.0,-0.4) -- (c15);
    \examplePoints
  \end{scope}
\end{tikzpicture}
  \end{center}
  \caption{
    The thick line is the bottom.
    Dotted lines are borders.
    The shaded region is a face with out-degree 1.
    Points which expose a drain to the left are marked with $\protect\ptdrlft$.
    Points which expose a drain to the right are marked with $\protect\ptdrrgt$.
  }
  \label{fig:drains}
\end{figure}

For every crossing-free combination $\cmb$ of some set of units $\units$ we define a set of faces as follows.
From the endpoints of each segment $\unit$ in $\cmb$ we draw vertical rays (called \emph{borders}) downwards until we hit either the bottom or the relative interior of another segment in $\cmb$.
Then, a \emph{face} in $\cmb$ is a maximal connected region in the plane.
There is one unbounded region above the bottom, which is called the \emph{infinite face}.
The unbounded region below the bottom is not a face and will be ignored.
Furthermore, we say that two faces in $\cmb$ are \emph{adjacent} if they share a (vertical) border.

Borders are always directed either left-to-right or right-to-left.
In a combination $\cmb$, the \emph{out-degree} of a face is the number of borders directed away from that face.
We further say that a point $\pt$ \emph{exposes a drain to the left} if the border below $\pt$ is directed left-to-right and the region directly to the left of that border belongs to the infinite face.
An analogous definition is given for \emph{exposing a drain to the right}.
If a point exposes a drain either to the left or to the right, we simply say that it \emph{exposes a drain}.

\subsection{Spanning Trees}\label{subsec:st}
We define a very special set $\segsst$ of units.
Each $\unit$ in $\segsst$ is a segment from the set $\segs$ with a direction.
Additionally, below either endpoint of $\unit$ a border might be attached that is directed either left-to-right or right-to-left.

We also define the set $\cmbslst$, which contains all crossing-free combinations $\cmb$ of $\segsst$ with the following additional properties.\footnote{There is one technicality we gloss over, which however can be made precise: We also require that for each $\pt \in \cpts{\cmb}$ only the first segment $\unit$ in $\cmb$ with $\pt$ as an endpoint has a border attached below $\pt$, where first refers to the usual order induced by serializability of $\cmbscf{\segs}$. In this way, for each picture in \cref{fig:stincoherent} (a), (b) and (c), there exists a unique corresponding object in $\cmbslst$.
}
In what follows, the \emph{out-degree} of a point $\pt \in \pts$ in $\cmb$ denotes the number of segments in $\cmb$ that have $\pt$ as an endpoint and are directed away from $\pt$.

\begin{itemize}
\item The point $\hat{\pt}$ has out-degree 0 in $\cmb$.
\item Every point $\pt \in \pts$ has out-degree at most 1 in $\cmb$.
\item Every point $\pt \in \clow{\cmb}$ has out-degree 1 in $\cmb$.
\item Every finite face in $\cmb$ has out-degree 1.
\end{itemize}

Examples can be seen in \Cref{fig:stincoherent} (a), (b) and (c).
The combination in (d) violates the last three properties.

\begin{figure}
  \begin{center}
    \begin{tikzpicture}[xscale=1.5]
  \newcommand\exampleCoords{
    \coordinate (c0)  at (0.0,+0.0);
    \coordinate (c1)  at (0.2,+1.0);
    \coordinate (c2)  at (0.4,+0.3);
    \coordinate (c3)  at (0.6,-0.1);
    \coordinate (c4)  at (0.8,+0.7);
    \coordinate (c5)  at (1.0,+1.3);
    \coordinate (c6)  at (1.2,+0.2);
    \coordinate (c7)  at (1.4,-0.4);
    \coordinate (c8)  at (1.6,+1.0);
    \coordinate (c9)  at (1.8,+0.5);
    \coordinate (c10) at (2.0,-0.05);
    \coordinate (c11) at (2.2,+0.8);
    \coordinate (c12) at (2.4,+1.4);
    \coordinate (c13) at (2.6,+0.3);
    \coordinate (c14) at (2.8,-0.1);
    \coordinate (c15) at (3.0,+0.6);
  }
  \newcommand\examplePoints[1]{
    \node[alive] (p0)  at (c0) {};
    \node[alive] (p1)  at (c1) {};
    \node[dead] (p2)  at (c2) {};
    \node[dead] (p3)  at (c3) {};
    \node[alive] (p4)  at (c4) {};
    \node[alive] (p5)  at (c5) {};
    \node[dead] (p6)  at (c6) {};
    \node[dead] (p7)  at (c7) {};
    \node[alive] (p8)  at (c8) {};
    \node[alive] (p9)  at (c9) {};
    \node[alive] (p10) at (c10) {};
    \node[alive] (p11) at (c11) {};
    \node[free] (p12) at (c12) {};
    \node[dead] (p13) at (c13) {};
    \node[dead] (p14) at (c14) {};
    \node[alive] (p15) at (c15) {};
    \draw[thick] (-0.2,-0.4) -- (+3.2,-0.4);
    \node (t) at (0.0,1.5) {\tiny #1};
  }
  \begin{scope}[xshift=0]
    \exampleCoords
    \coordinate (b0) at (0.0,-0.4);
    \coordinate (b1) at (0.2,-0.0333);
    \coordinate (b2) at (0.4,-0.0666);
    \coordinate (b3) at (0.6,-0.4);
    \coordinate (b4) at (0.8,+0.3571);
    \coordinate (b5) at (1.0,+0.775);
    \coordinate (b6) at (1.2,-0.325);
    \coordinate (b7) at (1.4,-0.4);
    \coordinate (b8) at (1.6,+0.4714);
    \coordinate (b9) at (1.8,-0.1666);
    \coordinate (b10) at (2.0,-0.4);
    \coordinate (b11) at (2.2,-0.0625);
    \coordinate (b12) at (2.4,+0.75);
    \coordinate (b13) at (2.6,-0.0875);
    \coordinate (b14) at (2.8,-0.4);
    \coordinate (b15) at (3.0,-0.4);
    \draw[oriented] (c6) -- (c9);
    \draw[oriented] (c9) -- (c2);
    \draw[oriented] (c5) -- (c8);
    \draw[oriented] (c2) -- (c4);
    \draw[oriented] (c1) -- (c4);
    \draw[oriented] (c4) -- (c8);
    \draw[oriented] (c11) -- (c15);
    \draw[oriented] (c7) -- (c10);
    \draw[oriented] (c3) -- (c7);
    \draw[oriented] (c0) -- (c3);
    \draw[oriented] (c10) -- (c14);
    \draw[oriented] (c13) -- (c11);
    \draw[oriented] (c14) -- (c15);
    \draw[dotted,dloriented] (b0) -- (c0);
    \draw[dotted,dloriented] (b1) -- (c1);
    \draw[dotted,dloriented] (b2) -- (c2);
    \draw[dotted,dloriented] (b3) -- (c3);
    \draw[dotted,droriented] (b4) -- (c4);
    \draw[dotted,dloriented] (b5) -- (c5);
    \draw[dotted,dloriented] (b6) -- (c6);
    \draw[dotted,droriented] (b8) -- (c8);
    \draw[dotted,dloriented] (b9) -- (c9);
    \draw[dotted,droriented] (b10) -- (c10);
    \draw[dotted,dloriented] (b11) -- (c11);
    \draw[dotted,dloriented] (b13) -- (c13);
    \draw[dotted,droriented] (b14) -- (c14);
    \draw[dotted,droriented] (b15) -- (c15);
    \examplePoints{(a)}
    \node[marked] at (c11) {};
    \node at (3.6,0.4) {$\ctarrow{}{}{\unit}$};
  \end{scope}
  \begin{scope}[xshift=120]
    \exampleCoords
    \coordinate (b0) at (0.0,-0.4);
    \coordinate (b1) at (0.2,-0.0333);
    \coordinate (b2) at (0.4,-0.0666);
    \coordinate (b3) at (0.6,-0.4);
    \coordinate (b4) at (0.8,+0.3571);
    \coordinate (b5) at (1.0,+0.775);
    \coordinate (b6) at (1.2,-0.325);
    \coordinate (b7) at (1.4,-0.4);
    \coordinate (b8) at (1.6,+0.4714);
    \coordinate (b9) at (1.8,-0.1666);
    \coordinate (b10) at (2.0,-0.4);
    \coordinate (b11) at (2.2,-0.0625);
    \coordinate (b12) at (2.4,+0.75);
    \coordinate (b13) at (2.6,-0.0875);
    \coordinate (b14) at (2.8,-0.4);
    \coordinate (b15) at (3.0,-0.4);
    \fill[fill=black!20!white]
      (c8) -- (b8) -- (c9) -- (b9) --
      (c10) -- (b11) -- (c11) --
      (b12) -- (c12) -- cycle; 
    \draw[oriented] (c6) -- (c9);
    \draw[oriented] (c9) -- (c2);
    \draw[oriented] (c5) -- (c8);
    \draw[oriented] (c2) -- (c4);
    \draw[oriented] (c1) -- (c4);
    \draw[oriented] (c4) -- (c8);
    \draw[oriented] (c11) -- (c15);
    \draw[oriented] (c7) -- (c10);
    \draw[oriented] (c3) -- (c7);
    \draw[oriented] (c0) -- (c3);
    \draw[oriented] (c10) -- (c14);
    \draw[oriented] (c13) -- (c11);
    \draw[oriented] (c14) -- (c15);
    \draw[oriented] (c8) -- node[above,pos=0.4] {$\scriptstyle\unit$} (c12);
    \draw[dotted,dloriented] (b0) -- (c0);
    \draw[dotted,dloriented] (b1) -- (c1);
    \draw[dotted,dloriented] (b2) -- (c2);
    \draw[dotted,dloriented] (b3) -- (c3);
    \draw[dotted,droriented] (b4) -- (c4);
    \draw[dotted,dloriented] (b5) -- (c5);
    \draw[dotted,dloriented] (b6) -- (c6);
    \draw[dotted,droriented] (b8) -- (c8);
    \draw[dotted,dloriented] (b9) -- (c9);
    \draw[dotted,droriented] (b10) -- (c10);
    \draw[dotted,dloriented] (b11) -- (c11);
    \draw[dotted,dloriented] (b13) -- (c13);
    \draw[dotted,droriented] (b14) -- (c14);
    \draw[dotted,droriented] (b15) -- (c15);
    \draw[dotted,dloriented] (b12) -- (c12);
    \examplePoints{(b)}
    \node[dead] at (c9) {};
    \node[dead] at (c10) {};
    \node[dead] at (c11) {};
    \node[alive] at (c12) {};
    \node[marked] at (c8) {};
  \end{scope}

  \begin{scope}[xshift=0,yshift=-70]
    \exampleCoords
    \coordinate (b0) at (0.0,-0.4);
    \coordinate (b1) at (0.2,+0.15);
    \coordinate (b2) at (0.4,-0.4);
    \coordinate (b3) at (0.6,-0.4);
    \coordinate (b4) at (0.8,+0.25);
    \coordinate (b5) at (1.0,+0.225);
    \coordinate (b6) at (1.2,-0.325);
    \coordinate (b7) at (1.4,-0.4);
    \coordinate (b8) at (1.6,+0.4);
    \coordinate (b9) at (1.8,-0.1666);
    \coordinate (b10) at (2.0,-0.4);
    \coordinate (b11) at (2.2,+0.0666);
    \coordinate (b12) at (2.4,+0.75);
    \coordinate (b13) at (2.6,-0.0875);
    \coordinate (b14) at (2.8,-0.4);
    \coordinate (b15) at (3.0,-0.4);
    \draw[oriented] (c7) -- (c10);
    \draw[oriented] (c2) -- (c6);
    \draw[oriented] (c8) -- (c5);
    \draw[oriented] (c1) -- (c4);
    \draw[oriented] (c4) -- (c5);
    \draw[oriented] (c15) -- (c11);
    \draw[oriented] (c0) -- (c2);
    \draw[oriented] (c6) -- (c9);
    \draw[oriented] (c3) -- (c7);
    \draw[oriented] (c13) -- (c10);
    \draw[oriented] (c10) -- (c9);
    \draw[oriented] (c14) -- (c10);
    \draw[dotted,dloriented] (b0) -- (c0);
    \draw[dotted,dloriented] (b1) -- (c1);
    \draw[dotted,dloriented] (b2) -- (c2);
    \draw[dotted,dloriented] (b3) -- (c3);
    \draw[dotted,dloriented] (b4) -- (c4);
    \draw[dotted,dloriented] (b5) -- (c5);
    \draw[dotted,dloriented] (b6) -- (c6);
    \draw[dotted,dloriented] (b8) -- (c8);
    \draw[dotted,dloriented] (b9) -- (c9);
    \draw[dotted,droriented] (b10) -- (c10);
    \draw[dotted,droriented] (b11) -- (c11);
    \draw[dotted,droriented] (b13) -- (c13);
    \draw[dotted,droriented] (b14) -- (c14);
    \draw[dotted,droriented] (b15) -- (c15);
    \examplePoints{(c)}
    \node[marked] at (c11) {};
    \node at (3.6,0.4) {$\ctnarrow{}{}{\unit}$};
  \end{scope}

  \begin{scope}[xshift=120,yshift=-70]
    \exampleCoords
    \coordinate (b0) at (0.0,-0.4);
    \coordinate (b1) at (0.2,+0.15);
    \coordinate (b2) at (0.4,-0.4);
    \coordinate (b3) at (0.6,-0.4);
    \coordinate (b4) at (0.8,+0.25);
    \coordinate (b5) at (1.0,+0.225);
    \coordinate (b6) at (1.2,-0.325);
    \coordinate (b7) at (1.4,-0.4);
    \coordinate (b8) at (1.6,+0.4);
    \coordinate (b9) at (1.8,-0.1666);
    \coordinate (b10) at (2.0,-0.4);
    \coordinate (b11) at (2.2,+0.0666);
    \coordinate (b12) at (2.4,+0.75);
    \coordinate (b13) at (2.6,-0.0875);
    \coordinate (b14) at (2.8,-0.4);
    \coordinate (b15) at (3.0,-0.4);
    \fill[fill=black!20!white]
      (c8) -- (b8) -- (c9) --
      (c10) -- (b11) -- (c11) -- (b12) --
      (c12) -- cycle;
    \draw[oriented] (c7) -- (c10);
    \draw[oriented] (c2) -- (c6);
    \draw[oriented] (c8) -- (c5);
    \draw[oriented] (c1) -- (c4);
    \draw[oriented] (c4) -- (c5);
    \draw[oriented] (c15) -- (c11);
    \draw[oriented] (c0) -- (c2);
    \draw[oriented] (c6) -- (c9);
    \draw[oriented] (c3) -- (c7);
    \draw[oriented] (c13) -- (c10);
    \draw[oriented] (c10) -- (c9);
    \draw[oriented] (c14) -- (c10);
    \draw[oriented] (c8) -- node[above,pos=0.4] {$\scriptstyle\unit$} (c12);
    \draw[dotted,dloriented] (b0) -- (c0);
    \draw[dotted,dloriented] (b1) -- (c1);
    \draw[dotted,dloriented] (b2) -- (c2);
    \draw[dotted,dloriented] (b3) -- (c3);
    \draw[dotted,dloriented] (b4) -- (c4);
    \draw[dotted,dloriented] (b5) -- (c5);
    \draw[dotted,dloriented] (b6) -- (c6);
    \draw[dotted,dloriented] (b8) -- (c8);
    \draw[dotted,dloriented] (b9) -- (c9);
    \draw[dotted,droriented] (b10) -- (c10);
    \draw[dotted,droriented] (b11) -- (c11);
    \draw[dotted,dloriented] (b12) -- (c12);
    \draw[dotted,droriented] (b13) -- (c13);
    \draw[dotted,droriented] (b14) -- (c14);
    \draw[dotted,droriented] (b15) -- (c15);
    \examplePoints{(d)}
    \node[dead] at (c9) {};
    \node[dead] at (c10) {};
    \node[dead] at (c11) {};
    \node[alive] at (c12) {};
    \node[marked] at (c8) {};
  \end{scope}

\end{tikzpicture}
  \end{center}
  \caption{
    The elements of $\protect\cmbslst$ in (a) and (c) cannot be considered equivalent.
  }
  \label{fig:stincoherent}
\end{figure}

Note that all of the above properties are maintained when removing the right-most extreme element from a combination from the set $\cmbslst$.
We thus get the following lemma.

\begin{restatable}{lemma}{stserializable}
  \label{lem:st:serializable}
  For any point set $\pts$, $\cmbslst$ is serializable.
\end{restatable}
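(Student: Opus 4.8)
The plan is to verify the three requirements in the definition of serializability (\cref{def:serializable2}) one by one: non-emptiness of $\cmbslst$, existence of a right-most extreme element in every non-empty $\cmb \in \cmbslst$, and closure of $\cmbslst$ under deleting that element. Non-emptiness is immediate: the empty combination has no segments, hence no borders and only the infinite face, so all four defining conditions of $\cmbslst$ hold vacuously.

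For the existence of a right-most extreme element, the point I would stress is that the dependence relation $\udep$ --- and therefore the notion of being extreme --- sees a unit of $\segsst$ only through its underlying segment, since directions and attached borders do not enter $\upts{\unit}$, $\ulow{\unit}$, or $\uupp{\unit}$. As the underlying segments of any $\cmb \in \cmbslst$ form a crossing-free combination of $\segs$, the conclusion is inherited word for word from the first part of the proof of \cref{lem:cf:serializable}: translate one of the pairwise non-crossing relative interiors upward to infinity to obtain an extreme segment, and observe that two extreme segments cannot overlap in $x$-coordinate (by \cref{obs:dependence}, the lower one would be depended upon by the higher one), so the extreme segments are totally ordered by $\utorgt$ and there is a unique right-most one.

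It remains to show that $\cmb' \coloneqq \cmb \setminus \{\unit\}$ lies in $\cmbslst$, where $\unit \coloneqq \umaxin{\cmb}$. Crossing-freeness (and the absence of repeated segments) is inherited, and the first three properties are essentially monotone under deleting a segment. Out-degrees of points can only decrease, so $\hat{\pt}$ keeps out-degree $0$ and every point keeps out-degree at most $1$. For the third property, a point $\pt \in \clow{\cmb'} \subseteq \clow{\cmb}$ had out-degree $1$ in $\cmb$, and this could drop only if $\unit$ were directed away from $\pt$, i.e. $\pt \in \upts{\unit}$; but extremality of $\unit$ forces $\upts{\unit} \cap \clow{\cmb'} = \emptyset$ (no element of $\cmb'$ depends on $\unit$), so $\pt$ keeps out-degree $1$.

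The hard part --- and the only genuinely geometric step --- is the fourth property, about faces. Here I would extract two consequences of extremality: nothing in $\cmb'$ lies directly above the relative interior of $\unit$, so the region immediately above $\unit$ belongs to the infinite face of $\cmb$; and $\uupp{\unit} \cap \cpts{\cmb'} = \emptyset$, so no border of $\cmb'$ terminates on $\unit$ and hence deleting $\unit$ extends no border. Therefore passing from $\cmb$ to $\cmb'$ removes exactly the segment $\unit$ together with the borders immediately below its endpoints --- and by the bookkeeping convention recorded in the footnote such a border is present only at an endpoint where $\unit$ is the unique incident segment, an endpoint that then leaves $\cpts{\cmb'}$ altogether. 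Deleting these merges the infinite face of $\cmb$ with the finitely many faces directly below $\unit$, and with the at most two additional faces reached across the deleted border(s), into a single unbounded region, which is precisely the infinite face of $\cmb'$; no bounded face is created, since removing edges and borders can only merge faces and the merges here all run through the infinite face. Every remaining face of $\cmb'$ is unchanged --- same bounding segments and borders as in $\cmb$ --- so every finite face of $\cmb'$ still has out-degree exactly $1$. This yields $\cmb' \in \cmbslst$ and completes the proof.
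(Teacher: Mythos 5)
Your proof is correct and follows the same route the paper takes: existence of a right-most extreme element is inherited from the sliding argument used for $\cmbscf{\segs}$, and closure under deleting $\umaxin{\cmb}$ is checked property by property. The paper itself compresses all of this into the single remark that ``all of the above properties are maintained when removing the right-most extreme element''; you supply exactly the details it omits, including the two genuinely non-trivial points (that extremality prevents any border of $\cmb'$ from terminating on $\unit$, so no face away from $\unit$ changes, and that the footnote's border-bookkeeping is preserved because a border attached to $\umaxin{\cmb}$ can only sit below an endpoint incident to no other segment).
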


We reuse the three colors $\ptfree,\ptalive,\ptdead$ and the special marking from \cref{subsec:crossingfree} with their original meaning to describe elements of $\cmbslst$, as already depicted in \cref{fig:stincoherent}.
However, that same figure illustrates that we cannot reuse the old equivalence relation because it is not coherent.
There are three problems we have to deal with.
Firstly, the out-degree of a point can become larger than 1.
Secondly, a point with out-degree 0 can disappear in the lower shadow of a segment.
Thirdly, a finite face with out-degree not equal to 1 can be created.

To make the equivalence relation coherent, it suffices to partition the points with color $\ptalive$ into six smaller categories.
That is, we have to replace the color $\ptalive$ with 6 new colors, resulting in a total of 8 colors, and then consider two combinations equivalent if they agree in that new coloring and also in the special marking. 
We will not define the colors explicitly here, but only explain what information we have to keep track of.

For each point with color $\ptalive$ we keep track of its out-degree, that is, whether it is currently 0 or 1.
This allows us to avoid the first two problems mentioned earlier.
Furthermore, for each point with color $\ptalive$ we keep track of whether and where it exposes a drain.
This allows us to avoid the third problem because whenever a new segment $\unit$ is added to a combination $\cmb$, a new finite face is created below $\unit$, and the out-degree of that face is determined by the number of exposed drains in the lower shadow and at the endpoints of $\unit$.
Indeed, observe that borders corresponding to exposed drains in the lower shadow of $\unit$ become out-borders of the new finite face.
Also, an exposed drain at the left endpoint, say, of $\unit$ becomes an out-border of the new face if and only if it is exposed to the right.
This is why we also have to know the side a drain is exposed to.

We define an equivalence relation $\parteq$ on $\cmbslst$ based on the $2 + 2 \cdot 3 = 8$ colors from the preceding discussion, and with the usual marking.
The above intuition can be made precise, and the following can be proved.

\begin{restatable}{lemma}{stcoherent}
  \label{lem:st:coherent}
  The equivalence relation $\parteq$ on $\cmbslst$, as defined above, is coherent.
\end{restatable}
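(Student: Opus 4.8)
The plan is to follow the template that already worked for \cref{lem:cf:coherent}, \cref{lem:pt:coherent} and \cref{lem:sd:coherent}. Take non-empty $\cmb_1,\cmb_2 \in \cmbslst$ with $\cmb_1 \parteq \cmb_2$, and suppose $\ctarrow{\cmb_1}{\cmb_1'}{\unit}$ holds for some $\cmb_1' \in \cmbslst$ and $\unit \in \segsst$. Set $\cmb_2' \coloneqq \cmb_2 \cup \{\unit\}$; I will show $\cmb_2' \in \cmbslst$, $\cmb_1' \parteq \cmb_2'$ and $\ctarrow{\cmb_2}{\cmb_2'}{\unit}$, which is exactly coherency. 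The first observation is what $\cmb_1 \parteq \cmb_2$ gives us: the $8$-colour encoding refines the original three-colour one, so $\cmb_1$ and $\cmb_2$ have the same free/alive/dead pattern and hence $\clow{\cmb_1} = \clow{\cmb_2}$ and $\cpts{\cmb_1} = \cpts{\cmb_2}$; moreover, for every alive point they agree on its out-degree ($0$ or $1$) and on whether and to which side it exposes a drain; and $\ulft{\umaxin{\cmb_1}} = \ulft{\umaxin{\cmb_2}}$.

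Next I would argue, exactly as in \cref{lem:cf:coherent}, that $\unit$ is extreme in $\cmb_2'$: since $\unit = \umaxin{\cmb_1'}$, no element of $\cmb_1 = \cmb_1'\setminus\{\unit\}$ depends on $\unit$, i.e.\ $\upts{\unit}\cap\clow{\cmb_1}=\emptyset$ and $\uupp{\unit}\cap\cpts{\cmb_1}=\emptyset$; the same holds with $\cmb_2$ in place of $\cmb_1$ by the previous paragraph, so $\unit$ is extreme in $\cmb_2'$, and then $\cmb_2'$ is crossing-free by \cref{obs:dependence}. Now check the four defining properties of $\cmbslst$ for $\cmb_2'$. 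Adding $\unit$ raises exactly one out-degree, that of the source endpoint of $\unit$; for $\cmb_1'$ to satisfy "out-degree at most $1$" this endpoint has out-degree $0$ in $\cmb_1$, which the common colouring records, so it has out-degree $0$ in $\cmb_2$ too and the property survives. The point $\hat{\pt}$ has out-degree $0$ in $\cmb_1'$, so it is not the source of $\unit$ and keeps out-degree $0$ in $\cmb_2'$. The source of $\unit$ lies in $\upts{\unit}$, which is disjoint from $\clow{\cmb_1}=\clow{\cmb_2}$, so no point of $\clow{\cmb_2}$ has its out-degree changed; and the points newly entering $\clow{\cmb_2'}$ lie in $\ulow{\unit}$, which is disjoint from $\upts{\unit}$ and already contained in $\clow{\cmb_1'}$, so each such point has out-degree $1$ in $\cmb_1'$, hence in $\cmb_1$, hence (by the colouring, and since $\unit$ is not incident to it) in $\cmb_2'$.

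The heart of the argument, and the step I expect to be the main obstacle, is the finite-face condition. One has to make precise that inserting $\unit$ together with the two prescribed vertical borders below its endpoints carves exactly one new finite face out of the old infinite face — the region directly below $\unit$, bounded on the sides by the two endpoint borders and below by the bottom and the already-present segments and borders — and leaves every pre-existing finite face untouched. The out-degree of this new face counts its out-borders: the borders at points of $\ulow{\unit}\cap\cpts{\cmb_2}$ that were exposed drains pointing, on the appropriate side, into the old infinite face, plus a contribution from each of the two endpoint borders of $\unit$ determined by its prescribed direction and, when that endpoint already carried a border, by its previously recorded drain status. Everything here is a function of the common colouring of $\cmb_2$ and of $\unit$ alone, so it equals the out-degree of the corresponding new face in $\cmb_1'$, namely $1$; hence $\cmb_2' \in \cmbslst$. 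The same determinism shows that the updated colouring of $\cmb_2'$ — out-degree and drain data at every point — is a function of the colouring of $\cmb_2$ and of $\unit$ only, so $\cmb_1'$ and $\cmb_2'$ carry the same colouring; together with the marking argument of \cref{lem:cf:coherent} (which gives $\umaxin{\cmb_2'}=\unit$) this yields $\cmb_1' \parteq \cmb_2'$. Finally $\ctarrow{\cmb_2}{\cmb_2'}{\unit}$ reduces to $\unit \notin \cmb_2$: the source of $\unit$ has out-degree $0$ in $\cmb_2$ but $1$ in $\cmb_2'$, so $\cmb_2 \nparteq \cmb_2'$; were $\unit \in \cmb_2$ we would get $\cmb_2 = \cmb_2'$ and thus $\cmb_1 \parteq \cmb_2 = \cmb_2' \parteq \cmb_1'$, contradicting $\cmb_1 \nparteq \cmb_1'$ (true for the same out-degree reason). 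The only genuinely delicate part is thus the geometric bookkeeping above: a clean lemma on how faces, out-degrees and per-point drain exposures transform under the insertion of one directed segment with its two directed borders, compatible with the footnote convention that only the first segment incident to a point carries a border below it. Everything else is a direct transcription of the earlier coherency proofs.
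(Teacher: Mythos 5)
Your proposal is correct and follows exactly the route the paper intends: the paper itself only sketches this lemma (``The above intuition can be made precise\dots''), relying on the same observations you spell out --- that the $8$-colour data together with $\unit$ determines extremality, the out-degree updates, and the out-degree of the single new finite face carved from the infinite face below $\unit$. Your write-up, including the direct argument that $\ctarrow{\cmb_1}{\cmb_1'}{\unit}$ forces $\cmb_1 \nparteq \cmb_1'$ via the source endpoint's out-degree (replacing the explicit ``progressive'' hypothesis needed in \cref{lem:cf:coherent}), is a faithful and somewhat more detailed formalization of the paper's argument.
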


We define the set $\target \subseteq \quotient{\cmbslst}{\parteq}$ which contains $\partclass{\cmb}$ if and only if every point (except for $\hat{\pt}$) has out-degree 1 in $\cmb$ and the infinite face has out-degree 0 (equivalently, there are no exposed drains in $\cmb$).
Let $\agraphst{\pts}$ be the combination graph corresponding to the combination problem $(\segsst,\cmbslst,\parteq,\target)$.
\Cref{thm:st} with $c=8$ now follows from \cref{lem:represent} and from the following insight.

\begin{figure}
  \begin{center}
              \begin{tikzpicture}[xscale=1.5]
            \newcommand\exampleCoords{
              \coordinate (c0)  at (0.0,+0.0);
              \coordinate (c1)  at (0.2,+1.0);
              \coordinate (c2)  at (0.4,+0.3);
              \coordinate (c3)  at (0.6,-0.1);
              \coordinate (c4)  at (0.8,+0.7);
              \coordinate (c5)  at (1.0,+1.3);
              \coordinate (c6)  at (1.2,+0.2);
              \coordinate (c7)  at (1.4,-0.4);
              \coordinate (c8)  at (1.6,+1.0);
              \coordinate (c9)  at (1.8,+0.5);
              \coordinate (c10) at (2.0,-0.05);
              \coordinate (c11) at (2.2,+0.8);
              \coordinate (c12) at (2.4,+1.4);
              \coordinate (c13) at (2.6,+0.3);
              \coordinate (c14) at (2.8,-0.1);
              \coordinate (c15) at (3.0,+0.6);
            }
            \newcommand\examplePoints[1]{
              \node[point] (p0)  at (c0) {};
              \node[point] (p1)  at (c1) {};
              \node[point] (p2)  at (c2) {};
              \node[point] (p3)  at (c3) {};
              \node[point] (p4)  at (c4) {};
              \node[point] (p5)  at (c5) {};
              \node[point] (p6)  at (c6) {};
              \node[point] (p7)  at (c7) {};
              \node[point] (p8)  at (c8) {};
              \node[point] (p9)  at (c9) {};
              \node[point] (p10) at (c10) {};
              \node[point] (p11) at (c11) {};
              \node[point] (p12) at (c12) {};
              \node[point] (p13) at (c13) {};
              \node[point] (p14) at (c14) {};
              \node[point] (p15) at (c15) {};
              \draw[thick] (-0.2,-0.4) -- (+3.2,-0.4);
              \node (t) at (0.0,1.5) {\tiny #1};
            }
            \begin{scope}[xshift=-110]
              \exampleCoords
              \coordinate (b0) at (0.0,-0.4);
              \coordinate (b1) at (0.2,+0.15);
              \coordinate (b2) at (0.4,-0.4);
              \coordinate (b3) at (0.6,-0.4);
              \coordinate (b4) at (0.8,-0.175);
              \coordinate (b5) at (1.0,+0.775);
              \coordinate (b6) at (1.2,-0.325);
              \coordinate (b7) at (1.4,-0.4);
              \coordinate (b8) at (1.6,+0.54);
              \coordinate (b9) at (1.8,-0.1666);
              \coordinate (b10) at (2.0,-0.4);
              \coordinate (b11) at (2.2,-0.0625);
              \coordinate (b12) at (2.4,+0.75);
              \coordinate (b13) at (2.6,-0.0875);
              \coordinate (b14) at (2.8,-0.4);
              \coordinate (b15) at (3.0,-0.4);
              
              \draw[oriented] (c0) -- (c1); 
              \draw[oriented] (c1) -- (c2); 
              \draw[oriented] (c2) -- (c0); 
              \draw[oriented] (c4) -- (c8); 
              \draw[oriented] (c5) -- (c8); 
              \draw[oriented] (c8) -- (c12); 
              \draw[oriented] (c3) -- (c7); 
              \draw[oriented] (c6) -- (c7); 
              \draw[oriented] (c7) -- (c10); 
              \draw[oriented] (c9) -- (c4); 
              \draw[oriented] (c10) -- (c14); 
              \draw[oriented] (c14) -- (c15); 
              \draw[oriented] (c15) -- (c11); 
              \draw[oriented] (c11) -- (c10); 
              \draw[oriented] (c13) -- (c14); 
              
              \draw[dotted] (b0) -- (c0);
              \draw[dotted] (b1) -- (c1);
              \draw[dotted] (b2) -- (c2);
              \draw[dotted] (b3) -- (c3);
              \draw[dotted] (b4) -- (c4);
              \draw[dotted] (b5) -- (c5);
              \draw[dotted] (b6) -- (c6);
              \draw[dotted] (b8) -- (c8);
              \draw[dotted] (b9) -- (c9);
              \draw[dotted] (b10) -- (c10);
              \draw[dotted] (b11) -- (c11);
              \draw[dotted] (b12) -- (c12);
              \draw[dotted] (b13) -- (c13);
              \draw[dotted] (b14) -- (c14);
              \draw[dotted] (b15) -- (c15);
              \examplePoints{(a)}
            \end{scope}
            \begin{scope}[xshift=0]
              \exampleCoords
 
              \coordinate (b0) at (0.0,-0.4);
              \coordinate (b1) at (0.2,+0.15);
              \coordinate (b2) at (0.4,-0.0666);
              \coordinate (b3) at (0.6,-0.4);
              \coordinate (b4) at (0.8,-0.0);
              \coordinate (b5) at (1.0,+0.45);
              \coordinate (b6) at (1.2,-0.4);
              \coordinate (b7) at (1.4,-0.4);
              \coordinate (b8) at (1.6,+0.4);
              \coordinate (b9) at (1.8,-0.1666);
              \coordinate (b10) at (2.0,-0.4);
              \coordinate (b11) at (2.2,+0.4);
              \coordinate (b12) at (2.4,+0.35);
              \coordinate (b13) at (2.6,-0.4);
              \coordinate (b14) at (2.8,-0.4);
              \coordinate (b15) at (3.0,-0.4);

              \draw[oriented] (c2) -- (c0);
              \draw[oriented] (c0) -- (c3);
              \draw[oriented] (c7) -- (c10);
              \draw[oriented] (c3) -- (c6);
              \draw[oriented] (c1) -- (c4);
              \draw[oriented] (c5) -- (c4);
              \draw[oriented] (c4) -- (c6);
              \draw[oriented] (c6) -- (c9);
              \draw[oriented] (c9) -- (c8);
              \draw[oriented] (c8) -- (c12);
              \draw[oriented] (c11) -- (c9);
              \draw[oriented] (c13) -- (c9);
              \draw[oriented] (c10) -- (c13);
              \draw[oriented] (c14) -- (c13);
              \draw[oriented] (c15) -- (c14);

              \draw[dotted,dloriented] (b0) -- (c0);
              \draw[dotted,dloriented] (b1) -- (c1);
              \draw[dotted,droriented] (b2) -- (c2);
              \draw[dotted,dloriented] (b3) -- (c3);
              \draw[dotted,dloriented] (b4) -- (c4);
              \draw[dotted,droriented] (b5) -- (c5);
              \draw[dotted,dloriented] (b6) -- (c6);
              \draw[dotted,dloriented] (b8) -- (c8);
              \draw[dotted,dloriented] (b9) -- (c9);
              \draw[dotted,droriented] (b10) -- (c10);
              \draw[dotted,droriented] (b11) -- (c11);
              \draw[dotted,droriented] (b12) -- (c12);
              \draw[dotted,droriented] (b13) -- (c13);
              \draw[dotted,droriented] (b14) -- (c14);
              \draw[dotted,droriented] (b15) -- (c15);
              \examplePoints{(b)}
            \end{scope}
          \end{tikzpicture}
  \end{center}
  \caption{
    Illustrations for the proof of \cref{lem:st:bijection}.
  }
  \label{fig:trees}
\end{figure}

\begin{lemma}
  \label{lem:st:bijection}
  There is a natural bijection between the sets $\bigcup \target$ and $\sts$ in the following sense.
  For any combination $\cmb$ in $\bigcup \target$, building the geometric graph on $\pts$ with edges that correspond to the segments in $\cmb$ yields the corresponding crossing-free spanning tree in $\sts$.
\end{lemma}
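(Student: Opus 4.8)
The plan is to show that the map sending $\cmb \in \bigcup\target$ to the geometric graph $G_\cmb$ on $\pts$ whose edges are the underlying (undirected) segments of $\cmb$ is a well-defined bijection onto $\sts$. Well-definedness has three easy ingredients and one substantial one. Since $\cmb$ is a crossing-free combination, $G_\cmb$ is crossing-free; since the segments of $\cmb$ are pairwise distinct and, by the definition of $\target$ and of $\cmbslst$, every point except $\hat{\pt}$ has out-degree exactly $1$ in $\cmb$ while $\hat{\pt}$ has out-degree $0$, the graph $G_\cmb$ has exactly $n-1$ edges; so it remains to prove that $G_\cmb$ is connected, i.e.\ a tree. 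For this I would invoke \cref{lem:entangledtrees} with the following two multigraphs. Let $G_1$ be the directed multigraph on vertex set $\pts$ whose edges are the directed segments of $\cmb$; the out-degree conditions make $G_1$ root-oriented towards $\hat{\pt}$. Let $G_2$ have the faces of $\cmb$ as vertices, with, for every border, one edge leaving the face that the border is directed away from (this face is necessarily finite); the conditions that every finite face has out-degree $1$ and the infinite face has out-degree $0$ (equivalently, $\cmb$ has no exposed drains) make $G_2$ root-oriented towards the infinite face.

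I would then draw $G_1$ as the geometric graph $G_\cmb$ itself, and $G_2$ by placing the vertex of each face at an interior point of it and routing the edge crossing a border $b$ as a simple curve that meets $b$ transversally once and otherwise stays in the interiors of the two faces incident to $b$; such curves avoid all segments, borders, the bottom and all points of $\pts$, so the two drawings are disjoint, and they are tangent in $\hat{\pt}$ and in the vertex of the infinite face because nothing of $\cmb$ lies directly above $\hat{\pt}$. The main task is to check that the drawings are entangled. For a directed cycle $C$ of $G_1$ this is routine: $C$ bounds a simple polygon whose exterior contains the infinite-face vertex and whose nonempty interior is covered by the faces of $\cmb$ it meets — each such face lying inside the polygon because it cannot cross the edges of $C$ — hence the interior contains some face vertex. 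For a directed cycle $C$ of $G_2$, drawn as a closed curve $\gamma$ that crosses a cyclic sequence $b_1,\dots,b_k$ of borders, the point $\hat{\pt}$ lies outside $\gamma$ (the infinite face is unbounded, disjoint from $\gamma$, and has $\hat{\pt}$ on its boundary). To see that the interior of $\gamma$ contains a point of $\pts$, I would argue by contradiction: if not, then no segment of $\cmb$ meets the interior of $\gamma$ (a segment cannot cross $\gamma$ and its endpoints are outside), every border other than $b_1,\dots,b_k$ lies entirely outside $\gamma$ (otherwise its top point would be an interior point of $\gamma$ belonging to $\pts$), and for each $b_i$ both the part above and the part below its unique crossing with $\gamma$ lie outside $\gamma$ (the upper endpoint is in $\pts$; the lower endpoint lies on a segment or on the bottom, neither of which meets the interior of $\gamma$); hence the interior of $\gamma$ meets no part of the one-skeleton of $\cmb$, so it lies in a single face — contradicting that $\gamma$ runs through the $k\ge 2$ distinct faces of $C$ (the case $k=1$ being excluded, since a self-loop of $G_2$ would place the infinite face on a cycle). \cref{lem:entangledtrees} then forces $G_1$ and $G_2$ to be trees; in particular $G_\cmb$ is connected, so $G_\cmb\in\sts$.

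For injectivity: if $G_\cmb=G_{\cmb'}$, the common underlying graph is a tree, and this forces the orientation of every segment, because from any vertex the out-edges cannot cycle and must reach $\hat{\pt}$, so the out-edge at a vertex is the first edge of its path to $\hat{\pt}$; hence $\cmb$ and $\cmb'$ have the same directed segments, the same faces and the same borders, and since $G_2$ is a tree the direction of each border is forced in the same way with the infinite face as root. As the segment of $\cmb$ carrying a given border is determined by the canonical order, which depends only on the directed segments, we get $\cmb=\cmb'$. For surjectivity: given $T\in\sts$, root it at $\hat{\pt}$ and orient every edge child-to-parent, obtaining a tree $G_1$ root-oriented towards $\hat{\pt}$; together with the forced downward borders below every point other than $\check{\pt}$ this determines the faces, and the face-adjacency multigraph $H$ (one vertex per face, one edge per border) is connected, since the union of the faces and the open borders equals the part of the plane strictly above the bottom with $T$ removed, which is connected; an Euler-characteristic count shows $H$ has exactly $n$ vertices and $n-1$ edges, hence is a tree. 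Orienting the edges of $H$ towards the infinite face makes the borders directed and yields a combination $\cmb$ which, one checks, satisfies all defining conditions of $\cmbslst$, has $\partclass{\cmb}\in\target$, and satisfies $G_\cmb=T$.

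I expect the entanglement verification — concretely, the claim that every directed cycle of the dual face-graph $G_2$ encloses a point of $\pts$ — to be the main obstacle: this is exactly the point where the global property ``spanning tree'' is exchanged for the local, face-based conditions defining $\cmbslst$, and the contradiction argument sketched above would have to be carried out carefully, degenerate configurations included.
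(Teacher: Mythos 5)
Your proof is correct and takes essentially the same route as the paper: the same two root-oriented multigraphs $G_1$ (directed segments on $\pts$) and $G_2$ (the face-adjacency graph directed along borders), combined via \cref{lem:entangledtrees}. You merely spell out the disjointness/entanglement/tangency checks and the injectivity/surjectivity bookkeeping that the paper dismisses as clear; note only that your parenthetical exclusion of the self-loop case is unnecessary, since your own observation that a border crossed transversally exactly once by $\gamma$ would need both endpoints outside already yields the contradiction there.
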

\begin{proof}
  Any $\cmb$ in $\bigcup \target$ induces two directed multigraphs $G_1$ and $G_2$ with corresponding plane drawings, as follows.
  
  $G_1$ is the graph with vertex set $\pts$ and edges that correspond to the directed segments in $\cmb$.
  By definition of $\cmbslst$ and $\target$, we at least know that $G_1$ is root-oriented towards $\hat{\pt}$, as exemplified in \cref{fig:trees} (a).

  For the vertices of $G_2$ we choose one arbitrary point in the interior of each face in $\cmb$.
  Two vertices in $G_2$ are connected if their corresponding faces in $\cmb$ are adjacent.
  The direction of that edge is chosen in accordance with the direction of the corresponding border in $\cmb$.
  Again by definition of $\cmbslst$ and $\target$, $G_2$ is root-oriented towards the vertex corresponding to the infinite face in $\cmb$, as exemplified in \cref{fig:trees} (b).
  
  It is clear that the drawings of $G_1$ and $G_2$ can be chosen such that they are disjoint, entangled, and tangent in $\hat{\pt}$ and the infinite face.
  The proof is concluded by applying \cref{lem:entangledtrees} and by observing that any spanning tree on $\pts$ can be root-oriented towards $\hat{\pt}$ in a unique way.
\end{proof}

One can prove that $c<8$ by adapting the arguments from \cref{subsec:crossingfree}.
With some more work, we get $c<7.04313$, as will be shown in \cref{sec:bounds}.

\subsection{Spanning Cycles}\label{subsec:sc}
We define a slightly different set of units $\segssc$.
In the same way as in the previous subsection, below the endpoints of any segment $\unit$ in $\segssc$ directed borders can be attached.
Here, however, the segment $\unit$ itself does not have a direction.

We also define the set $\cmbslsc$, which contains all crossing-free combinations $\cmb$ of $\segssc$ with the following additional properties.
In what follows, the \emph{degree} of a point $\pt \in \pts$ in $\cmb$ stands for the number of segments in $\cmb$ that have $\pt$ as an endpoint.
Also, if the size of $\cmb$ is $\ptsnum$, the \emph{last finite face} in $\cmb$ is defined as the face directly below the right-most extreme segment in $\cmb$.
All other finite faces are called \emph{normal}.

\begin{itemize}
\item If $|\cmb| = \ptsnum$, the last finite face in $\cmb$ has out-degree 0.
\item Every point $\pt \in \pts$ has degree at most 2 in $\cmb$.
\item Every point $\pt \in \clow{\cmb}$ has degree 2 in $\cmb$.
\item Every normal finite face in $\cmb$ has out-degree 1.
\end{itemize}

Note again that the above properties are maintained when removing the right-most extreme element from a combination from the set $\cmbslsc$.

\begin{restatable}{lemma}{scserializable}
  \label{lem:sc:serializable}
  For any point set $\pts$, $\cmbslsc$ is serializable.
\end{restatable}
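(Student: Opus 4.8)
The plan is to check the two conditions of \cref{def:serializable2}: that $\cmbslsc$ is non-empty and that every non-empty $\cmb \in \cmbslsc$ has a right-most extreme element $\unit = \umaxin{\cmb}$ with $\cmb \setminus \{\unit\} \in \cmbslsc$. Non-emptiness is immediate, since the empty combination satisfies all four defining properties of $\cmbslsc$ vacuously. So fix a non-empty $\cmb \in \cmbslsc$; I must exhibit $\umaxin{\cmb}$ and argue closure under its removal.

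For the existence of $\umaxin{\cmb}$ I would copy the argument used for $\cmbscf{\segs}$ in the proof of \cref{lem:cf:serializable}. A border is a downward vertical ray and is not an edge of any unit, so it is invisible to the dependence relation $\udep$, which by definition only sees the underlying geometric graph of a unit; hence, as far as $\udep$ is concerned, the units of $\cmb$ behave exactly like the corresponding plain segments. The relative interiors of those segments are convex and pairwise disjoint, so one of them can be translated vertically to infinity without meeting any other, which by \cref{obs:dependence} makes the corresponding unit extreme in $\cmb$; ordering the extreme units of $\cmb$ left to right by the positions of their $x$-projections and taking the last one yields $\unit = \umaxin{\cmb}$.

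Next I would verify that $\cmb' \coloneqq \cmb \setminus \{\unit\}$ again satisfies the four defining properties, which is also what the remark preceding the lemma asserts. First, $\cmb'$ is crossing-free as a subset of a crossing-free combination. Degrees do not increase under deletion, so every point still has degree at most $2$ in $\cmb'$. For the third property, $\clow{\cmb'} \subseteq \clow{\cmb}$, so each $\pt \in \clow{\cmb'}$ has degree $2$ in $\cmb$; moreover $\pt \notin \upts{\unit}$, since otherwise the unit of $\cmb'$ whose lower shadow contains $\pt$ would depend on $\unit$, contradicting extremality of $\unit$ — hence $\pt$ keeps degree $2$ in $\cmb'$. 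The first property is vacuous for $\cmb'$: because every point has degree at most $2$ we have $|\cmb| \le \ptsnum$, so $|\cmb'| = |\cmb| - 1 < \ptsnum$.

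The last property is the one needing genuine care, and I expect it to be the main obstacle: every normal finite face of $\cmb'$ must still have out-degree $1$. Because $\unit$ is extreme, a vertical ray shot upwards from the relative interior of $\unit$ meets no segment of $\cmb$, so the strip directly above $\unit$ lies in the infinite face; and by \cref{obs:dependence} no border issued from a point other than an endpoint of $\unit$ can be cut off by $\unit$. Consequently, deleting $\unit$ (together with the borders that the ownership convention attaches to $\unit$) affects only faces incident to $\unit$: the face directly below $\unit$, together with any face separated from it by a single $\unit$-owned border, merges into the infinite face, while every other face — in particular every surviving normal finite face — is left exactly as it was. If $|\cmb| = \ptsnum$ the face below $\unit$ was the last finite face and simply vanishes; otherwise it was a normal finite face and is likewise absorbed into the infinite face. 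In either case $\cmb'$ has no normal finite face of out-degree different from $1$, so $\cmb' \in \cmbslsc$. Making this ``merges into the infinite face'' reasoning fully rigorous in the presence of the border-ownership bookkeeping is the delicate point; the rest parallels \cref{lem:cf:serializable} and \cref{lem:st:serializable} and is routine.
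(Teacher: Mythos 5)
Your proposal is correct and follows essentially the same route as the paper, which in fact gives no explicit proof here: it only remarks that the four defining properties of $\cmbslsc$ are preserved when the right-most extreme element is removed, and relies on the translation-to-infinity argument from \cref{lem:cf:serializable} for the existence of that element. Your verification of the individual properties (in particular the degree-preservation argument via extremality and the observation that the first property becomes vacuous since $|\cmb'| < \ptsnum$) supplies details the paper omits, and you correctly flag the face/border bookkeeping as the only genuinely delicate point.
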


It is clear that we have to keep track of the degrees of all points.
For one last time, we change the meaning of the colors $\ptfree$, $\ptalive$, $\ptdead$ and use them to identify points of degree 0, 1, and 2, respectively.

\begin{figure}
  \begin{center}
    \begin{tikzpicture}[xscale=1.5]
  \newcommand\exampleCoords{
    \coordinate (c0)  at (0.0,+0.0);
    \coordinate (c1)  at (0.2,+1.0);
    \coordinate (c2)  at (0.4,+0.3);
    \coordinate (c3)  at (0.6,-0.1);
    \coordinate (c4)  at (0.8,+0.7);
    \coordinate (c5)  at (1.0,+1.3);
    \coordinate (c6)  at (1.2,+0.2);
    \coordinate (c7)  at (1.4,-0.4);
    \coordinate (c8)  at (1.6,+1.0);
    \coordinate (c9)  at (1.8,+0.5);
    \coordinate (c10) at (2.0,-0.05);
    \coordinate (c11) at (2.2,+0.8);
    \coordinate (c12) at (2.4,+1.4);
    \coordinate (c13) at (2.6,+0.3);
    \coordinate (c14) at (2.8,-0.1);
    \coordinate (c15) at (3.0,+0.6);
  }
  \newcommand\examplePoints[1]{
    \node[dead] (p0)  at (c0) {};
    \node[alive] (p1)  at (c1) {};
    \node[alive] (p2)  at (c2) {};
    \node[dead] (p3)  at (c3) {};
    \node[alive] (p4)  at (c4) {};
    \node[free] (p5)  at (c5) {};
    \node[dead] (p6)  at (c6) {};
    \node[dead] (p7)  at (c7) {};
    \node[free] (p8)  at (c8) {};
    \node[dead] (p9)  at (c9) {};
    \node[dead] (p10) at (c10) {};
    \node[free] (p11) at (c11) {};
    \node[free] (p12) at (c12) {};
    \node[free] (p13) at (c13) {};
    \node[alive] (p14) at (c14) {};
    \node[free] (p15) at (c15) {};
    \draw[thick] (-0.2,-0.4) -- (+3.2,-0.4);
    \node (t) at (0.0,1.5) {\tiny #1};
  }
  \begin{scope}[xshift=0]
    \exampleCoords
    \coordinate (b0) at (0.0,-0.4);
    \coordinate (b1) at (0.2,-0.0333);
    \coordinate (b2) at (0.4,-0.0666);
    \coordinate (b3) at (0.6,-0.4);
    \coordinate (b4) at (0.8,+0.3857);
    \coordinate (b5) at (1.0,-0.4);
    \coordinate (b6) at (1.2,-0.325);
    \coordinate (b7) at (1.4,-0.4);
    \coordinate (b8) at (1.6,+0.54);
    \coordinate (b9) at (1.8,+0.0875);
    \coordinate (b10) at (2.0,-0.4);
    \coordinate (b11) at (2.2,+0.0125);
    \coordinate (b12) at (2.4,-0.4);
    \coordinate (b13) at (2.6,-0.4);
    \coordinate (b14) at (2.8,-0.4);
    \coordinate (b15) at (3.0,-0.4);
    \draw (c14) -- (c6);
    \draw (c6) -- (c10);
    \draw (c10) -- (c7);
    \draw (c7) -- (c3);
    \draw (c3) -- (c0);
    \draw (c0) -- (c1);
    \draw (c2) -- (c9);
    \draw (c9) -- (c4);
    \draw[dotted,dloriented] (b0) -- (c0);
    \draw[dotted,droriented] (b1) -- (c1);
    \draw[dotted,dloriented] (b2) -- (c2);
    \draw[dotted,dloriented] (b3) -- (c3);
    \draw[dotted,dloriented] (b4) -- (c4);
    \draw[dotted,dloriented] (b6) -- (c6);
    \draw[dotted,dloriented] (b9) -- (c9);
    \draw[dotted,droriented] (b10) -- (c10);
    \draw[dotted,droriented] (b14) -- (c14);
    \examplePoints{(a)}
    \node[marked] at (c4) {};
    \node at (3.6,0.4) {$\ctarrow{}{}{\unit}$};
  \end{scope}
  \begin{scope}[xshift=120]
    \exampleCoords
    \coordinate (b0) at (0.0,-0.4);
    \coordinate (b1) at (0.2,-0.0333);
    \coordinate (b2) at (0.4,-0.0666);
    \coordinate (b3) at (0.6,-0.4);
    \coordinate (b4) at (0.8,+0.3857);
    \coordinate (b5) at (1.0,-0.4);
    \coordinate (b6) at (1.2,-0.325);
    \coordinate (b7) at (1.4,-0.4);
    \coordinate (b8) at (1.6,+0.54);
    \coordinate (b9) at (1.8,+0.0875);
    \coordinate (b10) at (2.0,-0.4);
    \coordinate (b11) at (2.2,+0.0125);
    \coordinate (b12) at (2.4,-0.4);
    \coordinate (b13) at (2.6,-0.4);
    \coordinate (b14) at (2.8,-0.4);
    \coordinate (b15) at (3.0,-0.4);
    \fill[black!20!white] (c8) -- (b8) -- (c9) -- (b9) -- (b11) -- (c11) -- cycle;
    \draw (c14) -- (c6);
    \draw (c6) -- (c10);
    \draw (c10) -- (c7);
    \draw (c7) -- (c3);
    \draw (c3) -- (c0);
    \draw (c0) -- (c1);
    \draw (c2) -- (c9);
    \draw (c9) -- (c4);
    \draw (c8) -- node[above,pos=0.55] {$\scriptstyle\unit$} (c11);
    \draw[dotted,dloriented] (b0) -- (c0);
    \draw[dotted,droriented] (b1) -- (c1);
    \draw[dotted,dloriented] (b2) -- (c2);
    \draw[dotted,dloriented] (b3) -- (c3);
    \draw[dotted,dloriented] (b4) -- (c4);
    \draw[dotted,dloriented] (b6) -- (c6);
    \draw[dotted,dloriented] (b9) -- (c9);
    \draw[dotted,droriented] (b10) -- (c10);
    \draw[dotted,droriented] (b14) -- (c14);
    \draw[dotted,droriented] (b8) -- (c8);
    \draw[dotted,dloriented] (b11) -- (c11);
    \examplePoints{(b)}
    \node[alive] at (c8) {};
    \node[alive] at (c11) {};
    \node[marked] at (c8) {};
  \end{scope}

  \begin{scope}[xshift=0,yshift=-70]
    \exampleCoords
    \coordinate (b0) at (0.0,-0.4);
    \coordinate (b1) at (0.2,-0.0333);
    \coordinate (b2) at (0.4,-0.0666);
    \coordinate (b3) at (0.6,-0.4);
    \coordinate (b4) at (0.8,+0.25);
    \coordinate (b5) at (1.0,-0.4);
    \coordinate (b6) at (1.2,-0.0785);
    \coordinate (b7) at (1.4,-0.4);
    \coordinate (b8) at (1.6,+0.54);
    \coordinate (b9) at (1.8,-0.0571);
    \coordinate (b10) at (2.0,-0.2714);
    \coordinate (b11) at (2.2,-0.228);
    \coordinate (b12) at (2.4,-0.4);
    \coordinate (b13) at (2.6,-0.4);
    \coordinate (b14) at (2.8,-0.4);
    \coordinate (b15) at (3.0,-0.4);
    \draw (c14) -- (c7);
    \draw (c7) -- (c10);
    \draw (c10) -- (c3);
    \draw (c3) -- (c0);
    \draw (c0) -- (c1);
    \draw (c2) -- (c6);
    \draw (c6) -- (c9);
    \draw (c9) -- (c4);
    \draw[dotted,dloriented] (b0) -- (c0);
    \draw[dotted,droriented] (b1) -- (c1);
    \draw[dotted,droriented] (b2) -- (c2);
    \draw[dotted,dloriented] (b3) -- (c3);
    \draw[dotted,dloriented] (b4) -- (c4);
    \draw[dotted,droriented] (b6) -- (c6);
    \draw[dotted,droriented] (b9) -- (c9);
    \draw[dotted,droriented] (b10) -- (c10);
    \draw[dotted,droriented] (b14) -- (c14);
    \examplePoints{(c)}
    \node[marked] at (c4) {};
    \node at (3.6,0.4) {$\ctnarrow{}{}{}$};
  \end{scope}

  \begin{scope}[xshift=120,yshift=-70]
    \exampleCoords
    \coordinate (b0) at (0.0,-0.4);
    \coordinate (b1) at (0.2,-0.0333);
    \coordinate (b2) at (0.4,-0.0666);
    \coordinate (b3) at (0.6,-0.4);
    \coordinate (b4) at (0.8,+0.25);
    \coordinate (b5) at (1.0,-0.4);
    \coordinate (b6) at (1.2,-0.0785);
    \coordinate (b7) at (1.4,-0.4);
    \coordinate (b8) at (1.6,+0.54);
    \coordinate (b9) at (1.8,-0.0571);
    \coordinate (b10) at (2.0,-0.2714);
    \coordinate (b11) at (2.2,-0.228);
    \coordinate (b12) at (2.4,-0.4);
    \coordinate (b13) at (2.6,-0.4);
    \coordinate (b14) at (2.8,-0.4);
    \coordinate (b15) at (3.0,-0.4);
    \fill[black!20!white] (c8) -- (b8) -- (c9) -- (b9) --
      (c10) -- (b10) -- (b11) -- (c11) -- cycle; 
    \draw (c14) -- (c7);
    \draw (c7) -- (c10);
    \draw (c10) -- (c3);
    \draw (c3) -- (c0);
    \draw (c0) -- (c1);
    \draw (c2) -- (c6);
    \draw (c6) -- (c9);
    \draw (c9) -- (c4);
    \draw (c8) -- node[above,pos=0.55] {$\scriptstyle\unit$} (c11);
    \draw[dotted,dloriented] (b0) -- (c0);
    \draw[dotted,droriented] (b1) -- (c1);
    \draw[dotted,droriented] (b2) -- (c2);
    \draw[dotted,dloriented] (b3) -- (c3);
    \draw[dotted,dloriented] (b4) -- (c4);
    \draw[dotted,droriented] (b6) -- (c6);
    \draw[dotted,droriented] (b9) -- (c9);
    \draw[dotted,droriented] (b10) -- (c10);
    \draw[dotted,droriented] (b14) -- (c14);
    \draw[dotted,droriented] (b8) -- (c8);
    \draw[dotted,dloriented] (b11) -- (c11);
    \examplePoints{(d)}
    \node[alive] at (c8) {};
    \node[alive] at (c11) {};
    \node[marked] at (c8) {};
  \end{scope}

\end{tikzpicture}
  \end{center}
  \caption{
    The elements of $\protect\cmbslsc$ in (a) and (c) cannot be considered equivalent.
  }
  \label{fig:scincoherent}
\end{figure}

Unsurprisingly, and similar to the previous subsection, an equivalence relation based only on these three colors and the usual marking is not coherent, as illustrated in \Cref{fig:scincoherent}.
The only problem, however, is that finite faces which do not have the right out-degree can be created.
A by now routine proof shows that these three colors are already sufficient to avoid crossings.

To avoid finite faces with wrong out-degrees, we split $\ptalive$ into 3 subcolors, and we split $\ptdead$ into 2 subcolors, giving us a total of 6 different colors which are then used to define an equivalence relation $\parteq$ on $\cmbslsc$.
For each point with color $\ptalive$ we keep track of whether it exposes a drain.
If it does, then we also keep track of whether it is to the left or to the right.
Remember, this extra information is relevant if and only if the point in question is one of the endpoints of a new segment.
For a point with color $\ptdead$ we only keep track of whether it exposes a drain or not.
The reason why this is sufficient is that such a point has, by definition, degree 2 already and cannot be an endpoint of a new segment.

\begin{restatable}{lemma}{sccoherent}
  \label{lem:sc:coherent}
  The equivalence relation $\parteq$ on $\cmbslsc$, as defined above, is coherent.
\end{restatable}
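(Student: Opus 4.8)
The plan is to verify the defining implication of coherence (\Cref{def:coherent}) by the same ``add a unit to the second combination'' template used in \Cref{lem:cf:coherent,lem:pt:coherent,lem:sd:coherent}. Fix non-empty $\cmb_1,\cmb_2\in\cmbslsc$ with $\cmb_1\parteq\cmb_2$ and an edge $\ctarrow{\cmb_1}{\cmb_1'}{\unit}$, and set $\cmb_2'\coloneqq\cmb_2\cup\{\unit\}$, where the border below an endpoint $\pt$ of $\unit$ is understood to be active exactly when $\pt$ has degree $0$ in $\cmb_2$ (the ``first incident segment'' convention from the footnote in \cref{subsec:st}); this is well defined since $\cmb_1\parteq\cmb_2$ agree on whether $\pt$ is $\ptfree$. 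I would then show, in this order: (i) $\cmb_2'\in\cmbslsc$; (ii) $\unit\notin\cmb_2$ and $\umaxin{\cmb_2'}=\unit$, hence $\ctarrow{\cmb_2}{\cmb_2'}{\unit}$; (iii) $\cmb_1'\parteq\cmb_2'$.

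For the routine portion, the key observation is that the $6$-colour encoding of a $\cmb\in\cmbslsc$ determines $\cpts{\cmb}$ exactly (the non-$\ptfree$ points) and, by the third bullet in the definition of $\cmbslsc$, confines $\clow{\cmb}$ to the set of $\ptdead$ (degree-$2$) points. From $\cpts{\cmb_1}=\cpts{\cmb_2}$, the degree-$2$ points of the two combinations coinciding, $\ulow{\unit}\subseteq\clow{\cmb_1'}$, and $\ulow{\unit}\cap\upts{\unit}=\emptyset$, one argues exactly as in \Cref{lem:cf:coherent,lem:pt:coherent} that the endpoints of $\unit$ have degree $\le 1$ in $\cmb_2$ (else degree $3$ in $\cmb_1'$), that the points of $\ulow{\unit}$ already have degree $2$ in $\cmb_2$, that $\unit$ is extreme in $\cmb_2'$ (via \Cref{obs:dependence}, using equality of shadows and point sets with $\cmb_1'$), and hence that $\cmb_2'$ is crossing-free and meets the degree bullets of $\cmbslsc$. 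The marking argument giving $\umaxin{\cmb_2'}=\unit$ is verbatim that of \Cref{lem:cf:coherent}. One wrinkle absent from \cref{subsec:partitions} is that $\unit\notin\cmb_2$ must be argued separately here — establishing $\cmb_1'\parteq\cmb_2'$ is simply false when $\unit$ already lies in $\cmb_2$, since the colour at $\unit$'s endpoints then differs between $\cmb_1'$ and $\cmb_2'$ — so one shows directly that $\unit\in\cmb_2$ would force both endpoints of $\unit$ to have degree exactly $1$ in $\cmb_1$ (hence $\unit$ an isolated edge of $\cmb_2$) and derives a contradiction with $\unit=\umaxin{\cmb_1'}$.

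The genuinely new content is the finite-face condition, and this is where I would concentrate the work. The geometric claim I would prove is that inserting the extreme segment $\unit$ into any crossing-free combination $\cmb$ affects the face structure only locally: exactly one new finite face $f$ appears — the region immediately below the segment of $\unit$, bounded on the sides by the active borders under the endpoints of $\unit$ and below by previously present segments and/or the bottom — and it is carved out of the old infinite face, while every pre-existing face (in particular every pre-existing finite face, together with its out-degree) is unchanged. Granting this, the four bullets of $\cmbslsc$ hold for $\cmb_2'$ as soon as $f$ has the correct out-degree, and conversely $f$ is the only face contributing anything new to the colouring of $\cmb_2'$. The crucial point is then precisely the informal computation stated before \Cref{lem:st:coherent}: the out-degree of $f$, and the (possibly) updated drain status of the points on its boundary — which are contained in $\ulow{\unit}\cup\upts{\unit}$, so only those points can change colour — are functions of $\unit$ alone together with the drain data that the $6$ colours record for the points of $\ulow{\unit}\cup\upts{\unit}$ (a border under a point of $\ulow{\unit}$ becomes an out-border of $f$; an exposed drain at the left, resp.\ right, endpoint of $\unit$ becomes an out-border of $f$ iff it is exposed to the right, resp.\ left). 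Since $\cmb_1$ and $\cmb_2$ carry identical such data, $f$ has the same out-degree in $\cmb_1'$ and in $\cmb_2'$ and all points acquire identical updated colours, which together with $\umaxin{\cmb_1'}=\unit=\umaxin{\cmb_2'}$ yields $\cmb_1'\parteq\cmb_2'$. The last-finite-face case ($|\cmb_1|=|\cmb_2|=\ptsnum-1$, so that $f$ is required to have out-degree $0$ rather than $1$) is threaded through by phrasing everything in terms of the numeric out-degree of $f$, which as noted is the same on both sides, letting the admissibility test depend only on the common value $|\cmb_1|=|\cmb_2|$.

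I expect the main obstacle to be exactly this geometric locality claim: carefully identifying which borders bound $f$ in the presence of the ``only the first incident segment carries a border'' convention, proving that no pre-existing finite face gains or loses an out-border when $\unit$ is inserted (so that both the $\cmbslsc$-defining and $\target$-defining face conditions are preserved), and checking that the set of points whose drain status can change is contained in $\ulow{\unit}\cup\upts{\unit}$. Once that bookkeeping is in place, everything else reduces to the already-established template of \Cref{lem:cf:coherent,lem:pt:coherent}.
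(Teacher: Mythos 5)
Your plan is correct and follows exactly the route the paper intends: the paper states \cref{lem:sc:coherent} without a written proof, relying on the add-the-extreme-unit template of \cref{lem:cf:coherent,lem:pt:coherent} together with the informal observation (made before \cref{lem:st:coherent}) that adding an extreme segment $\unit$ creates exactly one new finite face below it, carved out of the infinite face, whose out-degree is determined by $\unit$ and the drain data recorded at the points of $\ulow{\unit}\cup\upts{\unit}$ --- precisely the locality claim you isolate as the crux. Your identification of the extra obligations ($\unit\notin\cmb_2$, degree bookkeeping via the third bullet of the definition of $\cmbslsc$, and the last-finite-face case handled through the common value of $|\cmb_1|=|\cmb_2|$) fills in exactly what the paper leaves implicit.
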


We define the set $\target \subseteq \quotient{\cmbslsc}{\parteq}$ which contains $\partclass{\cmb}$ if and only if every point has degree 2 in $\cmb$ and the infinite face has out-degree 0.
Let $\agraphsc{\pts}$ be the combination graph corresponding to $(\segssc,\cmbslsc,\parteq,\target)$.
\Cref{thm:sc} with $c=6$ now follows from \cref{lem:represent} and from the following insight.
The better bound $c < 5.61804$ will be proved in \cref{sec:bounds}.

\begin{lemma}
  \label{lem:sc:bijection}
  There is a natural bijection between the sets $\bigcup \target$ and $\scs$ in the following sense.
  For any combination $\cmb$ in $\bigcup \target$, building the geometric graph on $\pts$ with edges that correspond to the segments in $\cmb$ yields the corresponding crossing-free spanning cycle in $\scs$.
\end{lemma}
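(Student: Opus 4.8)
The plan is to follow the proof of \cref{lem:st:bijection} closely. Any $\cmb \in \bigcup\target$ gives a crossing-free geometric graph $G_1$ on $\pts$ whose edges are the segments of $\cmb$; since the definition of $\target$ forces every point to have degree $2$ in $\cmb$, the graph $G_1$ is a vertex-disjoint union of crossing-free cycles, so the only thing left to prove is that $G_1$ is connected, i.e.\ a single spanning cycle. To apply \cref{lem:entangledtrees}, which talks about root-oriented multigraphs, I would first delete the right-most extreme segment $\unit^* \coloneqq \umaxin{\cmb}$. Because each endpoint of $\unit^*$ has degree $2$ in $\cmb$, the segment $\unit^*$ is not the first segment in canonical order at $\ulft{\unit^*}$ nor at $\urgt{\unit^*}$, hence it carries no border; therefore $\cmb \setminus \{\unit^*\}$ retains every border of $\cmb$, and the only change to the face structure is that the last finite face (directly below $\unit^*$) merges into the infinite face.

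From $\cmb \setminus \{\unit^*\}$ I would build two directed multigraphs, exactly as in \cref{lem:st:bijection}. Let $G_1'$ be the graph on $\pts$ with edge set $\cmb \setminus \{\unit^*\}$; all of its vertices have degree $2$ except $\ulft{\unit^*}$ and $\urgt{\unit^*}$, which have degree $1$, so $G_1'$ is one path between $\ulft{\unit^*}$ and $\urgt{\unit^*}$ together with possibly some disjoint cycles. Orienting the path towards $\urgt{\unit^*}$ and each cycle cyclically makes $G_1'$ root-oriented towards $\urgt{\unit^*}$. Let $G_2'$ be the face graph of $\cmb \setminus \{\unit^*\}$ (one vertex per face, adjacent faces joined, each edge directed according to its border); using the out-degree conditions in the definition of $\cmbslsc$ together with the fact that merging the last finite face into the infinite face preserves out-degree $0$, every vertex of $G_2'$ has out-degree $1$ except the one for the infinite face, which has out-degree $0$, so $G_2'$ is root-oriented towards the infinite face. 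Just as in \cref{lem:st:bijection}, one checks that the plane drawings can be chosen disjoint, entangled, and tangent in $\urgt{\unit^*}$ and the infinite face (the room freed by removing $\unit^*$, together with the former last finite face, supplies the tangency curve). Then \cref{lem:entangledtrees} forces $G_1'$ to be a tree, hence the path between $\ulft{\unit^*}$ and $\urgt{\unit^*}$ with no extra cycles, i.e.\ a spanning path; consequently $G_1 = G_1' \cup \{\unit^*\}$ is a crossing-free spanning cycle.

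For the reverse inclusion I would verify that every $S \in \scs$ has exactly one preimage. Its segments are forced to be the edges of $S$, the borders are forced to be the canonical ones, and only the border directions remain to be pinned down. Writing $\unit^*$ for the right-most extreme edge of $S$, the graph $S \setminus \{\unit^*\}$ is a spanning path, so its face graph is a tree and admits a unique orientation that is root-oriented towards the infinite face; putting $\unit^*$ back splits the infinite face into the new infinite face and the last finite face $F^*$ without creating a border between them, and a short check shows that with these already-fixed directions both $F^*$ and the new infinite face have out-degree $0$ while every normal finite face still has out-degree $1$. Hence the resulting combination $\cmb$ lies in $\cmbslsc$ with $\partclass{\cmb} \in \target$, and it is the only one that maps to $S$.

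The main obstacle is the face-and-border bookkeeping: showing that deleting $\unit^*$ merges precisely the last finite face into the infinite face (and nothing else), that the out-degree conditions of $\cmbslsc$ then translate exactly into $G_2'$ being root-oriented towards the infinite face, and --- for uniqueness --- that re-inserting $\unit^*$ forces the newly created finite face to have out-degree $0$. This is exactly what the subcolors recording whether, and on which side, a point exposes a drain are designed to track.
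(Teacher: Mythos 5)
Your proof is correct, but it feeds a different pair of multigraphs into \cref{lem:entangledtrees} than the paper does. The paper stays with $\cmb$ itself and splits the \emph{face} graph in two: $G_E$ on the faces contained in an even number of the disjoint cycles of $\cmb$, root-oriented towards the infinite face, and $G_O$ on the faces contained in an odd number of cycles, root-oriented towards the last finite face (which has out-degree $0$ by the first condition on $\cmbslsc$ and sits at nesting depth $1$ because nothing lies above $\umaxin{\cmb}$). Since a border crosses no segment, adjacent faces have the same nesting depth, so $G_E$ and $G_O$ are well defined; both being trees, hence connected, immediately excludes a second cycle, whether nested or side by side. You instead delete $\unit^\ast = \umaxin{\cmb}$ and pair the re-oriented segment graph with the single face graph of $\cmb \setminus \{\unit^\ast\}$, mirroring the proof of \cref{lem:st:bijection}. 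The paper's choice buys a shorter argument with no deletion, no choice of orientation on the undirected segments, and no claim about how the face structure changes; your choice buys uniformity with the spanning-tree case and, helpfully, an explicit uniqueness argument for the reverse direction, which the paper leaves implicit. The bookkeeping you flag does go through: $\unit^\ast$ carries no borders because its endpoints have degree $2$ so it is not the first segment at either of them; no border can end on $\unit^\ast$, as that would put a point of another unit directly above $\unit^\ast$ and contradict extremality; hence deleting $\unit^\ast$ merges exactly the last finite face into the infinite face, and since both have out-degree $0$ they cannot share a border, so the merged face again has out-degree $0$ and your $G_2'$ is indeed root-oriented towards it.
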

\begin{proof}
  For any $\cmb$ in $\bigcup \target$ we know that each point is of degree 2, which means that $\cmb$ is a set of disjoint cycles, as exemplified in \cref{fig:cycles} (a).
  Similar to the proof of \cref{lem:st:bijection}, $\cmb$ induces two directed multigraphs $G_E$ and $G_O$.
  
\begin{figure}[b]
  \begin{center}
              \begin{tikzpicture}[xscale=1.5]
            \newcommand\exampleCoords{
              \coordinate (c0)  at (0.0,+0.0);
              \coordinate (c1)  at (0.2,+1.0);
              \coordinate (c2)  at (0.4,+0.3);
              \coordinate (c3)  at (0.6,-0.1);
              \coordinate (c4)  at (0.8,+0.7);
              \coordinate (c5)  at (1.0,+1.3);
              \coordinate (c6)  at (1.2,+0.2);
              \coordinate (c7)  at (1.4,-0.4);
              \coordinate (c8)  at (1.6,+1.0);
              \coordinate (c9)  at (1.8,+0.5);
              \coordinate (c10) at (2.0,-0.05);
              \coordinate (c11) at (2.2,+0.8);
              \coordinate (c12) at (2.4,+1.4);
              \coordinate (c13) at (2.6,+0.3);
              \coordinate (c14) at (2.8,-0.1);
              \coordinate (c15) at (3.0,+0.6);
            }
            \newcommand\examplePoints[1]{
              \node[point] (p0)  at (c0) {};
              \node[point] (p1)  at (c1) {};
              \node[point] (p2)  at (c2) {};
              \node[point] (p3)  at (c3) {};
              \node[point] (p4)  at (c4) {};
              \node[point] (p5)  at (c5) {};
              \node[point] (p6)  at (c6) {};
              \node[point] (p7)  at (c7) {};
              \node[point] (p8)  at (c8) {};
              \node[point] (p9)  at (c9) {};
              \node[point] (p10) at (c10) {};
              \node[point] (p11) at (c11) {};
              \node[point] (p12) at (c12) {};
              \node[point] (p13) at (c13) {};
              \node[point] (p14) at (c14) {};
              \node[point] (p15) at (c15) {};
              \draw[thick] (-0.2,-0.4) -- (+3.2,-0.4);
              \node (t) at (0.0,1.5) {\tiny #1};
            }
            \begin{scope}[xshift=110]
              \exampleCoords
              \coordinate (b0) at (0.0,-0.4);
              \coordinate (b1) at (0.2,-0.0333);
              \coordinate (b2) at (0.4,-0.0666);
              \coordinate (b3) at (0.6,-0.4);
              \coordinate (b4) at (0.8,+0.25);
              \coordinate (b5) at (1.0,+0.45);
              \coordinate (b6) at (1.2,-0.325);
              \coordinate (b7) at (1.4,-0.4);
              \coordinate (b8) at (1.6,-0.28333);
              \coordinate (b9) at (1.8,-0.1666);
              \coordinate (b10) at (2.0,-0.4);
              \coordinate (b11) at (2.2,+0.4);
              \coordinate (b12) at (2.4,+0.55);
              \coordinate (b13) at (2.6,-0.0875);
              \coordinate (b14) at (2.8,-0.4);
              \coordinate (b15) at (3.0,-0.4);
              
              \path[fill=black!20!white] 
                (c0) -- (c1) -- (c5) -- 
                (c12) -- (c15) -- (c14) -- 
                (c10) -- (c7) -- (c3) -- cycle;
              \path[fill=white] 
                (c2) -- (c4) -- (c6) -- cycle;
              \path[fill=white]
                (c8) -- (c9) -- (c13) --
                (c11) -- cycle;

              \draw (c0) -- (c1);
              \draw (c0) -- (c3);
              \draw (c1) -- (c5);
              \draw (c3) -- (c7);
              \draw (c5) -- (c12);
              \draw (c7) -- (c10);
              \draw (c10) -- (c14);
              \draw (c12) -- (c15);
              \draw (c14) -- (c15);
              \draw (c2) -- (c4);
              \draw (c2) -- (c6);
              \draw (c4) -- (c6);
              \draw (c8) -- (c9);
              \draw (c8) -- (c11);
              \draw (c9) -- (c13);
              \draw (c11) -- (c13);

              \draw[dotted] (b0) -- (c0);
              \draw[dotted] (b1) -- (c1);
              \draw[dotted] (b2) -- (c2);
              \draw[dotted] (b3) -- (c3);
              \draw[dotted] (b4) -- (c4);
              \draw[dotted] (b5) -- (c5);
              \draw[dotted] (b6) -- (c6);
              \draw[dotted] (b7) -- (c7);
              \draw[dotted] (b8) -- (c8);
              \draw[dotted] (b9) -- (c9);
              \draw[dotted] (b10) -- (c10);
              \draw[dotted] (b11) -- (c11);
              \draw[dotted] (b12) -- (c12);
              \draw[dotted] (b13) -- (c13);
              \draw[dotted] (b14) -- (c14);
              \draw[dotted] (b15) -- (c15);
              \examplePoints{(a)}
            \end{scope}
            \begin{scope}[xshift=220]
              \exampleCoords
              \coordinate (b0) at (0.0,-0.4);
              \coordinate (b1) at (0.2,+0.15);
              \coordinate (b2) at (0.4,-0.0666);
              \coordinate (b3) at (0.6,-0.4);
              \coordinate (b4) at (0.8,-0.0);
              \coordinate (b5) at (1.0,+0.775);
              \coordinate (b6) at (1.2,-0.4);
              \coordinate (b7) at (1.4,-0.4);
              \coordinate (b8) at (1.6,-0.28333);
              \coordinate (b9) at (1.8,-0.1666);
              \coordinate (b10) at (2.0,-0.4);
              \coordinate (b11) at (2.2,-0.0675);
              \coordinate (b12) at (2.4,+0.75);
              \coordinate (b13) at (2.6,-0.0875);
              \coordinate (b14) at (2.8,-0.4);
              \coordinate (b15) at (3.0,-0.4);

              \path[fill=black!20!white]
                (c0) -- (c2) -- (c1) -- 
                (c5) -- (c4) -- (c8) --
                (c9) -- (c12) -- (c15) --
                (c11) -- (c13) -- (c14) --
                (c10) -- (c7) -- (c6) --
                (c3) -- cycle;

              \draw (c0) -- (c2);
              \draw (c0) -- (c3);
              \draw (c1) -- (c2);
              \draw (c1) -- (c5);
              \draw (c3) -- (c6);
              \draw (c4) -- (c5);
              \draw (c4) -- (c8);
              \draw (c6) -- (c7);
              \draw (c7) -- (c10);
              \draw (c8) -- (c9);
              \draw (c9) -- (c12);
              \draw (c10) -- (c14);
              \draw (c12) -- (c15);
              \draw (c11) -- (c13);
              \draw (c13) -- (c14);
              \draw (c11) -- (c15);

              \draw[dotted,dloriented] (b0) -- (c0);
              \draw[dotted,dloriented] (b1) -- (c1);
              \draw[dotted,droriented] (b2) -- (c2);
              \draw[dotted,dloriented] (b3) -- (c3);
              \draw[dotted,droriented] (b4) -- (c4);
              \draw[dotted,droriented] (b5) -- (c5);
              \draw[dotted,dloriented] (b6) -- (c6);
              \draw[dotted,droriented] (b8) -- (c8);
              \draw[dotted,droriented] (b9) -- (c9);
              \draw[dotted,droriented] (b10) -- (c10);
              \draw[dotted,dloriented] (b11) -- (c11);
              \draw[dotted,droriented] (b12) -- (c12);
              \draw[dotted,dloriented] (b13) -- (c13);
              \draw[dotted,droriented] (b14) -- (c14);
              \draw[dotted,droriented] (b15) -- (c15);
              \examplePoints{(b)}
            \end{scope}
          \end{tikzpicture}
  \end{center}
  \caption{
    White faces are vertices in $G_E$.
    Shaded faces are vertices in $G_O$.
  }
  \label{fig:cycles}
\end{figure}
  
  The vertex set of $G_E$ is the set of faces in $\cmb$ contained in an even number of cycles.
  The vertex set of $G_O$ is the set of faces in $\cmb$ contained in an odd number of cycles.
  In both $G_E$ and $G_O$, two vertices are connected by an edge if the corresponding faces in $\cmb$ are adjacent.
  The direction of each edge reflects the direction of the corresponding border in $\cmb$.
  By definition of $\cmbslsc$ and $\target$, $G_E$ and $G_O$ are root-oriented towards the infinite face and the last finite face in $\cmb$, respectively, as exemplified in \cref{fig:cycles} (b).
  
  Clearly, there exist plane drawings of $G_E$ and $G_O$ which are disjoint, entangled, and also tangent in the respective roots.
  Applying \cref{lem:entangledtrees} hence concludes the proof.
\end{proof}

\section{Bounding the Size of Combination Graphs}\label{sec:bounds}

In this final section, we show how to prove the bounds on the constants $c$ in \cref{thm:pg,thm:st,thm:sc}.
Recall that vertices of combination graphs can be interpreted as colorings of $\pts$ with a finite number of colors.
The following proofs make use of the fact that certain patterns cannot occur in these colorings because of geometric constraints.
Additionally, in the case of spanning trees and spanning cycles, we can improve the bounds further by identifying and discarding vertices from which the sink $\cgsink$ can no longer be reached.

\subsection{All geometric graphs}

Recall the definition of the set $\cmbscf{\segs}$ and the corresponding equivalence relation $\parteq$ from \cref{subsec:crossingfree}.
We are left to prove the following lemma.

{\renewcommand\footnote[1]{}\cfbound*}
\begin{proof}
  We may encode equivalence classes $\partclass{\cmb}$ in $\quotient{\cmbscf{\segs}}{\parteq}$ by a string $s_\cmb$ of length $\ptsnum$ over the alphabet $\{\ptdead,\ptalive,\ptfree\}$ and by an index $k_\cmb$.
  The $i$-th entry in $s_\cmb$ is $\ptdead$ if $\pt_i \in \clow{\cmb}$, it is $\ptalive$ if $\pt_i \in \cpts{\cmb} \setminus \clow{\cmb}$, and it is $\ptfree$ otherwise.
  The index $k_\cmb$ is equal to the number $i$ which satisfies $\pt_i = \ulft{\umaxin{\cmb}}$.
  From this encoding we immediately get a bound of $O(3^nn)$ on the size of $\quotient{\cmbscf{\segs}}{\parteq}$.

  Our proof strategy now is as follows.
  We ignore the indices $k_\cmb$ and give an upper bound of $O(\alpha^n)$ on the number of strings $s_\cmb$ which correspond to a combination $\cmb$ in $\cmbscf{\segs}$.
  From this, the desired upper bound easily follows after adding an additional factor $n$.
  We do so by defining an injective function from a set $A$ to a set $B$.
  Set $A$ contains all strings $s_\cmb$. 
  Set $B$ contains all strings from $\{\ptdead,\ptalive,\ptfree\}^n$ which do not contain any subsequences of the form $(\ptdead,\ptfree,\dots,\ptfree,\ptdead)$, that is, one or more consecutive symbols $\ptfree$ enclosed by two symbols $\ptdead$.
  Such subsequences are called \emph{forbidden} henceforth.
  The bound on $|A|$ then follows from $|B| = \Theta(\alpha^n)$, an elementary counting problem whose proof we omit.

  Let us define such an injective function $f \colon A \rightarrow B$.
  For any $a \in A$ we construct $f(a)$ by the following process.
  We iterate over $a$ from left to right, and whenever we find a forbidden subsequence $(\ptdead,\ptfree,\dots,\ptfree,\ptdead)$ we replace it by a sequence $(\ptalive,\ptdead,\dots,\ptdead,\ptalive)$ of the same length.
  For example, if $a = (\ptdead,\ptfree,\ptdead,\ptdead,\ptdead,\ptfree,\ptdead)$ then $f(a) = (\ptalive,\ptdead,\ptalive,\ptdead,\ptalive,\ptdead,\ptalive)$.
  However, if $a = (\ptdead,\ptfree,\ptdead,\ptfree,\ptdead)$ then $f(a) = (\ptalive,\ptdead,\ptalive,\ptfree,\ptdead)$ because the second forbidden subsequence in $a$ is no longer a forbidden subsequence after the first one has been replaced.

  Clearly, $f$ is a function $A \rightarrow B$.
  It only remains to prove injectivity.
  Towards a contradiction, assume thus that $a \neq a'$ satisfy $f(a) = f(a') \eqqcolon b$.
  Let $i$ be the smallest index with $a_i \neq a_i'$.
  We distinguish the three cases $b_i = \ptfree$, $b_i = \ptalive$ and $b_i = \ptdead$.

  For $b_i = \ptfree$, observe that the function $f$ never uses the symbol $\ptfree$ to replace an entry in $a$ or $a'$.
  Hence, $a_i = a_i' = \ptfree$, a contradiction.

  For $b_i = \ptalive$, we may assume without loss of generality that $a_i = \ptdead$ and $a_i' = \ptalive$.
  Furthermore, $a_i$ must be either the first or last letter in a forbidden subsequence in $a$ that is replaced under $f$.
  From minimality of $i$ it follows that it is the first letter.
  Since the following argument generalizes to larger forbidden subsequences, we can now assume for reasons of simplicity that we have $(a_i,a_{i+1},a_{i+2}) = (\ptdead,\ptfree,\ptdead)$ and $(b_i,b_{i+1},b_{i+2}) = (\ptalive,\ptdead,\ptalive)$.
  There are two possibilities for the corresponding letters in $a'$.
  \begin{itemize}
    \item
      If $(a_i',a_{i+1}',a_{i+2}') = (\ptalive,\ptdead,\ptalive)$, then observe that the points $\pt_i$, $\pt_{i+1}$, $\pt_{i+2}$ from the set $\pts$ must form a left turn, that is, $\pt_{i+1}$ is strictly below the segment with endpoints $\pt_i$ and $\pt_{i+2}$.
      The reason is that $a' = s_{\cmb'}$ corresponds to a combination $\cmb'$ in $\cmbscf{\segs}$, and there must be a segment in $\cmb'$ which has $\pt_{i+1}$, but neither $\pt_{i}$ nor $\pt_{i+2}$, in its lower shadow.
      On the other hand, $(a_i,a_{i+1},a_{i+2}) = (\ptdead,\ptfree,\ptdead)$ implies that the points $\pt_i$, $\pt_{i+1}$, $\pt_{i+2}$ form a right turn, that is, $\pt_{i+1}$ is strictly above the segment with endpoints $\pt_i$ and $\pt_{i+2}$.
      Again, this holds because also $a = s_\cmb$ corresponds to a combination $\cmb$, and there must be a segment in $\cmb$ which has both $\pt_{i}$ and $\pt_{i+2}$, but not $\pt_{i+1}$, in its lower shadow.
      We have derived a contradiction because these two arrangements of $\pt_i$, $\pt_{i+1}$ and $\pt_{i+2}$ are mutually exclusive.
    \item
      If $(a_i',\dots,a_{i+3}') = (\ptalive,\ptdead,\ptdead,\ptfree)$ and $(b_i,\dots,b_{i+3}) = (\ptalive,\ptdead,\ptalive,\ptdead)$, that is, index $i+2$ is the beginning of a forbidden subsequence in $a'$, then we further obtain $(a_i,\dots,a_{i+3}) = (\ptdead,\ptfree,\ptdead,\ptdead)$.
      A contradiction can be derived in the same manner as in the previous case.
      The symbols in $a'$ imply that the points $\pt_i$, $\pt_{i+1}$ and $\pt_{i+3} $ form a left turn because there exists a segment which has $\pt_{i+1}$ and $\pt_{i+2}$, but neither $\pt_i$ nor $\pt_{i+3}$, in its lower shadow.
      The symbols in $a$ imply that $\pt_i$, $\pt_{i+1}$ and $\pt_{i+3}$ form a right turn because there exists a segment which has $\pt_i$, $\pt_{i+2}$ and $\pt_{i+3}$, but not $\pt_{i+1}$, in its lower shadow.
  \end{itemize}

  For $b_i = \ptdead$, we may assume that $a_i = \ptfree$, $a_i' = \ptdead$, and that in $a$ there exists a forbidden subsequence that starts before index $i$ and ends after index $i$ and that is replaced under $f$.
  Again for simplicity we assume $(a_{i-1},a_i,a_{i+1}) = (\ptdead,\ptfree,\ptdead)$ and $(b_{i-1},b_i,b_{i+1}) = (\ptalive,\ptdead,\ptalive)$, longer sequences can be treated similarly.
  By minimality of $i$ we get $a_{i-1}' = a_{i-1} = \ptdead$.
  The only possible way for $a_{i-1}' = \ptdead$ to be replaced by $b_{i-1} = \ptalive$ under $f$ is if $a_{i-1}'$ is the last letter in a forbidden subsequence in $a'$.
  Therefore, $(a_{i-3}',a_{i-2}',a_{i-1}') = (\ptdead,\ptfree,\ptdead)$ and $(b_{i-3},b_{i-2},b_{i-1}) = (\ptalive,\ptdead,\ptalive)$, again without loss of generality.
  By minimality of $i$ we further get $(a_{i-3},a_{i-2},a_{i-1}) = (\ptdead,\ptfree,\ptdead)$.
  In summary, we have derived $(a_{i-3},\dots,a_{i+1}) = (\ptdead,\ptfree,\ptdead,\ptfree,\ptdead)$ and $(b_{i-3},\dots,b_{i+1}) = (\ptalive,\ptdead,\ptalive,\ptdead,\ptalive)$.
  This is a contradiction because at most one of these two forbidden subsequences in $a$ is replaced under $f$.
\end{proof}

\subsection{Spanning Trees}

Recall the definition of $\agraphst{\pts}$ from \cref{subsec:st}.
The following lemma is all that is left to complete the proof of \cref{thm:st}.

\begin{lemma}\label{lem:stbound}
  For any $\pts$, there exists a subgraph of $\agraphst{\pts}$ induced by a vertex subset $V \subseteq \quotient{\cmbsst}{\parteq}$ satisfying $|V| = O(\alpha^nn)$, where $\alpha \lessapprox 7.04313$, and such that $V$ contains all vertices that appear on at least one $\cgsource$-$\cgsink$ path in $\agraphst{\pts}$.
  Moreover, given a vertex of $\agraphst{\pts}$, we can decide in time $O(n)$ whether it belongs to $V$.
\end{lemma}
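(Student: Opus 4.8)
The plan is to let $V$ consist of all vertices of $\agraphst{\pts}$ whose associated $8$-coloring of $\pt_1,\dots,\pt_\ptsnum$ — the coloring part of the encoding from \cref{sec:construction}, ignoring the index of the marked point — avoids a fixed finite family $\mathcal{F}$ of \emph{forbidden patterns}. As in the proof of \cref{lem:cf:bound}, a forbidden pattern here is a prescribed coloring of a bounded number of consecutive points, allowing a run of one repeated color in the middle. Two claims then have to be established. First, every vertex of $\agraphst{\pts}$ that lies on some $\cgsource$-$\cgsink$ path avoids all patterns in $\mathcal{F}$; this shows that $V$ contains all such vertices, so the induced subgraph still contains every source-sink path, and it makes membership in $V$ decidable in time $O(\ptsnum)$ by scanning the coloring from left to right and accepting if and only if no pattern occurs. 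Second, the number of $8$-colorings of $\pt_1,\dots,\pt_\ptsnum$ avoiding $\mathcal{F}$ is $O(\alpha^\ptsnum)$ for some $\alpha \lessapprox 7.04313$; multiplying by the $\ptsnum$ choices for the marked point then yields $|V| = O(\alpha^\ptsnum \ptsnum)$ as required. The count itself is a transfer-matrix computation, or a recoding into a simpler-to-count pattern-avoiding language in the style of \cref{lem:cf:bound}.

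The family $\mathcal{F}$ has two kinds of members. The first kind is purely geometric and generalizes \cref{lem:cf:bound}: for three consecutive points $\pt_i,\pt_{i+1},\pt_{i+2}$, either $\pt_{i+1}$ lies below the line through $\pt_i$ and $\pt_{i+2}$ or above it, and a finite case analysis shows that in each case several assignments of ``is or is not covered from below by a segment'' to the three points — information read off from whether each point is $\ptdead$, $\ptalive$, or $\ptfree$ — cannot occur in any crossing-free combination, since they would force two crossing segments or a non-straight segment. Putting all such impossible triples into $\mathcal{F}$ already yields $\alpha<8$. The second kind is specific to $\cmbslst$ and is where the hypothesis that $\cgsink$ is reachable from the vertex enters. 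Three structural facts are used: a point that has turned $\ptdead$ is permanently locked, since no later unit may have a point of $\clow{\cmb}$ as endpoint; a finite face, once created below a segment, is permanently sealed, since later segments are appended above it; and at the sink no drains are exposed while every point except $\hat{\pt}$ has out-degree $1$. Hence a vertex from which $\cgsink$ is still reachable cannot display a bounded-width configuration that already witnesses either an out-degree-$0$ point other than $\hat{\pt}$ that can never acquire an outgoing edge, or an exposed drain that can never be absorbed into a face of the correct out-degree. The six refined $\ptalive$-colors (recording out-degree $0$ or $1$, and whether a drain is exposed and on which side) make exactly these doomed situations visible within a constant number of consecutive points, and the corresponding windows complete $\mathcal{F}$.

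The step I expect to be the main obstacle is pinning down this second family precisely: choosing which drain/out-degree windows to forbid so that, on one hand, every sink-reaching vertex provably avoids them — which requires a somewhat delicate analysis of how exposed drains get absorbed, and of how the out-degree of a point can still grow, when new segments are appended on the far right of a combination — and, on the other hand, those windows together with the geometric triples bring the spectral radius of the transfer matrix below $7.04313$. For the reachability side of the argument I would lean on \cref{lem:st:bijection} and the root-orientation picture underlying \cref{lem:entangledtrees}, which turn the statement ``the current combination extends to a crossing-free spanning tree'' into a statement about root-oriented forests, on the points and on the faces respectively.
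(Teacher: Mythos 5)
Your high-level strategy is the same as the paper's: define $V$ by excluding vertices whose $8$-colouring contains one of finitely many bounded-width forbidden windows (each being a fixed pair of colours separated by a run over a fixed subalphabet), check that every vertex on a $\cgsource$-$\cgsink$ path avoids them (which also gives the $O(n)$ membership test by a left-to-right scan), and bound the number of pattern-avoiding strings by a transfer-matrix/regular-language count. However, the entire substance of the lemma lies in identifying \emph{which} windows to forbid, and that is exactly the step you leave open (``the main obstacle is pinning down this second family precisely''). Moreover, the one family you do specify --- the left/right-turn triples from \cref{lem:cf:bound} --- is not what the paper uses here at all: the paper only remarks that adapting those arguments gives $c<8$, and the constant $7.04313$ is produced by a completely different, drain-based pattern set. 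Since the value of $\alpha$ is determined by the exact set of excluded windows, asserting that your (unspecified) second family ``brings the spectral radius below $7.04313$'' is not a proof.

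For the record, the paper's family consists of three windows. Two of them are geometrically impossible in any element of $\cmbslst$ and concern \emph{diverging} drains: $(\ptsix,\pteight,\dots,\pteight,\ptseven)$ (an out-degree-$1$ point exposing a drain to the left, then, after degree-$0$ points, an out-degree-$1$ point exposing a drain to the right) forces either a crossing or a finite face of out-degree $0$; and $(\ptthree,\ptone|\pttwo|\pteight,\dots,\ptone|\pttwo|\pteight,\ptfour)$ is the analogue for out-degree-$0$ points, where the two points cannot be joined by an edge and a crossing is unavoidable. The third window, $(\ptfour|\ptseven,\ptone|\pteight,\dots,\ptone|\pteight,\ptthree|\ptsix)$, consists of two exposed drains \emph{facing} each other with only degree-$0$ or lower-shadow points in between; it can occur in $\cmbslst$, but any later segment that consumes one of the two drains necessarily consumes both, creating a finite face of out-degree at least $2$, so the drains can never be absorbed and $\cgsink$ is unreachable --- this is the only kind of vertex actually discarded, and the invariance of the window under adding further segments is the reachability argument you would need to make. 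Your sketch correctly names the relevant invariants (locked $\ptdead$ points, sealed faces, the sink condition), but without these explicit windows neither the reachability claim nor the numerical value of $\alpha$ can be verified, so the proof as proposed is incomplete.
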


Note that since we are able to recognize vertices from the set $V$ efficiently, the smaller subgraph of $\agraphst{\pts}$ can also be constructed efficiently bottom-up, simply by discarding any encountered vertices that do not belong to $V$.

\begin{proof}
  We begin by specifying the eight colors $\ptone,\pttwo,\dots,\pteight$ that were introduced only informally in \cref{subsec:st}.
  Given a combination $\cmb$ in $\cmbslst$, these colors are assigned to the points in $\pts$ as follows.
  If $\pt \in \clow{\cmb}$ then $\pt$ has color $\ptone$, that is, $\ptone = \ptdead$ in the original color scheme.
  Otherwise, and if additionally $\pt \not\in \upts{\cmb}$, then $\pt$ has color $\pteight$, that is, $\pteight = \ptfree$.
  As follows, all remaining points are assigned one of the colors $\pttwo,\dots,\ptseven$, which thus correspond to the original $\ptalive$.
  Color $\pttwo$ means that $\pt$ has out-degree 0 and exposes no drain.
  Color $\ptthree$ means that $\pt$ has out-degree 0 and exposes a drain to the left.
  Color $\ptfour$ means that $\pt$ has out-degree 0 and exposes a drain to the right.
  Color $\ptfive$ means that $\pt$ has out-degree 1 and exposes no drain.
  Color $\ptsix$ means that $\pt$ has out-degree 1 and exposes a drain to the left.
  Color $\ptseven$ means that $\pt$ has out-degree 1 and exposes a drain to the right.
  
  Similar to the proof of \cref{lem:cf:bound}, let $s_\cmb$ be the string from $\{\ptone,\dots,\pteight\}^n$ that corresponds to a given combination $\cmb$ in $\cmbslst$.
  The present lemma is a consequence of the following three observations, which all impose restrictions on certain patterns in $s_\cmb$.\\
  
  \hspace*{-\parindent}%
  \begin{minipage}{0.8\textwidth}
    \begin{itemize}
    \item For two consecutive points $\pt_i$ and $\pt_{i+1}$, it cannot be that both $\pt_i$ and $\pt_{i+1}$ have out-degree 1 and, at the same time, $\pt_i$ exposes a drain to the left and $\pt_{i+1}$ exposes a drain to the right. 
    As illustrated on the right, if that were the case then there would be either a crossing or a finite face of out-degree 0, both in contradiction with the definition of $\cmbslst$.
    Hence, in $s_\cmb$ we will never observe the pattern $(\ptsix,\ptseven)$, that is, color $\ptsix$ directly followed by color $\ptseven$.
    Additionally, if $\pt_i$ and $\pt_{i+k}$ (taking the role of $\pt_{i+1}$) are separated by any number of points of degree 0, the same argument still applies.
    Hence, we can further rule out the pattern $(\ptsix,\pteight,\dots,\pteight,\ptseven)$.
    \end{itemize}
  \end{minipage}
  \begin{minipage}{0.2\textwidth}
    \centering
    \begin{tikzpicture}[xscale=1.25]
  \begin{scope}[yshift=45]
    \node[alive] (p1) at (0,0) {};
    \node[alive] (p2) at (1,0) {};
    \draw[dotted,dloriented] (p1) -- ++(0,-0.6);
    \draw[dotted,droriented] (p2) -- ++(0,-0.6);
    \draw[oriented] (p1) -- ++(0.6,-0.6);
    \draw[oriented] (p2) -- ++(-0.6,-0.6);
  \end{scope}
  \begin{scope}[yshift=0]
    \fill[fill=black!20!white] (0,0) -- (1,0) -- (0,-0.6) -- cycle;
    \node[alive] (p1) at (0,0) {};
    \node[alive] (p2) at (1,0) {};
    \draw[dotted,dloriented] (p1) -- ++(0,-0.6);
    \draw[dotted,droriented] (p2) -- ++(0,-0.6);
    \draw[oriented] (p1) -- (p2);
    \draw[oriented] (p2) -- ++(-1.1,-0.66);
  \end{scope}
  \begin{scope}[yshift=-45]
    \fill[fill=black!20!white] (0,0) -- (1,0) -- (1,-0.6) -- cycle;
    \node[alive] (p1) at (0,0) {};
    \node[alive] (p2) at (1,0) {};
    \node[free] at (0.333,0.3) {};
    \node[free] at (0.6667,0.3) {};
    \draw[dotted,dloriented] (p1) -- ++(0,-0.6);
    \draw[dotted,droriented] (p2) -- ++(0,-0.6);
    \draw[oriented] (p1) -- ++(+1.1,-0.66);
    \draw[oriented] (p2) -- (p1);
  \end{scope}
  \begin{scope}[yshift=-75]
    \node {};
  \end{scope}
\end{tikzpicture}
  \end{minipage}
  \vspace{\belowdisplayskip}

  \hspace*{-\parindent}%
  \begin{minipage}{0.8\textwidth}
    \begin{itemize}
    \item Similarly, for two consecutive points $\pt_i$ and $\pt_{i+1}$, it cannot be that both $\pt_i$ and $\pt_{i+1}$ have out-degree 0 and, at the same time, $\pt_i$ exposes a drain to the left and $\pt_{i+1}$ exposes a drain to the right.
    Since both points have out-degree 0 they cannot be connected by an edge, and thus we necessarily get a crossing, as illustrated on the right.
    Hence, the pattern $(\ptthree,\ptfour)$ cannot occur in $s_\cmb$.
    Using the same argument we can further rule out the pattern $(\ptthree,\pttwo,\dots,\pttwo,\ptfour)$, since in such a configuration there will always be two consecutive points $\pt_i$ and $\pt_{i+1}$ of out-degree 0 such that $\pt_i$ is the left endpoint of an edge and $\pt_{i+1}$ is the right endpoint of another edge.
    Similar to the first observation, any additional points with degree 0 do not interfere with this argument.
    The same can be said for any points that are in the lower shadow of $\cmb$.
    Hence, we can rule out the pattern $(\ptthree,\ptone|\pttwo|\pteight,\dots,\ptone|\pttwo|\pteight,\ptfour)$, where ``$|$'' indicates an arbitrary choice between several options.
    \end{itemize}
  \end{minipage}
  \begin{minipage}{0.2\textwidth}
    \centering
    \begin{tikzpicture}[xscale=1.25]
  \begin{scope}[yshift=45]
    \node[alive] (p1) at (0,0) {};
    \node[alive] (p2) at (1,0) {};
    \draw[dotted,dloriented] (p1) -- ++(0,-0.6);
    \draw[dotted,droriented] (p2) -- ++(0,-0.6);
    \draw[oriented] (0.6,-0.6) -- (p1);
    \draw[oriented] (0.4,-0.6) -- (p2);
  \end{scope}
  \begin{scope}[yshift=0]
    \node[alive] (p1) at (0,0) {};
    \node[alive] (p2) at (1,0) {};
    \node[alive] (p3) at (0.5,0) {};
    \draw[oriented] (0.3,-0.6) -- (p1);
    \draw[oriented] (0.7,-0.6) -- (p2);
    \draw[oriented] (0.2,-0.6) -- (p3);
    \draw[oriented] (0.8,-0.6) -- (p3);
    \draw[dotted,dloriented] (p1) -- ++(0,-0.6);
    \draw[dotted,droriented] (p2) -- ++(0,-0.6);
    \draw[dotted,droriented] (p3) -- ++(0,-0.6);
  \end{scope}
  \begin{scope}[yshift=-45]
    \node[alive] (p1) at (0,0) {};
    \node[alive] (p2) at (1,0) {};
    \node[alive] (p3) at (0.5,0) {};
    \draw[dotted,dloriented] (p1) -- ++(0,-0.6);
    \draw[dotted,droriented] (p2) -- ++(0,-0.6);
    \draw[dotted,droriented] (p3) -- ++(0,-0.6);
    \draw[oriented] (0.5,-0.6) -- (p1);
    \draw[oriented] (0.7,-0.6) -- (p2);
    \draw[oriented] (0.8,-0.6) -- (p3);
  \end{scope}
  \begin{scope}[yshift=-90]
    \node[alive] (p1) at (0,0) {};
    \node[alive] (p2) at (1,0) {};
    \node[free] at (0.4,0.2) {};
    \node[free] at (0.6,0.2) {};
    \draw[dotted,dloriented] (p1) -- ++(0,-0.6);
    \draw[dotted,droriented] (p2) -- ++(0,-0.6);
    \draw[oriented] (0.8,-0.6) node[dead] {} -- (p1);
    \draw[oriented] (0.2,-0.6) node[dead] {} -- (p2);
  \end{scope}
  \begin{scope}[yshift=-120]
    \node {};
  \end{scope}
\end{tikzpicture}
  \end{minipage}
  \vspace{\belowdisplayskip}
  
  Lastly, we describe a pattern which might actually occur in $s_\cmb$.
  However, for any such occurrence we will prove that the corresponding vertex $\partclass{\cmb}$ is not contained in any $\cgsource$-$\cgsink$ path in $\agraphst{\pts}$.
  Therefore, the set $V$ from the lemma may safely be defined as the subset of all vertices of $\agraphst{\pts}$ that do not contain this final pattern.\\
  
  \hspace*{-\parindent}%
  \begin{minipage}{0.8\textwidth}
    \begin{itemize}
    \item Let $\pt_i$ and $\pt_{i+1}$ be two consecutive points such that $\pt_i$ exposes a drain to the right and $\pt_{i+1}$ exposes a drain to the left.
    Then, clearly, the out-degree of the infinite face in $\cmb$ is different from 0 and thus there is no direct connection from $\partclass{\cmb}$ to $\cgsink$ in $\agraphst{\pts}$.
    Moreover, any new segment $\unit$ that consumes the drain exposed by $\pt_i$ (that is, the border below $\pt_i$ becomes an out-border of the new face directly below $\unit$) must also consume the drain exposed by $\pt_{i+1}$, as illustrated on the right.
    However, the face below $\unit$ having out-degree at least 2 contradicts the definition of $\cmbslst$.
    Therefore, the two drains exposed by $\pt_i$ and $\pt_{i+1}$ can never be consumed and the sink $\cgsink$ can thus never be reached.
    Hence, we can safely discard any vertices with the pattern $(\ptfour|\ptseven,\ptthree|\ptsix)$.
    Additionally, suppose that $\pt_i$ and $\pt_{i+k}$ (taking the role of $\pt_{i+1}$) are separated by any number of points of degree 0 or points in the lower shadow of $\cmb$.
    Then, no matter how a new segment $\unit$ is added, we again get a pair of drains facing each other and all points in between are either of degree 0 or in the lower shadow.
    Hence, we can further discard any vertices with the pattern $(\ptfour|\ptseven,\ptone|\pteight,\dots,\ptone|\pteight,\ptthree|\ptsix)$.
    \end{itemize}
  \end{minipage}
  \begin{minipage}{0.2\textwidth}
    \centering
    \begin{tikzpicture}[xscale=0.5]
  \begin{scope}[yshift=50]
    \node[alive] (p1) at (0,0) {};
    \node[alive] (p2) at (1,0) {};
    \draw[dotted,droriented] (p1) -- ++(0,-0.6);
    \draw[dotted,dloriented] (p2) -- ++(0,-0.6);
    \draw[oriented] (-1,0) -- (p1);
    \draw[oriented] (p2) -- ++(+1,0);
  \end{scope}
  \begin{scope}[yshift=0]
    \fill[fill=black!20!white] (0,0) -- (1,0) -- (1,-0.6) -- (0,-0.6) -- cycle;
    \node[alive] (p1) at (0,0) {};
    \node[alive] (p2) at (1,0) {};
    \draw[dotted,droriented] (p1) -- ++(0,-0.6);
    \draw[dotted,dloriented] (p2) -- ++(0,-0.6);
    \draw[oriented] (-1,0) -- (p1);
    \draw[oriented] (p2) -- ++(+1,0);
    \draw[oriented] (p1) -- node[above] {$\unit$} (p2);
  \end{scope}
  \begin{scope}[yshift=-50]
    \fill[fill=black!20!white] (0,0) -- (2,0.5) -- (2,0) -- (1,0) -- (1,-0.6) -- (0,-0.6) -- cycle;
    \node[alive] (p1) at (0,0) {};
    \node[dead] (p2) at (1,0) {};
    \draw[dotted,droriented] (p1) -- ++(0,-0.6);
    \draw[dotted,dloriented] (p2) -- ++(0,-0.6);
    \draw[oriented] (-1,0) -- (p1);
    \draw[oriented] (p2) -- ++(+1,0);
    \draw[oriented] (p1) -- node[above] {$\unit$} (2,0.5);
  \end{scope}
  \begin{scope}[yshift=-100]
    \node[alive] (p1) at (-0.3,0) {};
    \node[alive] (p2) at (1.3,0) {};
    \node[free] (p3) at (0.2333,0.25) {};
    \node[free] (p4) at (0.7666,0.25) {};
    \node[dead] (p5) at (0.2333,-0.7) {};
    \node[dead] (p6) at (0.7666,-0.7) {};
    \draw[dotted,droriented] (p1) -- ++(0,-0.5);
    \draw[dotted,dloriented] (p2) -- ++(0,-0.5);
    \draw[oriented] (-1,0) -- (p1);
    \draw[oriented] (p2) -- (2,0);
    \draw[oriented] (-1,-0.5) -- (2,-0.5);
  \end{scope}
  \begin{scope}[yshift=-150]
    \node[alive] (p1) at (-0.3,0) {};
    \node[dead] (p2) at (1.3,0) {};
    \node[free] (p3) at (0.2333,0.25) {};
    \node[alive] (p4) at (0.7666,0.25) {};
    \node[dead] (p5) at (0.2333,-0.7) {};
    \node[dead] (p6) at (0.7666,-0.7) {};
    \draw[dotted,droriented] (p1) -- ++(0,-0.5);
    \draw[dotted,dloriented] (p2) -- ++(0,-0.5);
    \draw[dotted,dloriented] (p4) -- ++(0,-0.75);
    \draw[oriented] (-1,0) -- (p1);
    \draw[oriented] (p2) -- (2,0);
    \draw[oriented] (p4) -- node[above] {$\unit$} (2,0.25);
    \draw[oriented] (-1,-0.5) -- (2,-0.5);
  \end{scope}
  \begin{scope}[yshift=-175]
    \node {};
  \end{scope}
\end{tikzpicture}
  \end{minipage}
  \vspace{\belowdisplayskip}
  
  Let $A$ be the subset of $\{\ptone,\dots,\pteight\}^n$ containing only strings without subsequences belonging to the  three patterns from the above observations.
  Then, it can be shown that $|A| = \Theta(\alpha^n)$ using standard techniques, and the lemma follows.
\end{proof}

\subsection{Spanning Cycles}

Finally, recall the definition of $\agraphsc{\pts}$ from \cref{subsec:sc}.
The following lemma concludes the proof of \cref{thm:sc}.

\begin{lemma}
  For any $\pts$, there exists a subgraph of $\agraphsc{\pts}$ induced by a vertex subset $V \subseteq \quotient{\cmbslsc}{\parteq}$ satisfying $|V| = O(\alpha^nn)$, where $\alpha \lessapprox 5.61804$, and such that $V$ contains all vertices that appear on at least one $\cgsource$-$\cgsink$ path in $\agraphsc{\pts}$.
  Moreover, given a vertex of $\agraphsc{\pts}$, we can decide in time $O(n)$ whether it belongs to $V$.
\end{lemma}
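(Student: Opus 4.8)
The plan is to follow the proof of \cref{lem:stbound} step by step. First I would make precise the six colors that were only described informally in \cref{subsec:sc}. Given a combination $\cmb\in\cmbslsc$, the color of a point $\pt$ records its degree together with exactly the drain data the framework needs: a point of degree $0$ keeps the single color $\ptfree$; a point of degree $1$ (an $\ptalive$ point) receives one of three colors according to whether it exposes no drain, a drain to the left, or a drain to the right; and a point of degree $2$ (a $\ptdead$ point) receives one of two colors according to whether it exposes a drain or not, the direction being irrelevant since such a point is never the endpoint of a new segment. This gives $1+3+2=6$ colors, so each vertex of $\agraphsc{\pts}$ is encoded by a string $s_\cmb\in\{1,\dots,6\}^n$ together with an index for the marking, and the trivial bound is $O(6^n n)$.

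Next I would isolate the single pattern that must be discarded. In contrast to the spanning-trees situation, the discussion preceding \cref{lem:sc:coherent} already guarantees that the three base colors rule out crossings and that the only remaining obstruction is the creation of finite faces of the wrong out-degree; hence no geometrically impossible pattern is needed here, and one \emph{unreachable} pattern suffices. That pattern is: a degree-$1$ point exposing a drain to the right that occurs to the left of a degree-$1$ point exposing a drain to the left, with only \emph{passive} points — points of degree $0$ or degree $2$ — in between. For any combination $\cmb$ whose string $s_\cmb$ exhibits this pattern, the out-degree of the infinite face is positive, so $\partclass{\cmb}$ is not joined to $\cgsink$; and by exactly the argument in the proof of \cref{lem:stbound}, the two facing drains can never both be consumed, since any new segment that turns one of them into an out-border of the face created below it is forced to do the same for the other, producing a finite face of out-degree at least $2$ and contradicting the definition of $\cmbslsc$. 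Therefore no vertex exhibiting the pattern lies on any $\cgsource$-$\cgsink$ path, and I would define $V$ to be the set of vertices of $\agraphsc{\pts}$ whose string avoids it. Whether a string avoids the pattern is decided by one left-to-right scan maintaining a single bit of state — ``have I seen a drain-to-the-right point with only passive points since?'' — so membership in $V$ is testable in time $O(n)$.

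To bound $|V|$ it remains to count the strings in $\{1,\dots,6\}^n$ that avoid the pattern; these form a regular language recognized by a two-state automaton with states ``free'' and ``waiting''. From ``free'', a drain-to-the-right point moves to ``waiting'' while the other five colors stay in ``free''; from ``waiting'', the three passive colors and a further drain-to-the-right point stay in ``waiting'', a degree-$1$ point exposing no drain returns to ``free'', and a drain-to-the-left point is rejected. The associated transfer matrix is $\bigl(\begin{smallmatrix}5 & 1\\ 1 & 4\end{smallmatrix}\bigr)$, with characteristic polynomial $\lambda^2-9\lambda+19$ and dominant eigenvalue $\alpha=\tfrac{9+\sqrt5}{2}\lessapprox 5.61804$. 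Hence the number of admissible strings is $\Theta(\alpha^n)$, and multiplying by the factor $n$ for the marking yields $|V|=O(\alpha^n n)$, as claimed.

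The main obstacle, exactly as in the spanning-trees case, is the geometric bookkeeping underlying the unreachable pattern: one must check rigorously that an intervening run of degree-$0$ and degree-$2$ points is genuinely transparent to the ``the two drains cannot both be consumed'' argument, that a degree-$1$ drain-free point really does break it, and that all drain orientations are accounted for. It is precisely these choices that fix the automaton above, and hence the exact constant. Once the pattern and its filler/reset structure are settled, the transfer-matrix computation giving $\tfrac{9+\sqrt5}{2}$ is routine.
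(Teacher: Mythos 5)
Your proposal is correct and follows essentially the same route as the paper: the same six-color refinement (one color for degree $0$, three for degree $1$, two for degree $2$ with drain direction dropped), the same single unreachable pattern $(\ptfive,\ptone|\pttwo|\ptsix,\dots,\ptone|\pttwo|\ptsix,\ptfour)$ justified by the same ``two facing drains can never both be consumed'' argument, and the same $O(n)$ one-pass membership test. The only difference is that you make explicit the transfer-matrix count that the paper dismisses as standard, and your computation correctly yields the dominant eigenvalue $\tfrac{9+\sqrt{5}}{2}\lessapprox 5.61804$.
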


\begin{proof}
  We again start by giving specifications for the colors $\ptone,\pttwo,\dots,\ptsix$ that are assigned to the points in $\pts$ for any given combination $\cmb$ in $\cmbslsc$.
  Color $\ptone$ means that the corresponding point $\pt$ has degree 2 and exposes no drain.
  Color $\pttwo$ means that $\pt$ has degree 2 and exposes a drain.
  Color $\ptthree$ means that $\pt$ has degree 1 and exposes no drain.
  Color $\ptfour$ means that $\pt$ has degree 1 and exposes a drain to the left.
  Color $\ptfive$ means that $\pt$ has degree 1 and exposes a drain to the right.
  Color $\ptsix$ means that $\pt$ has degree 0.
  
  The following observation is similar to the third observation in the proof of \cref{lem:stbound}.
  Let $\cmb$ be a combination in $\cmbslsc$ and let $\pt_i$ and $\pt_{i+1}$ be two consecutive points.
  If $\pt_i$ exposes a drain to the right and $\pt_{i+1}$ exposes a drain to the left in $\cmb$, then $\cmb$ cannot be augmented in such a way that the infinite face has out-degree $0$ without creating finite faces with out-degree larger than $1$.
  In other words, the vertex $\partclass{\cmb}$ does not appear on any $\cgsource$-$\cgsink$ path in $\agraphsc{\pts}$ and may safely be discarded.
  The same holds for two not necessarily consecutive points $\pt_i$ and $\pt_{i+k}$ such that again $\pt_i$ exposes a drain to the right and $\pt_{i+k}$ exposes a drain to the left, and such that all points in between $\pt_i$ and $\pt_{i+k}$ have either degree 0 or 2.
  No matter how a new segment $\unit$ is added to $\cmb$, we will again end up with a combination with two exposed drains facing each other and only points of degree 0 or 2 in between.
  
  The subset $V$ of vertices of $\agraphsc{\pts}$ can thus be defined as follows.
  We simply exclude all vertices which contain the pattern $(\ptfive,\ptone|\pttwo|\ptsix,\dots,\ptone|\pttwo|\ptsix,\ptfour)$.
  Let now $A$ be the subset of $\{\ptone,\dots,\ptsix\}^n$ containing only strings without subsequences belonging to the above pattern.
  Then, $|A| = \Theta(\alpha^n)$, and the lemma follows.
\end{proof}

\section{Acknowledgements}

Most of the results presented in this paper were originally obtained in the author's master's thesis \cite{W13} in a, however, substantially less concise format.
The author is deeply grateful and indepted to his then supervisor and now PhD advisor Emo Welzl.
Special thanks go to Raimund Seidel for presenting his beautiful algorithm for counting triangulations at ETH Z\"urich in fall 2012, which undoubtedly marked the beginning of the success of that thesis.

\bibliographystyle{acm}
\bibliography{refs}

\begin{thebibliography}{10}

\bibitem{A99}
{\sc Aichholzer, O.}
\newblock The path of a triangulation.
\newblock In {\em Proc. 15th Symp. on Comp. Geometry\/} (1999), pp.~14--23.

\bibitem{AAHV07}
{\sc Aichholzer, O., Aurenhammer, F., Huemer, C., and Vogtenhuber, B.}
\newblock Gray code enumeration of plane straight-line graphs.
\newblock {\em Graphs and Combinatorics 23}, 5 (2007), 467--479.

\bibitem{AHHHKV07}
{\sc Aichholzer, O., Hackl, T., Huemer, C., Hurtado, F., Krasser, H., and
  Vogtenhuber, B.}
\newblock On the number of plane geometric graphs.
\newblock {\em Graphs and Combinatorics 23}, 1 (2007), 67--84.

\bibitem{ABCR12}
{\sc Alvarez, V., Bringmann, K., Curticapean, R., and Ray, S.}
\newblock Counting crossing-free structures.
\newblock In {\em Proc. 28th Symp. on Comp. Geometry\/} (2012), pp.~61--68.

\bibitem{ABRS15}
{\sc Alvarez, V., Bringmann, K., Ray, S., and Seidel, R.}
\newblock Counting triangulations and other crossing-free structures
  approximately.
\newblock {\em Computational Geometry 48}, 5 (2015), 386 -- 397.
\newblock Special Issue on the 25th Canadian Conference on Computational
  Geometry (CCCG).

\bibitem{SA13}
{\sc Alvarez, V., and Seidel, R.}
\newblock A simple aggregative algorithm for counting triangulations of planar
  point sets and related problems.
\newblock In {\em Proc. 29th Symp. on Comp. Geometry\/} (2013), pp.~1--8.

\bibitem{AF96}
{\sc Avis, D., and Fukuda, K.}
\newblock Reverse search for enumeration.
\newblock {\em Discrete Applied Math. 65\/} (1996), 21--46.

\bibitem{B02}
{\sc Bespamyatnikh, S.}
\newblock An efficient algorithm for enumeration of triangulations.
\newblock {\em Comput. Geom. 23}, 3 (2002), 271--279.

\bibitem{DSST10}
{\sc Dumitrescu, A., Schulz, A., Sheffer, A., and T{\'o}th, C.~D.}
\newblock Bounds on the maximum multiplicity of some common geometric graphs.
\newblock In {\em Proc. 28th Symp. on Theoretical Aspects of Comp. Sci.\/}
  (2011), pp.~637--648.

\bibitem{FN99}
{\sc Flajolet, P., and Noy, M.}
\newblock Analytic combinatorics of non-crossing configurations.
\newblock {\em Discrete Math. 204\/} (1999), 203--229.

\bibitem{GNT00}
{\sc Garc\'{i}a, A., Noy, M., and Tejel, J.}
\newblock Lower bounds on the number of crossing-free subgraphs of {$K_n$}.
\newblock {\em Comput. Geom. Theory Appl. 16}, 4 (2000), 211--221.

\bibitem{HSSTW11}
{\sc Hoffmann, M., Sharir, M., Sheffer, A., T\'{o}th, C.~D., and Welzl, E.}
\newblock Counting plane graphs: Flippability and its applications.
\newblock In {\em Proc. 12th Symp. on Algs. and Data structs.\/} (2011),
  pp.~524--535.

\bibitem{HM15}
{\sc Huemer, C., and de~Mier, A.}
\newblock Lower bounds on the maximum number of non-crossing acyclic graphs.
\newblock {\em European Journal of Combinatorics 48\/} (2015), 48 -- 62.
\newblock Selected Papers of EuroComb’13.

\bibitem{KLS15}
{\sc Karpinski, M., Lingas, A., and Sledneu, D.}
\newblock A {QPTAS} for the base of the number of crossing-free structures on a
  planar point set.
\newblock In {\em ICALP'15\/} (2015), pp.~785--796.

\bibitem{KT08}
{\sc Katoh, N., and Tanigawa, S.}
\newblock Fast enumeration algorithms for non-crossing geometric graphs.
\newblock In {\em Proc. 24th Symp. on Comp. Geometry\/} (2008), pp.~328--337.

\bibitem{MM16}
{\sc Marx, D., and Miltzow, T.}
\newblock Peeling and nibbling the cactus: Subexponential-time algorithms for
  counting triangulations and related problems.
\newblock {\em CoRR abs/1603.07340\/} (2016).

\bibitem{RW11}
{\sc Razen, A., and Welzl, E.}
\newblock Counting plane graphs with exponential speed-up.
\newblock In {\em Rainbow of computer science}. Springer, 2011, pp.~36--46.

\bibitem{SS09}
{\sc Sharir, M., and Sheffer, A.}
\newblock Counting triangulations of planar point sets.
\newblock {\em Electr. J. Comb. 18}, 1 (2011).

\bibitem{SS12}
{\sc Sharir, M., and Sheffer, A.}
\newblock Counting plane graphs: Cross-graph charging schemes.
\newblock In {\em Proc. 20th Symp. on Graph Drawing\/} (2012), pp.~19--30.

\bibitem{SSW10}
{\sc Sharir, M., Sheffer, A., and Welzl, E.}
\newblock On degrees in random triangulations of point sets.
\newblock In {\em Proc. 26th Symp. on Comp. Geometry\/} (2010), pp.~297--306.

\bibitem{SSW11}
{\sc Sharir, M., Sheffer, A., and Welzl, E.}
\newblock Counting plane graphs: Perfect matchings, spanning cycles, and
  kasteleyn's technique.
\newblock In {\em Proc. 28th Symp. on Comp. Geometry\/} (2012), pp.~189--198.

\bibitem{SW06p}
{\sc Sharir, M., and Welzl, E.}
\newblock On the number of crossing-free matchings, cycles, and partitions.
\newblock {\em SIAM J. Comput. 36}, 3 (2006), 695--720.

\bibitem{ShWeb}
{\sc Sheffer, A.}
\newblock Numbers of plane graphs.
\newblock \url{http://www.cs.tau.ac.il/~sheffera/counting/PlaneGraphs.html}.
\newblock Accessed: April 18th 2016.

\bibitem{T85}
{\sc Toussaint, G.~T.}
\newblock Movable separability of sets.
\newblock In {\em Computational Geometry\/} (1985), North-Holland,
  pp.~335--375.

\bibitem{W13}
{\sc Wettstein, M.}
\newblock Algorithms for counting crossing-free configurations.
\newblock Master's thesis, ETH Z{\"u}rich, April 2013.

\bibitem{W14}
{\sc Wettstein, M.}
\newblock Counting and enumerating crossing-free geometric graphs.
\newblock In {\em Proc. 30th Symp. on Comp. Geometry\/} (2014), pp.~1--10.

\end{thebibliography}



\end{document}